\documentclass[usletter,11pt]{article}

\pdfoutput=1

\usepackage{lipsum}
\usepackage{amsmath,amssymb, amstext, amsfonts,bm,dsfont,amsthm}
\usepackage{graphicx}
\usepackage{epstopdf}
\usepackage{algorithmic}
\usepackage{hyperref,cleveref}
\usepackage{thmtools,thm-restate}
\usepackage{enumitem}
\usepackage{subfiles}

\DeclareGraphicsExtensions{.eps,.pdf,.png,.jpg}

\usepackage[margin=1in]{geometry}

\newtheorem{siamremark}{Remark}
\newtheorem{observation}{Observation}

\newtheorem{proposition}{Proposition}

\usepackage[font=small,labelfont=bf]{caption}

\usepackage[backend=bibtex,style=ieee]{biblatex}
\addbibresource{references.bib}

\newcommand{\TheTitle}{A Spatial Filtering Approach to Biological Patterning}

\title{{\TheTitle}\thanks{This work was based in part off of \cite{perkins_arcak_2018acc}.
		This work was funded by the United States Air Force Office of Scientific Research award FA9550-14-1-0089.}}

\author{
	Melinda Liu Perkins \thanks{Department of Electrical Engineering, University of California, Berkeley, CA, USA
		(mindylp@eecs.berkeley.edu)}
	\and
	Murat Arcak \thanks{Department of Electrical Engineering, University of California, Berkeley, CA, USA (arcak@eecs.berkeley.edu).}
}

\usepackage{amsopn}

\graphicspath{{.}}

\date{}

\begin{document}
	
	\maketitle
	
	\begin{abstract}
		Interactions between neighboring cells are essential for generating or refining patterns in a number of biological systems.  We propose a discrete filtering approach to predict how networks of cells modulate spatially varying input signals to produce more complicated or precise output signals.  The interconnections between cells determine the set of spatial modes that are amplified or suppressed based on the coupling and internal dynamics of each cell, analogously to the way a traditional digital filter modifies the frequency components of a discrete signal.  We apply the framework to two systems in developmental biology: the Notch-Delta interaction that shapes \textit{Drosophila} wing veins and the Sox9/Bmp/Wnt network responsible for digit formation in vertebrate limbs.  The latter case study demonstrates that Turing-like patterns may occur even in the absence of instabilities.  Results also indicate that developmental biological systems may be inherently robust to both correlated and uncorrelated noise sources.  Our work shows that a spatial frequency-based interpretation simplifies the process of predicting patterning in living organisms when both environmental influences and intercellular interactions are present.
	\end{abstract}
	
	\section{INTRODUCTION}
	\begin{figure}[!hbpt]
		\centering
		\includegraphics[width=\columnwidth]{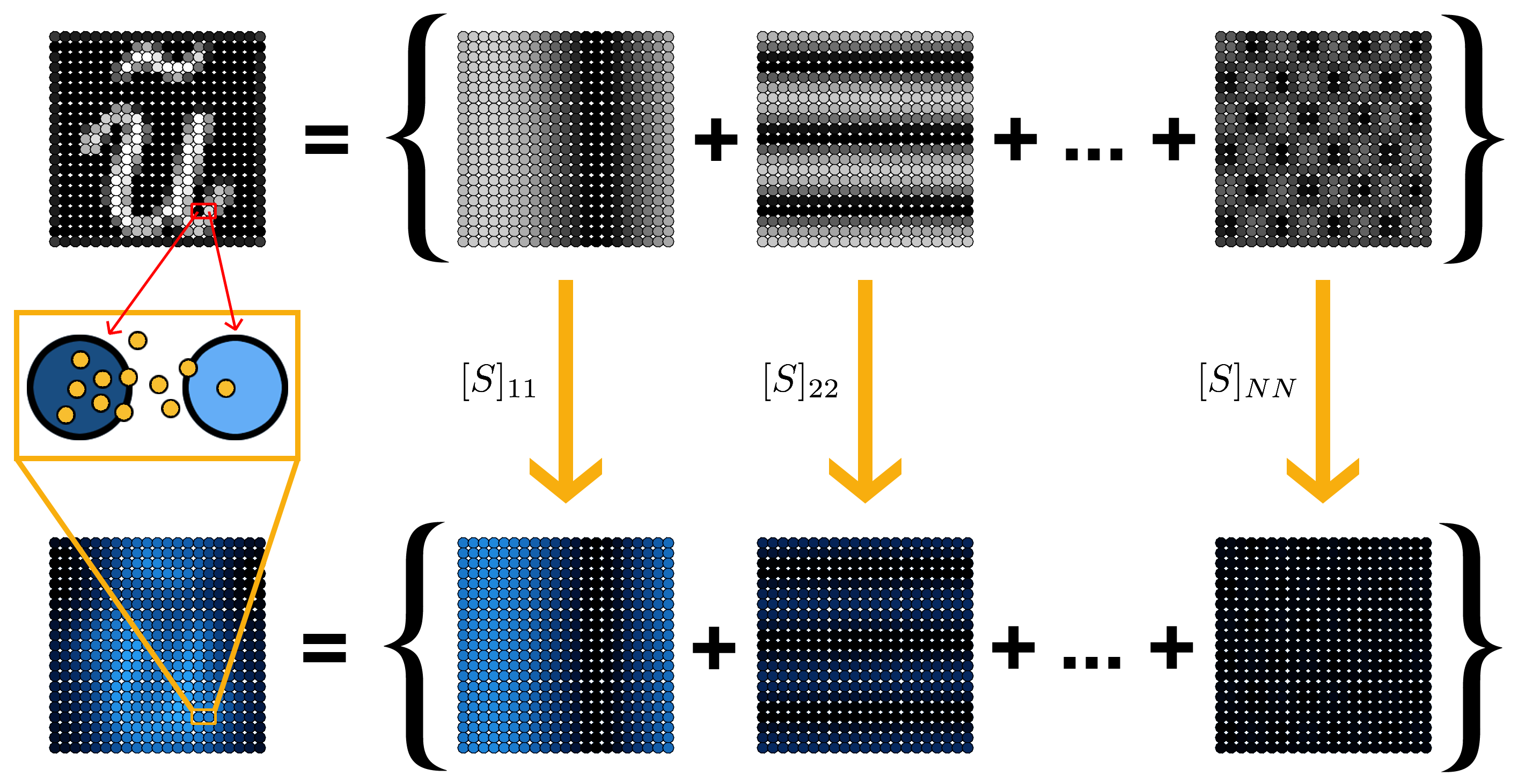}
		\caption{\textbf{Interacting cells filter input to readout by changing the relative weights of spatial modes.}  A network of interacting cells responds to a spatially varying input, or prepattern (gray), which can be represented as a weighted sum of spatial modes.  Each cell produces a readout (blue) in response to the input it receives at a particular point in space as well as to the outputs from other cells (gold dots) with which it is ``interconnected'' (e.g., by diffusible molecules).  The readout across all cells forms a spatially varying pattern that can be approximated as a sum of the same spatial modes as the input but with different weights.  Specifically, the weight of a given component in the readout is the product of the weight of that component in the input with a ``filter coefficient'' determined by both the internal dynamics of, and the interactions between, the individual cells.  These filter coefficients are unique to each patterning component independent of its weight in the input.  The process of modifying weights from input to readout in the manner described above is referred to as ``filtering'' (gold arrows).  In this example, both the input and the signals from neighboring cells promote gene expression such that spatial modes with small wavelengths are attenuated relative to components with large wavelengths.} \label{fig:filt_schematic}
	\end{figure}
	Biological organisms rely on spatial variation in cell activity to coordinate diverse phenomena including contrast enhancement in the visual system \cite{hartline_ratliff_1957} and body planning in developing embryos \cite{jaeger_2011}.  Interactions between neighboring cells play a crucial role in generating spatial patterns spontaneously from stochastic initial conditions or by refining simple inputs, such as chemical concentration gradients, into complex outputs, such as stripes in gene expression \cite{green_sharpe_2015}, \cite{greenwald_rubin_1992}.  Mathematical theory in developmental biology has emphasized spontaneous pattern formation through the reaction-diffusion (Turing) mechanism \cite{turing_1952} as well as contact- or diffusion-mediated lateral inhibition \cite{collier_et_al_1996}, \cite{jaeger_2011}, \cite{sprinzak_et_al_2010}.  In practice, however, the conditions necessary for spontaneous pattern formation may be prohibitively difficult to satisfy.
	
	Prepattern processing---also known as Wolpert's theory of positional information \cite{wolpert_1969}---is an attractive and flexible alternative to spontaneous patterning, but mathematical analysis of prepattern processing has been largely limited to numerical simulations (e.g, \cite{sprinzak_et_al_2010}).  Prepatterns may arise directly from environmental influences that differ by cell or from consistent, preinduced parameter variation across space.
	
	We propose a discrete filtering approach to analyze how networks of interacting cells respond to prepatterns.  The framework elucidates which components of spatial structure are amplified and which ones attenuated by the system to produce an output from any given input.  The insights gained from this perspective challenge the conventional notion that instability is necessary for complex patterning; for example, our approach reveals that Turing-like stripes can emerge from a stable system lacking diffusion-driven instability, and furthermore that external noise reinforces rather than combats this behavior (Section \ref{sec:raspopovic}).
	
	In Section \ref{sec:main} we present the setup for the framework.  We combine the internal dynamics of cell behavior with interaction between cells by modeling each cell as an input-output module coupled to other modules.  We examine the steady-state gains for constant-in-time, spatially varying inputs (prepatterns) and show that the system behaves as a discrete spatial filter, where the interconnectivity between cells determines the spatial modes, while the coupling and input-output dynamics dictate how each mode is scaled to generate a readout pattern.  We also examine the system response to temporal and spatially varying noise inputs, measured with the $\mathcal{H}_2$ norm, to determine which spatial modes are sensitive to stochastic influence.  Lastly, we show how to apply the approach by considering a simple model of gene expression that exemplifies two of the most common classes of filter behaviors---highpass or lowpass---depending on the choice of parameters.
	
	In the remaining three sections we demonstrate the utility of the filtering perspective by examining two biological case studies: the Notch-Delta system in developing fruit fly wings (Section \ref{sec:notch-delta}) and the Sox9/Wnt/Bmp network in vertebrate digit formation (Sections \ref{sec:raspopovic} and \ref{sec:onimaru}).  We conclude with a brief summary and areas for future research.

	\section{THE SPATIAL FILTERING APPROACH} \label{sec:main}
	The main contribution of this paper is a filtering perspective for analyzing prepattern processing in developmental biological systems.  A central component of our approach is spatial mode decomposition, a common tool in distributed systems analysis (e.g., \cite{bamieh_et_al_2002}) that has previously been applied to detect instabilities in cellular networks lacking external inputs \cite{othmer_scriven_1971}.  In this section we introduce generalized notation followed by a derivation of the filter coefficients and the noise amplification factors that we will use throughout the remainder of the paper.
	
	\subsection{Notational Conventions}
	We use the following notational conventions (see also Supplementary Figure \ref{fig:supp_sprinzak_labels}):
	\begin{itemize}
		\item Cells are indexed by $i$ in vector form and spatial modes are indexed by $k$ in vector form or $(m,n)$ in an array, unless noted otherwise.
		\item Inputs except white noise in the context of the $\mathcal{H}_2$ norm are assumed constant in time.
		\item Vectors containing strictly constant-in-space entries are designated with an underline.  The entries corresponding to any fixed point in space are additionally labeled with an overbar, e.g., $\underline{u} = \underline{\bar{u}}\mathds{1}_N$ where $\underline{\bar{u}} \in \mathbb{R}$ and $\mathds{1}_N$ is the length $N$ vector of all ones.
		\item Steady-state values for time-dependent variables are designated with superscript asterisks.  Constant-in-space steady-state (i.e., homogeneous) solutions to nonlinear systems are designated with both an asterisk and an underline, e.g., $\underline{y}^* = \underline{\bar{y}}^*\mathds{1}_N$.
		\item ``Actual'' values in the standard basis are unadorned.  Perturbations from constant-in-space values are designated with a tilde; time-dependent perturbations are understood to be linear approximations of ``actual'' nonlinear solutions, e.g., $\tilde{x}_i(t) \approx x_i(t) - \underline{\bar{x}}^*$.  Perturbed variables in the basis $T$ are designated with a hat, e.g., $\hat{x}^* = T^{-1}\tilde{x}^*$.
	\end{itemize}

	\begin{figure}
		\centering
		\includegraphics[width=\columnwidth]{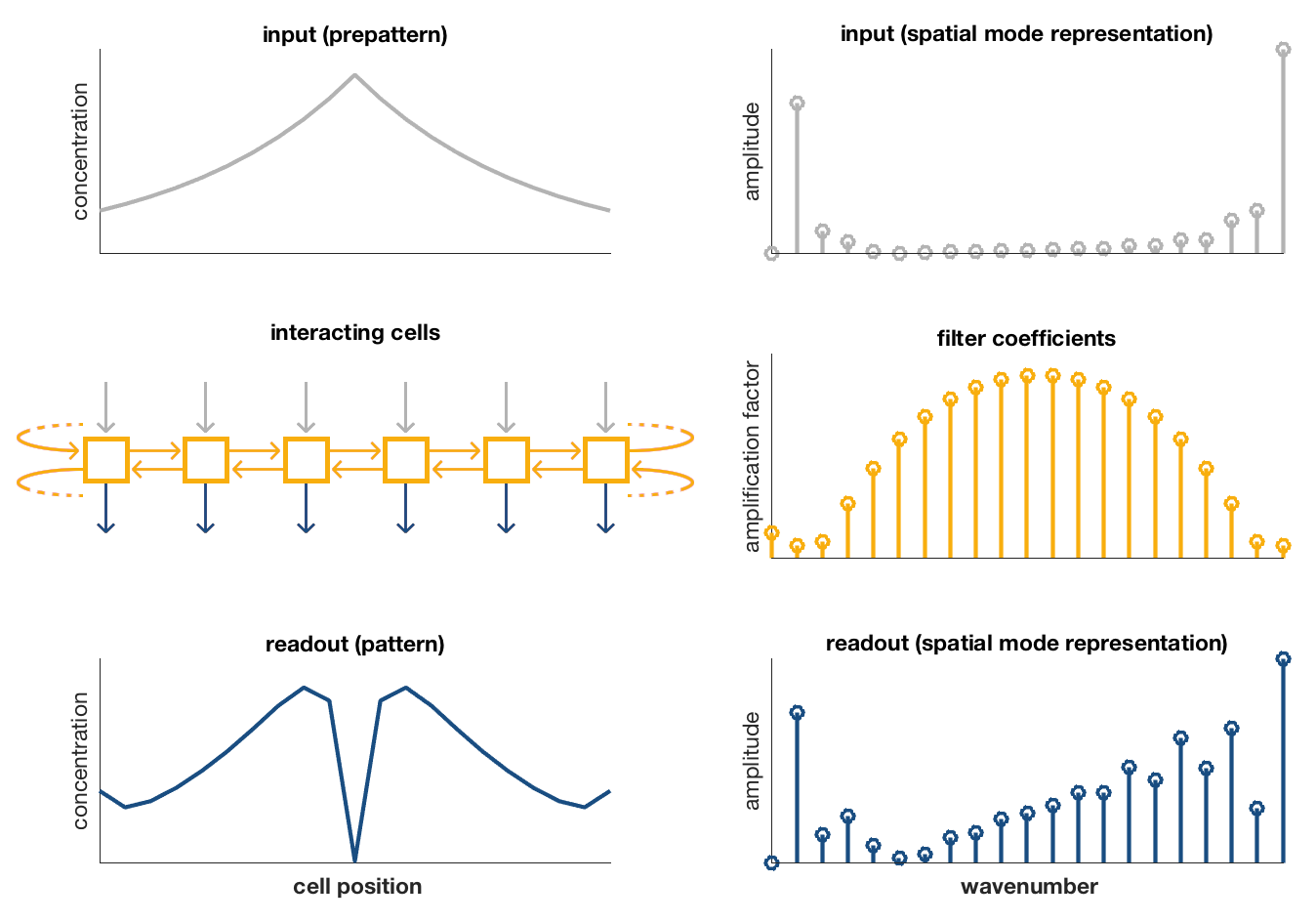}
		\caption{\textbf{Weights of spatial modes in a prepattern are multiplied by filter coefficients, determined by the internal dynamics and interconnectivity among cells, to produce patterns.}  Each cell acts as an input/output module.  The collective activity of cells produces the filtering behavior.  Here, a single line of cells with periodic boundary conditions communicates through contact-based lateral inhibition, resulting in a highpass filter (see Section \ref{sec:notch-delta}; note that wavenumber increases toward the center of the axis).  The spatial modes differ from the standard Fourier basis (Supplementary Observation \ref{obs:DFT_modes_real}), hence the asymmetry in the spatial mode representation of the input and readout (Supplementary Figure \ref{fig:supp_sprinzak_labels}).} \label{fig:math_filt_schematic}
	\end{figure}
	
	\subsection{System Dynamics and Filter Coefficients}
	We consider a generalized system of $N$ identical cells with the state variables of the $i$th cell at time $t$ given by $x_i(t) \in \mathbb{R}^{n}$, readout $y_i(t) \in \mathbb{R}$, and constant-in-time input $u_i \in \mathbb{R}$, which may represent an environmental stimulus or intrinsic parameter variation.  Coupling occurs via $v_i(t) \in \mathbb{R}^{q}$ and output $w_i(t) \in \mathbb{R}^{q}$ where $q \leq n$.  Let the vectors for the full system be the vertical concatenation $x(t)^T := [x_0(t)^T x_1(t)^T ... x_{N-1}(t)^T]$ and similarly for $u$, $y(t)$, $w(t)$, and $v(t)$.  The dynamics of the $i$th cell and the full linear coupling between the $N$ cells are given by
	\begin{equation}
	\begin{cases}
	\dot{x_i}(t) = f(x_i(t),v_i(t),u_i) \\
	w_i(t) = g(x_i(t)) \\
	y_i(t) = h(x_i(t)) \\
	v(t) = \left(M \otimes I_q\right)w(t) \
	\end{cases} \label{eq:sys_percell_eqs} \
	\end{equation}
	where $\otimes$ is the Kronecker product, $I_q$ is the $q \times q$ identity matrix, and $M \in \mathbb{R}^{N \times N}$.
	
	The system \eqref{eq:sys_percell_eqs} accommodates a wide range of specific deterministic models.  Intercellular processes such as gene expression and protein decay are encapsulated by appropriate definition of the evolution function $f$ for chemical concentrations $x_i$, including the effect of environmental stimuli or parameter values $u_i$ as well as signals from neighbors $v_i$.  The output $w_i$ is the subset of elements in $x_i$ that transmit signals to neighbors, with the method of transmission (e.g., diffusion, cell-to-cell contact) and the neighboring cells specified by the interconnection matrix $M$.  The readout $y_i$ isolates a quantity of interest to the user, which may be experimentally measurable (e.g., fluorescence) or simply relevant to a particular model (see examples in Sections \ref{sec:notch-delta}, \ref{sec:raspopovic}, and \ref{sec:onimaru}).
	
	The vectors indexed by $i$ describe patterns by the concentration of chemicals within individual cells at individual points in space.  A full pattern is reconstructed from $N$ elements, each of which represents the concentration of a chemical in a single cell.  However, patterns can also be thought of as combinations of spatially varying components that span multiple cells, e.g., stripes of varying thickness (frequency).  When weighted and summed, these spatial modes can represent arbitrary patterns of interest.  We use the term ``filtering'' to refer to the process by which a network of interacting cells alters the weighting of the spatial modes of the input, thereby producing a readout that is built from the same components as, but differs in appearance from, the input.  A key approximation to facilitate the analysis is that coupling between spatial modes is negligible, such that the readout can be expressed as a linear sum of the same set of spatial modes used to represent the input.  In analogy to traditional signal processing, the network of cells plays the role of a linear time-invariant system (filter) that modifies the frequency components of a (spatially) varying signal.  The following proposition formalizes this concept mathematically.
	
	\begin{proposition} \label{prop:filter_coeffs}
		If the system described by \eqref{eq:sys_percell_eqs} satisfies
		\indent\begin{enumerate}
			\item $M\mathds{1}_N = \mu\mathds{1}_N$ and $M$ is diagonalized by $T$ ($M = T\Lambda T^{-1}$),
			\item given $\underline{\bar{u}} \in \mathbb{R}$, $\exists~\underline{\bar{x}}^{*} \in \mathbb{R}^n$ such that $f\left(\underline{\bar{x}}^{*},\mu g\left(\underline{\bar{x}}^{*}\right),\underline{\bar{u}}\right) = 0$ and $\underline{x}^* := \mathds{1}_N \otimes \underline{\bar{x}}^{*}$, $\underline{u} := \underline{\bar{u}}\mathds{1}_N$,
			\item the homogeneous steady state $(\underline{x}^*,\underline{u})$ is stable,
		\end{enumerate}
		then the system may be linearized about $(\underline{x}^*,\underline{u})$ with linearization matrices
		$$A := \frac{\partial f}{\partial x_i} \bigg|_{(\underline{\bar{x}}^{*},\underline{\bar{v}}^*,\underline{\bar{u}})}, ~
		B_v := \frac{\partial f}{\partial v_i} \bigg|_{(\underline{\bar{x}}^{*},\underline{\bar{v}}^*,\underline{\bar{u}})}, ~
		B_u := \frac{\partial f}{\partial u_i} \bigg|_{(\underline{\bar{x}}^{*},\underline{\bar{v}}^*,\underline{\bar{u}})}, ~
		C := \frac{d h}{d y_i}\bigg|_{\underline{\bar{x}}^{*}},
		G := \frac{d g}{d x_i}\bigg|_{\underline{\bar{x}}^{*}}.$$
		A constant-in-time, varying-in-space input $u$ to \eqref{eq:sys_percell_eqs} equates to a perturbing input $$\hat{u} = T^{-1}(u - \underline{u})$$ to the linearized system in the coordinate system $T$.  Then the steady-state perturbed readout $\tilde{y}^*$ in the basis $T$ is $\hat{y}^* := T^{-1}\tilde{y}^* = S\hat{u}$ where $S$ is a diagonal matrix and $${\left[S\right]_{kk} = -C(A+\lambda_k(M)B_vG)^{-1}B_u}$$ is the steady-state gain of the $k$th eigenvector of $T$.
	\end{proposition}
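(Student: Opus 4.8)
The plan is to perform the linearization explicitly in Kronecker-product form, diagonalize the spatial coupling with $T$, and then read off the steady-state input--readout map one spatial mode at a time.

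First I would linearize \eqref{eq:sys_percell_eqs} about the homogeneous equilibrium $(\underline{x}^*,\underline{u})$, whose existence is supplied by assumption~2 and whose stability (assumption~3) makes the linear approximation legitimate in a neighborhood. Writing $\tilde{x} := x - \underline{x}^*$, $\tilde{u} := u - \underline{u}$, $\tilde{y} := y - \underline{y}^*$ and stacking the per-cell Jacobians $A, B_v, B_u, C, G$, the coupling relation $v = (M\otimes I_q)w$ together with $\tilde{w} = (I_N\otimes G)\tilde{x}$ gives $\tilde{v} = (M\otimes G)\tilde{x}$ (by the mixed-product rule $(P\otimes Q)(R\otimes S) = (PR)\otimes(QS)$, noting $B_vG\in\mathbb{R}^{n\times n}$), so the full linearized system reads
$$\dot{\tilde{x}} = \big[(I_N\otimes A) + (M\otimes B_vG)\big]\tilde{x} + (I_N\otimes B_u)\tilde{u}, \qquad \tilde{y} = (I_N\otimes C)\tilde{x}.$$

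Next I would change coordinates with the modal transform. By assumption~1, $M = T\Lambda T^{-1}$ with $\Lambda = \diag(\lambda_0(M),\dots,\lambda_{N-1}(M))$; set $\hat{x} := (T^{-1}\otimes I_n)\tilde{x}$, and, since $u_i$ and $y_i$ are scalars, $\hat{u} := T^{-1}\tilde{u}$ and $\hat{y} := T^{-1}\tilde{y}$. Left-multiplying the $\tilde{x}$-dynamics by $T^{-1}\otimes I_n$ and applying the mixed-product rule once more --- using $(T^{-1}\otimes I_n)(M\otimes B_vG) = (\Lambda\otimes B_vG)(T^{-1}\otimes I_n)$, and $(T^{-1}\otimes I_n)(I_N\otimes B_u)\tilde{u} = (I_N\otimes B_u)\hat{u}$ because $B_u$ has a single column, and likewise $T^{-1}(I_N\otimes C) = (I_N\otimes C)(T^{-1}\otimes I_n)$ for the readout --- yields the block-diagonal, hence decoupled, dynamics
$$\dot{\hat{x}}_k = \big(A + \lambda_k(M)\,B_vG\big)\hat{x}_k + B_u\hat{u}_k, \qquad \hat{y}_k = C\hat{x}_k, \qquad k = 0,\dots,N-1.$$
This is the point where the ``no cross-mode coupling'' statement from the text becomes, for the linearized model, an exact identity rather than an approximation: $I_N\otimes A$ and $M\otimes B_vG$ are simultaneously block-diagonalized by $T^{-1}\otimes I_n$.

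Finally I would take steady states: setting $\dot{\hat{x}}_k = 0$ gives $\hat{x}^*_k = -\big(A + \lambda_k(M)B_vG\big)^{-1}B_u\hat{u}_k$, hence $\hat{y}^*_k = -C\big(A + \lambda_k(M)B_vG\big)^{-1}B_u\,\hat{u}_k$, and collecting over $k$ shows $\hat{y}^* = S\hat{u}$ with $S$ diagonal and $[S]_{kk}$ as claimed. The inverses exist because the full Jacobian $(I_N\otimes A)+(M\otimes B_vG)$ is similar, via $T^{-1}\otimes I_n$, to the block-diagonal operator with blocks $A+\lambda_k(M)B_vG$, so assumption~3 forces every block to be Hurwitz and in particular nonsingular; the homogeneous mode (eigenvector $\mathds{1}_N$, eigenvalue $\mu$, from assumption~1) reproduces the linearization of the reduced homogeneous dynamics $f(\cdot,\mu g(\cdot),\cdot)$ of assumption~2, which is the consistency check linking the three hypotheses. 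I do not expect a conceptual obstacle: the only real care needed is the Kronecker bookkeeping around the dimension mismatch between the scalar signals $u_i,y_i$ and the vector states $x_i$, together with being explicit that ``the homogeneous steady state is stable'' must be read as stability of the full spatially extended linearization (all modes), which is precisely what underwrites both the approximation and the existence of the matrix inverses.
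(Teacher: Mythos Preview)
Your argument is correct and follows essentially the same route as the paper's own derivation: linearize in Kronecker form to obtain $\dot{\tilde{x}} = [(I_N\otimes A)+(M\otimes B_vG)]\tilde{x}+(I_N\otimes B_u)\tilde{u}$, conjugate by $T^{-1}\otimes I_n$ to block-diagonalize into per-mode systems $\dot{\hat{x}}_k=(A+\lambda_k B_vG)\hat{x}_k+B_u\hat{u}_k$, and read off the steady-state gain. Your explicit remark that assumption~3 (stability of the full spatially extended linearization) forces each block $A+\lambda_k(M)B_vG$ to be Hurwitz, hence invertible, is a useful clarification that the paper leaves implicit.
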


	The conditions (1) through (3) ensure that the network, when given a constant-in-space input, will admit a stable, homogeneous steady-state solution, and that the expression pattern across cells can be represented in a complete orthonormal basis other than the standard; this basis $T$ comprises the modes.  The $\left[S\right]_{kk}$ collectively form the ``filter coefficients", which dictate how the corresponding $N$ spatial modes are multiplicatively scaled by the system when the input is no longer constant in space (Figure \ref{fig:math_filt_schematic}).  In other words, the matrix $S$ ``filters'' the perturbed input into a perturbed readout with respect to the eigenvectors, or spatial modes, of $M$ as contained in $T$.  In contrast to the conditions for spontaneous pattern formation, our approach does not require the perturbed system to be unstable; large amplification of spatial modes is possible even when the system is stable.  Figure \ref{fig:img_filter_examples} shows eight examples of prototypical filter behaviors that vary with interaction type and cellular interconnectivity.

	Many continuous pattern-forming and distributed dynamical systems exhibit spatial invariance of the dynamics with respect to linear transformations such as reflections, rotations, or translations \cite{cross_greenside_2009}, \cite{bamieh_et_al_2002}.  The discrete-space cellular network has a direct analog: If\ $M$ is invariant under a linear transformation, then $S$ is also invariant under the same transformation, since the system dynamics are identical within each cell and therefore the only spatial information contained within the system is contained in $M$.  Formally, if we let $\Pi \in \mathbb{R}^{N\times N}$ be a linear transformation and $M$ and $\Pi$ commute, then $\Pi$ and the filter coefficient matrix $S$ also commute (i.e., the map from input to output is equivariant).  Thus, $M$'s permutation $M_{\Pi} := \Pi M\Pi^{-1}$ shares the same eigenvectors $T$ and corresponding eigenvalues $\Lambda$ as $M$, which implies that the filter coefficients for a system with interconnection matrix $M$ are the same as for that system with interconnection matrix $M_{\Pi}$.

	\subsection{Stochastic Influence on Patterning}
	The role of stochastic influences in biological patterning is a subject of ongoing theoretical and experimental interest (e.g., \cite{butler_goldenfeld_2011}, \cite{karig_et_al_2018}).  Here, we concern ourselves with the response of spatial modes to time-varying white noise inputs, for which the $\mathcal{H}_2$ norm of the system quantifies the expected power of the perturbed readout.  The $\mathcal{H}_2$ norm has previously been used to analyze energy amplification in channel flows \cite{jovanovic_bamieh_2005}, networks of cells \cite{ferreira_arcak_2013}, and reaction-diffusion systems \cite{hori_hara_2012}, among others.
	
	To begin our analysis we rewrite the linearized ordinary differential equations in the form of a nonlinear Langevin equation (It\^{o} interpretation).  Since the $N$ modes are decoupled we can write the equation for the perturbed states in the $k$th mode as
	\begin{equation}
	d\hat{x}_k(t) = \left[ \left( A \otimes I_N \right) + \left( B_v G \right) \otimes \Lambda \right]\hat{x}_k(t) dt + \left( B_u \otimes I_N \right)d\hat{u}_k(t). \label{eq:xhatk_sde} \
	\end{equation}
	Here $\hat{u}(t)$ is an $n_u$-dimensional independent standard Wiener process, also known as the standard Brownian motion process.  Implicitly we assume that concentrations of reactants are high enough to permit us to neglect molecular-level fluctuations, which cannot accurately be described by the Langevin approach \cite{van_kampen_2007}.
	
	With slight abuse of notation, $d\hat{u}_k(t)$ is stationary, therefore the variance of the readout $y(t) = Cx(t)$ in mode $k$ does not change in time.  The variance is given by
	\begin{equation}
	\mathbb{E}\left[\left|\hat{y}_k\right|^2\right] = \mathbb{E}\left[Tr\left(\hat{y}_k\hat{y}_k^T\right)\right] = Tr\left(C\mathbb{E}\left[\hat{x}_k\hat{x}_k^T\right]C^T\right) = Tr\left(CQ_kC^T\right), \nonumber \
	\end{equation}
	where $Q_k := \mathbb{E}\left[\hat{x}_k\hat{x}_k^T\right]$ is the covariance of the reactants in the $k$th mode.
	
	Let $G_k(t)$ be the impulse response of \eqref{eq:xhatk_sde} for readout $y(t)$.  We could equivalently write
	\begin{align}
	\mathbb{E}\left[Tr\left(\hat{y}_k\hat{y}_k^T\right)\right] &= \int_0^\infty \nonumber \ \mathbb{E}\left[Tr\left(G_k(t)d\hat{u}_kd\hat{u}_k^TG_k(t)^T\right)\right]dt \\
	&= \int_0^\infty Tr\left(G_k(t)\mathbb{E}\left[d\hat{u}_kd\hat{u}_k^T\right]G_k(t)^T\right)dt
	= \int_0^\infty Tr\left(G_k(t)G_k(t)^T\right) dt \nonumber \\
	&=: ||G_k(t)||^2_{\mathcal{H}_2}, \nonumber \
	\end{align}
	from which we deduce that the $\mathcal{H}_2$ norm is equivalent to the variance of $\hat{y}_k$ and can be calculated as $Tr\left(CQ_kC^T\right)$ where $Q_k$ is the positive semi-definite solution to the Lyapunov equation
	\begin{equation}
	\left( A + \lambda_k B_vG\right)Q_k + Q_k\left( A + \lambda_k B_vG\right)^T + B_uB_u^T = 0. \nonumber \
	\end{equation}
	
	The unit variance of $d\hat{u}_k(t)$ allows us to interpret $||G_k(t)||^2_{\mathcal{H}_2}$ as the ratio of the variance of the readout to the variance of the input in mode $k$.  Moreover, since $d\hat{u}_k(t)$ is zero mean, the squared $\mathcal{H}_2$ norm is also equivalent to the time integral of the expected power spectral density, or the factor by which the system amplifies the average power of the readout within mode $k$.  Those modes with the highest $\mathcal{H}_2$ norms are most strongly amplified by the external noise source.
	
	\subsection{Constructing the Interconnection Matrix} \label{sec:constructing_interconnection_matrix}
	In the remainder of this paper we construct the interconnection matrix $M$ for a particular signal as follows:
	
	\begin{enumerate}
		\item The length $N$ vector of all ones $\mathds{1}_N$ is an eigenvector of $M$, which implies that a homogeneous steady-state solution exists.
		\item The $i$th, $j$th entry $\left[M\right]_{ij}$ for $i \neq j$ is 0 if cell $i$ is not connected to cell $j$.  Otherwise $0 < \left[M\right]_{ij}$, where the magnitude $\left[M\right]_{ij}$ captures the ``strength'' of the connection.
		\item The diagonal entries $\left[M\right]_{ii}$ encapsulate the ``signaling cost'' associated with interaction.  Negative values imply the cell loses signal to transmit to its neighbors, e.g., diffusion.
	\end{enumerate}

	In many biological systems, cells can be approximated to have the same distance between them and the same communication strength with each of their neighbors.  In such systems, the corresponding spatial modes are sinusoidal, giving rise to stripes or spots.  Lower-frequency modes correspond to longer-wavelength spatial modes, while higher-frequency modes correspond to shorter-wavelength spatial modes.  The relationship between patterning wavelength and spatial mode frequency enables these systems to be interpreted from the standpoint of how the weights of the frequency components in an input are scaled to produce the readout, analogous to filtering as it is understood in discrete signal processing.  In this paper we will consider basis vectors arising from a line or sheet of regularly spaced cells with periodic or no-flux boundary conditions.  The modes then pertain to two common signal processing transforms: the discrete Fourier transform (DFT) for periodic boundaries or the second discrete cosine transform (DCT-2) for no-flux boundaries.  The eigenvectors and eigenvalues for these transforms are well known (e.g., \cite{strang_1999}); a review is offered in Supplementary Section \ref{sec:spatial_mode_tutorial}.  We will assume modes are indexed in order of increasing frequency with increasing $k$ toward $\frac{N}{2}$ for the DFT and $N$ for the DCT-2.

	\subsection{Minimal Model: Gene Expression with Autoregulation}
	The following example is a simple model that is easy to solve analytically for the filter coefficients.  We begin with a brief description of gene expression for readers who may not be familiar with the biology, including terminology that will be used in later sections.  We then apply the filtering approach to the example, including an expansion of the matrix notation to emphasize the role of the filter coefficients as ``weights'' for the spatial modes.  As this model focuses on biological and filtering concepts, intercellular interaction is described only in the most general terms, leaving exploration of the underlying mechanisms to later examples.
		
	The case studies in this paper deal with gene expression, or the process by which a gene coded in DNA is \textit{transcribed} into mRNA molecules that are then \textit{translated} into protein molecules (Supplementary Figure \ref{fig:supp_gene_exp_schematic}).  The production and degradation rates for mRNA and protein may be modulated by physical or chemical factors; for example, a protein may locally interact with DNA so as to increase (promote) or decrease (inhibit or repress) the production rate for mRNA corresponding to a particular gene.  In this case, the DNA-interacting protein is called a \textit{transcription factor} because it directly influences whether mRNA is transcribed.  The genes expressed by cells during embryonic development will determine the ultimate ``identity'' of the cell (e.g., a nerve or skin cell) in the adult organism.

	Here, we consider a simple model of an autoregulatory process in which each cell transcribes mRNA $m$ that is translated into protein $p$ that in turn modifies the production rate of $m$.  The signaling molecule $v$, generated in exact proportion to $p$, also regulates $p$ production in the self and neighbors by modulating the production rate of $m$.  The system dynamics are
	\begin{equation}
		\begin{cases}
		\dot{m}_i = -\gamma_m m + \alpha_m f(v_i,u_i,p_i) \\
		\dot{p}_i = -\gamma_p p + \alpha_p m  \\
		y = p \\
		v = Mp
		\end{cases} \label{eq:autoreg} \
	\end{equation}
	where $\gamma_m$, $\gamma_p$ are the degradation (decay) rates of mRNA and protein respectively, and $\alpha_m$, $\alpha_p$ are the corresponding transcription or translation rates.  The function $f(v_i,u_i,p_i)$ captures the influence of the coupling, input, and protein on the production rate of the mRNA and therefore of the protein.

	When linearized at steady state, the system becomes
	\begin{equation}
		\begin{cases}
		\dot{\tilde{m}}_i = -\gamma_m \tilde{m}_i + \alpha_m \left( F_v \tilde{v}_i + F_u\tilde{u}_i + F_p\tilde{p}_i \right) \\
		\dot{\tilde{p}}_i = -\gamma_p \tilde{p}_i + \alpha_p \tilde{m}_i \\
		\tilde{y} = \tilde{p} \\
		\tilde{v} = M\tilde{p}
		\end{cases} \nonumber \
	\end{equation}
	where $F_v := \frac{\partial f}{\partial v_i} \big|_{(\bar{m}^*,\bar{p}^*,\bar{v}^*,\bar{u}^*)}$, $F_u := \frac{\partial f}{\partial u_i} \big|_{(\bar{m}^*,\bar{p}^*,\bar{v}^*,\bar{u}^*)}$, and ${F_p := \frac{\partial f}{\partial p_i} \big|_{(\bar{m}^*,\bar{p}^*,\bar{v}^*,\bar{u}^*)}}$.
	
	Define $\alpha := \alpha_m \alpha_p$ and $\gamma := \gamma_m \gamma_p$.  Note that the steady-state protein concentration is a linear multiple of the steady-state mRNA concentration, such that mathematically a molecule produced at rate $\alpha f(v_i,u_i,p_i)$ and decayed at rate $\gamma$ would have the same steady-state concentration as the protein in \eqref{eq:autoreg}.  Indeed, it is not uncommon for the dynamics of transcription and translation to be lumped together (usually by neglecting mRNA dynamics) in mathematical models such as those presented later in this paper.
	
	The steady-state solution to the perturbed system yields filter coefficients
	\begin{equation}
	\left[S\right]_{kk} = \frac{\frac{\alpha}{\gamma}F_u}{1 - \frac{\alpha}{\gamma}\left( F_p + F_v\lambda_k\left(M\right) \right)} \nonumber \
	\end{equation}
	for $k = 0, 1, ..., N-1$.  For the homogeneous steady state to be stable---and therefore for the filtering approach to be applicable---we require
	\begin{equation}
	\frac{\alpha}{\gamma}\left( F_v \lambda_k\left(M\right) + F_p \right) < 1. \nonumber \
	\end{equation}
	We henceforth assume this condition is satisfied.
	
	Recall that the spatial modes are given by $t_k$, the columns of the matrix $T$ that diagonalizes $M$.  The perturbing input can be written as
	\begin{equation}
	\tilde{u} = \left(u - \underline{u}\right) = T\hat{u} = \sum_{k = 0}^{N -1}\hat{u}_kt_k. \nonumber \
	\end{equation}
	The coefficients $\hat{u}_k$ (the entries of $\hat{u}$) are the weights assigned to each of the spatial modes $t_k$.  The steady-state perturbed readout is given by
	\begin{equation}
	\tilde{y}^* = \sum_{k = 0}^{N-1}\left[S\right]_{kk}\hat{u}_kt_k, \label{eq:perturbed_yss_by_mode} \
	\end{equation}
	such that the readout in the $i$th cell is given by $\underline{\bar{y}} + \tilde{y}^*_i$.
		
	The $\mathcal{H}_2$ norm for the $k$th spatial mode is analytically calculated to be $\frac{F_u}{2\alpha} \left[S\right]_{kk}$.  This relationship indicates that the modes in the system respond identically to within a scaling factor to both persistent spatial disturbances and temporally varying white noise inputs.
	
	Figure \ref{fig:img_filter_examples} exemplifies how the choice of interaction type and interconnectivity affects the filtering behavior of the system with no autoregulation.  In particular, activation of neighbors tends to cause the system to amplify low spatial frequencies, while inhibition of neighbors introduces amplification at high spatial frequencies.
		
	To investigate the effect of autoregulation, suppose we fix all parameters except $F_p$.  As $F_p \rightarrow -\infty$ all filter coefficients approach 0.  This attenuating behavior occurs because allowing a protein to effectively shut down its own production prevents the system from responding to signal.
		
	For $F_p > 0$ (autoactivation), increasing $F_p$ disproportionately increases the coefficients at spatial modes with low eigenvalues.  For $T$ corresponding to the DFT or DCT-2, the lower eigenvalues are associated with higher-frequency spatial modes.  In the case of lateral inhibition ($F_v < 0$), the filter coefficients already amplify high-frequency spatial modes relative to intermediate ones (Figure \ref{fig:img_filter_examples}), such that adding autoactivation enhances the filter's intrinsic highpass characteristics.  Indeed, mechanisms involving lateral inhibition and autoactivation have been conjectured to increase the sharpness of boundary formation in systems of patterned cells responding to exponential input \cite{jaeger_2011}\cite{sprinzak_et_al_2010}.

	\begin{figure}[!hbtp]
	\includegraphics[width=\columnwidth]{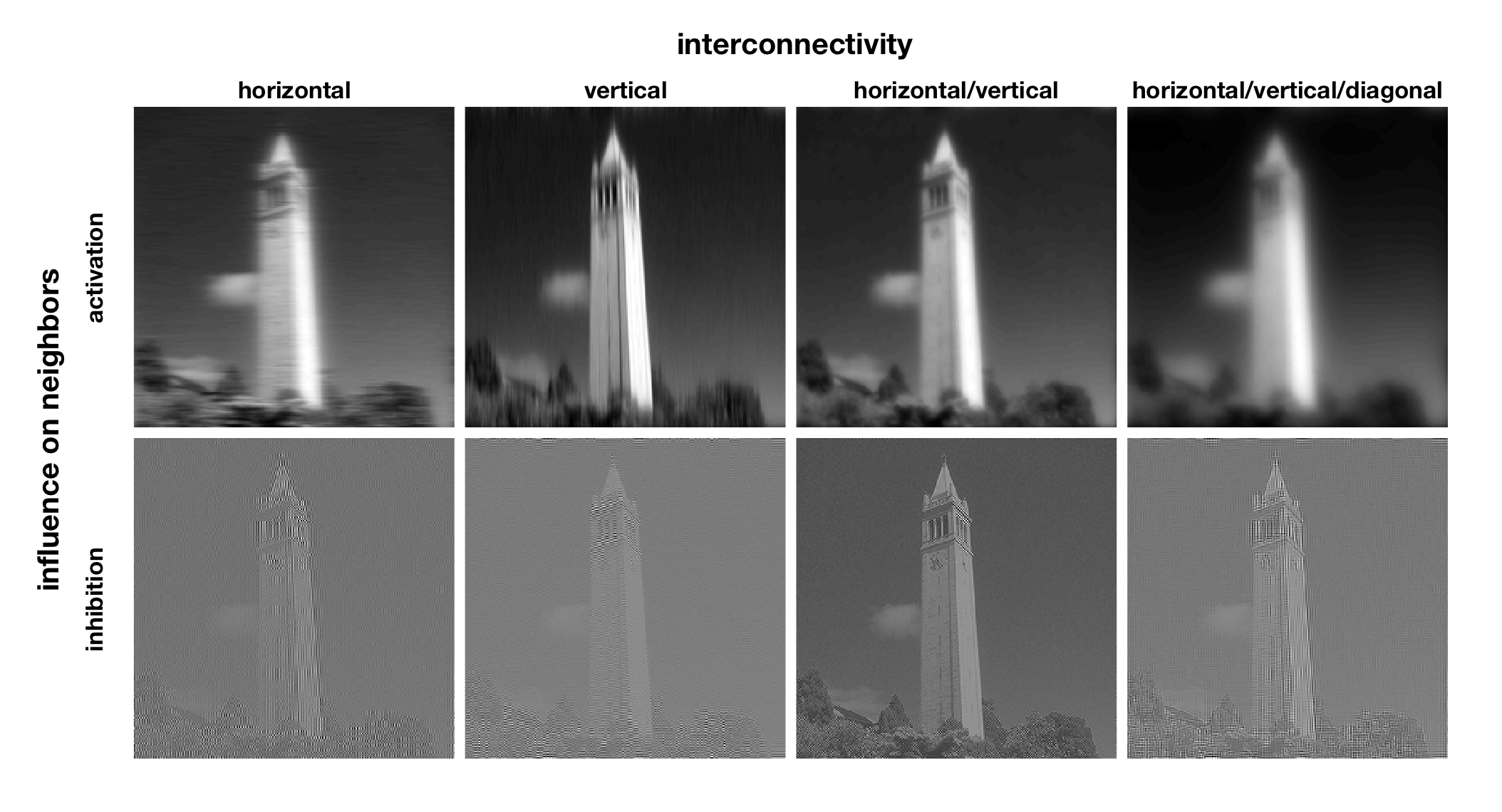} \\
	\caption{\textbf{A minimal model of gene expression demonstrates that the same input prepattern produces different readouts depending on the interconnectivity and interaction type among cells.}  The $i$th cell has dynamics given by \eqref{eq:autoreg} for $\alpha = \gamma = 1$ with no autoregulation, i.e., $F_p = 0$.  $F_v < 0$ corresponds to inhibition of neighbors while $F_v > 0$ implies activation of neighbors.  The filter coefficients are thus $\left[S\right]_{kk} = \left(1 - F_v\lambda_k(M)\right)^{-1}$.  Pictured is the readout $\tilde{y}^*$ given the same perturbing input $\tilde{u}$ to $N = 62,500$ cells arranged in a $250 \times 250$ rectangular array, with one image pixel corresponding to each cell and the intensity of the pixel corresponding to the protein concentration.  Interconnectivities vary by column; boundary conditions in all cases are periodic.  Connection strengths are identical and assumed to incur no cost to the interacting cells (i.e., $[M]_{ii} = 0$).  To best exemplify the effect of the interconnections, $F_v$ was modified for each of the filtered images to give the highest magnitude of eigenvalues without destabilizing the underlying dynamical system.  The readout in each cell is calculated according to \eqref{eq:perturbed_yss_by_mode}.  If cells activate their neighbors then the filter acts as a lowpass (attenuates short wavelengths) that blurs the underlying spatial input along the same dimension as the interconnections.  Inhibition sharpens lines orthogonal to the interconnections by enhancing contrast parallel to the interconnections.  The images are individually normalized.} \label{fig:img_filter_examples}
	\end{figure}

	\clearpage	
	\section{1D APPLICATION: NOTCH-DELTA} \label{sec:notch-delta}
	The Notch-Delta patterning mechanism is a lateral inhibition system that is responsible for diverse developmental phenomena including neural and epidermal fate determination in the fruit fly \textit{Drosophila melanogaster}.  Cells produce both Notch and Delta, which are proteins found in the cell membrane.  Delta on the surface of one cell binds Notch on the surface of neighboring cells to inhibit those neighbors' Delta production, thereby relieving inhibition on the cell's own Delta production by decreasing the potential for the neighbors to bind its Notch.  With the appropriate interaction strengths, such mutual inhibition between neighbors will ultimately generate a checkerboard pattern in which cells expressing high Delta are adjacent to cells expressing low Delta.  This has significant consequences for organismal development: Notch that is bound by Delta on an adjacent cell will cleave in two---preventing it from further signaling---and the portion left inside the cell will signal the cell to express target genes that influence the choice of cell identity.  A cell whose neighbors express more Delta is more likely to have bound Notch and therefore more likely to adopt a particular fate \cite{artavanis-tsakonas_et_al_1999}, \cite{heitzler_simpson_1991}.
	
	Patterning in a Notch-Delta system may arise spontaneously \cite{arcak_2013}, \cite{collier_et_al_1996} or through modification of a prepattern.  In the case of \textit{Drosophila} wing development, the gene \textit{veinless} is expressed in an exponential gradient decreasing in either direction from what will become the center of a vein.  The level of \textit{veinless} expression in a cell determines the Delta production rate at that cell.  Notch activity occurs in two peaks, one on either side of the center, where further vein development is restricted to occur.  One model of the Notch-Delta mechanism suggests that so-called mutual inactivation, when Notch and Delta on the same cell inhibit each other's activity, enables sharper and more robust patterning than is achieved with lateral inhibition alone \cite{sprinzak_et_al_2010}, \cite{sprinzak_et_al_2011}.
	
	The authors of \cite{sprinzak_et_al_2010} considered a line of cells with periodic boundary conditions, corresponding to the interconnection matrix
	\begin{equation}
	M = \frac{1}{2}\left[ \begin{matrix}
	0 & 1 & 0 & \hdotsfor{1} & 0 & 1 \\
	1 & 0 & 1 & \hdotsfor{1} & 0 & 0 \\
	\vdots & \vdots & \vdots & \ddots & \vdots & \vdots \\
	1 & 0 & 0 & \hdotsfor{1} & 1 & 0
	\end{matrix} \right]. \nonumber \
	\end{equation}
	The diagonal entries are zero to reflect the fact that Notch ($N$) and Delta ($D$) interact via contact with neighbors rather than diffusion, while the factor of $\frac{1}{2}$ ensures that $v_N$ is the average Notch from neighbors and $v_D$ is the average Delta from neighbors.  Because $M$ is circulant, the spatial modes correspond to the eigenvectors of the DFT matrix.
	
	We discretize the input gradient of Delta production rate $\beta_D(\cdot)$ into $\beta_{D_i}$, $i = 0, 1, ..., N-1$ and let $\underline{\bar{\beta}}_D$ be the mean of the $\beta_{D_i}$.  We then define $u_i := \beta_{D_i} - \underline{\bar{\beta}}_D$, $x_i^T := [N_i, D_i, R_i]$, and $v_i := [v_{N_i}, v_{D_i}]$ where readout $R$ is a reporter for Notch activity (i.e., is expressed from a target gene for Notch activity).
	
	The authors of \cite{sprinzak_et_al_2011} propose four models of the Notch-Delta patterning mechanism that involve mutual inactivation, lateral inhibition, or both.  As an example we present the linearization for the mutual inactivation (MI) model; equations and linearizations for the lateral inhibition with mutual inactivation (LIMI) and simplest lateral inhibition by mutual inactivation (SLIMI) models may be found in Supplementary Section \ref{sec:D-N_model_calcs}.
	
	The system equations for the MI model are
	\begin{equation}
	\begin{cases}
	\dot{N_i}(t) = \beta_N - \gamma N_i(t) - \frac{N_i(t)v_{D_i}(t)}{k_t} - \frac{N_i(t)D_i(t)}{k_c} \\
	\dot{D_i}(t) = \underline{\bar{\beta}}_D + u_i - \gamma D_i(t) - \frac{D_i(t)v_{N_i}(t)}{k_t} - \frac{N_i(t)D_i(t)}{k_c} \\
	\dot{R}_i(t) = \beta_R\frac{\left(N_i(t)v_{D_i}(t)\right)^n}{k_{RS} + \left(N_i(t)v_{D_i}(t)\right)^n} - \gamma_R R_i(t) \\
	y_i(t) = Cx_i(t) \\
	w_i(t) = \left[ \begin{matrix}
	N_i(t) \\ D_i(t)
	\end{matrix} \right] \\
	v(t) = \left( M \otimes I_2 \right) w(t) \
	\end{cases} \label{eq:sprinzak_MI_1D} \
	\end{equation}
	where $\gamma$, $\gamma_R$ are decay rates, $k_t^{-1}$ is the rate at which Delta and Notch bind each other on neighboring cells, $k_c^{-1}$ is the strength of mutual inactivation, and $k_{RS}$, $n$ are parameters determining how strongly bound Notch promotes reporter expression.  Note that mRNA is not explicitly incorporated into the model, such that the dynamics are effectively lumped into the production and degradation terms for the proteins.
	
	Linearization about the steady state with all $u_i = 0$ yields
	\begin{align}
	A &= \left[ \begin{matrix}
	-\gamma -\frac{\underline{\bar{v}}_{D}^*}{k_t} - \frac{\underline{\bar{D}}^*}{k_c} & -\frac{\underline{\bar{N}}^*}{k_c} & 0 \\
	-\frac{\underline{\bar{D}}^*}{k_c} & - \gamma - \frac{\underline{\bar{v}}_{N}^*}{k_t} - \frac{\underline{\bar{N}}^*}{k_c} & 0 \\
	b_1 & 0 & -\gamma_R
	\end{matrix} \right], \nonumber \\
	B_v &= \left[ \begin{matrix}
	0 & -\frac{\underline{\bar{N}}^*}{k_t} \\
	-\frac{\underline{\bar{D}}^*}{k_t} & 0 \\
	0 & b_2
	\end{matrix} \right], ~~
	B_u = \left[ \begin{matrix}
	0 \\
	1 \\
	0
	\end{matrix} \right], ~~ G = \left[ \begin{matrix} 1 & 0 & 0 \\ 0 & 1 & 0 \end{matrix} \right], \nonumber \
	\end{align}
	where we have defined
	\begin{equation}
	b_1 := \beta_Rnk_{RS}\frac{\underline{\bar{v}}_{D}^{*n}N^{*n-1}}{\left(k_{RS}+\left(\underline{\bar{N}}^*\underline{\bar{v}}_{D}^*\right)^n\right)^2}, ~ b_2 := \frac{\underline{\bar{N}}^*}{\underline{\bar{v}}_{D}^*}b_1 \nonumber \
	\end{equation}
	and $C$ is chosen depending on the readout.  It can be shown that the linearized dynamical system is stable for all nonnegative and thus biologically attainable parameter values (Supplementary Section \ref{sec:supp_mi_model}), strengthening the argument that patterning may not require instability.
	
	To examine the effects of identical (correlated) vs. separate (uncorrelated) white noise inputs to both Delta and Notch, we first modify \ref{eq:sprinzak_MI_1D} so that a single input appears in the equations for both $\dot{N_i}$ and $\dot{D_i}$ in the correlated case and two independent inputs appear in each of these equations for the uncorrelated case.  Accordingly, we then calculate the $\mathcal{H}_2$ norm for
	\begin{equation}
	B_u^{corr} = \left[ \begin{matrix}
	1 \\
	1 \\
	0
	\end{matrix} \right], ~
	B_u^{uncorr} = \left[ \begin{matrix}
	1 & 0 \\
	0 & 1 \\
	0 & 0
	\end{matrix} \right]. \nonumber \
	\end{equation}
	
	\begin{figure}[!hbtp]
		\centering
		\includegraphics[scale=0.5]{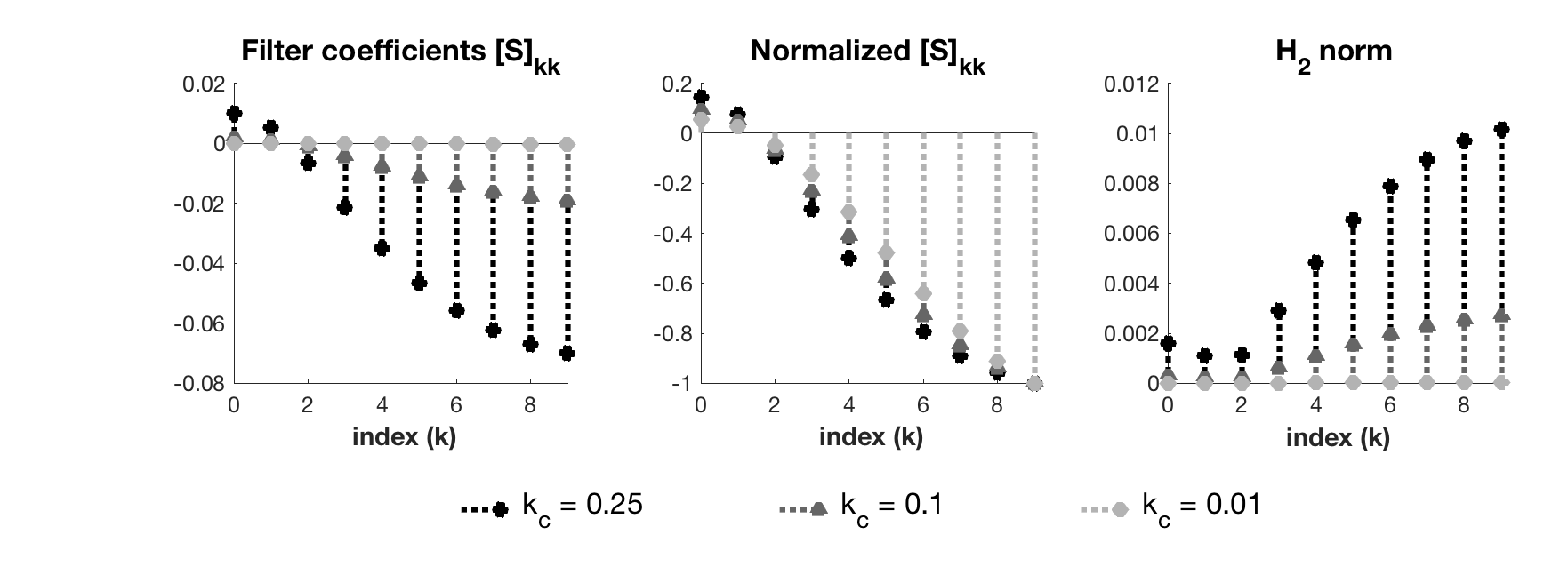}
		\caption{\textbf{The filter coefficients and $\mathcal{H}_2$ norm for the MI model of Notch-Delta interaction reveal how changes to parameter values enhance high frequencies from an input gradient of Delta production to readout Notch activity.}  The magnitude of the coefficients decreases with greater mutual inactivation strength (lower $k_c$), indicating that greater inhibition reduces overall activity.  The spatial modes correspond to the DFT basis and are indexed by $k$ such that the $k$th mode has frequency $\frac{2\pi k}{N}$.  The coefficients exhibit mirror-image symmetry about $k = \frac{N}{2}$; we plot only the first half of the coefficients to better visualize the filter's characteristic highpass shape.  The greater the mutual inactivation, the greater the amplification of high frequencies relative to lower ones, as revealed by a plot of the coefficients individually normalized to the maximum in each set.  The $\mathcal{H}_2$ norm is qualitatively similar to the highpass filter characteristic though smaller in magnitude, with more dramatic relative differences between values of $k_c$.  Parameters are given in Supplementary Table \ref{tab:supp_sprinzak_params}.}
		\label{fig:sprinzak_mi_kc}
	\end{figure}

	\begin{figure}[!hbtp]
		\centering
		\includegraphics[scale=0.5]{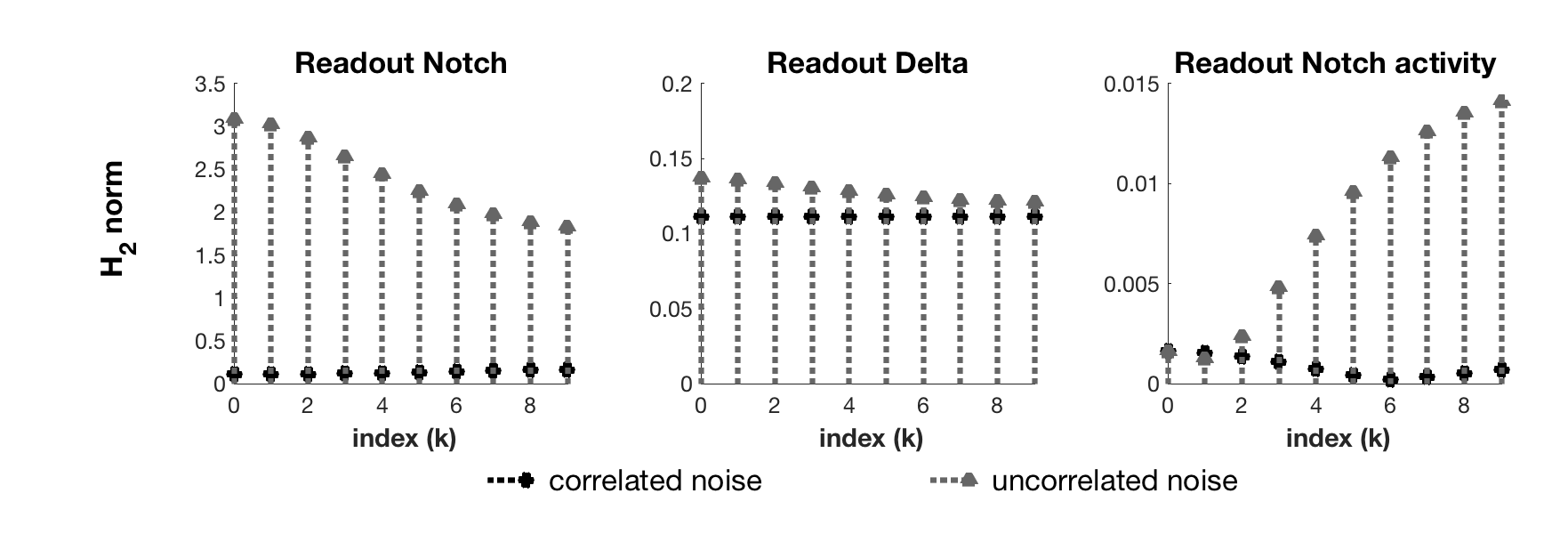}
		\caption{\textbf{Filter behavior is robust with respect to noisy inputs to Notch or Delta.}  Noise that is completely correlated between Notch and Delta is more strongly and uniformly rejected than completely uncorrelated noise.  However, uncorrelated noisy inputs tend to emphasize the inherent highpass characteristics with respect to output Notch activity, suggesting that moderate levels of white noise do not compromise filter behavior.  Pictured here are the norms for the MI model; the other models exhibit similar behavior (Figure \ref{fig:supp_sprinzak_h2_uc_comps}).  Parameters are given in Supplementary Table \ref{tab:supp_sprinzak_params}.}
		\label{fig:sprinzak_h2_uc}
	\end{figure}
	
	\begin{figure}[!hbtp]
		\centering
		\includegraphics[scale=0.5]{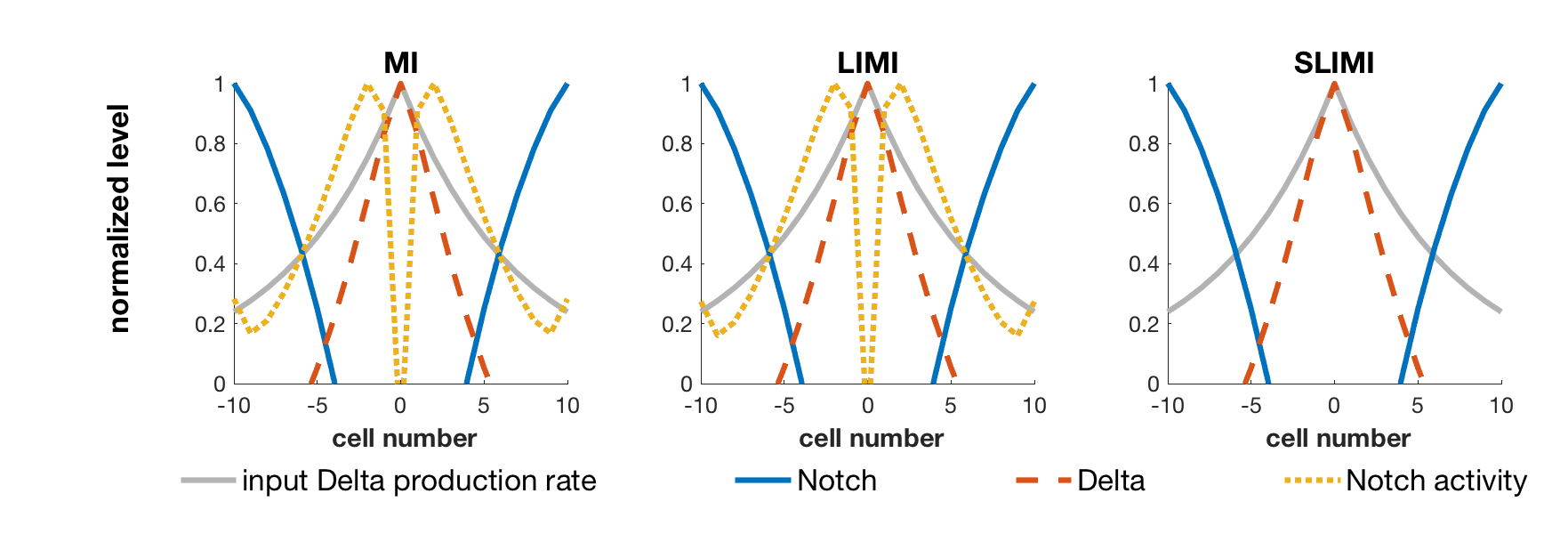}
		\caption{\textbf{A linearized system qualitatively reproduces the double peaks in Notch activity predicted from full nonilnear simulations.}  A two-sided exponential input gradient of Delta production rate (solid light gray) results in two sharp bands of Notch activity (dotted yellow) that spatially segregates steady-state levels of Notch (solid blue) and Delta (dashed  orange).  Curves are normalized to their respective maxima.  Note that the SLIMI model lacks a reporter protein and so does not have an output measure for Notch activity.  These plots correspond to Figure 4C in \cite{sprinzak_et_al_2010}.  See also Figure \ref{fig:supp_sprinzak_comps}.}
		\label{fig:sprinzak_comps_sm}
	\end{figure}

	\subsection{Comparison of Models}
	The MI, LIMI, and SLIMI models from reference \cite{sprinzak_et_al_2011} produce substantially similar readouts (Figure \ref{fig:sprinzak_comps_sm}), filter characteristics, and $\mathcal{H}_2$ norms (Figures  \ref{fig:supp_sprinzak_comps} and \ref{fig:supp_sprinzak_h2_uc_comps}).  Mutual inactivation (lower $k_c$) decreases the magnitude of the coefficients and therefore the final Notch and Delta concentrations, but exaggerates the intrinsic highpass characteristics of the filter, producing the sharper peaks in Notch activity predicted by \cite{sprinzak_et_al_2010}.  Analysis of the $\mathcal{H}_2$ norm reveals that regardless of readout, noise that is completely uncorrelated between Delta and Notch production rates is favored by the same frequencies as the system filter, while noise that is completely correlated between the production rates is almost uniformly rejected relative to uncorrelated noise (Figures \ref{fig:sprinzak_mi_kc} and \ref{fig:sprinzak_h2_uc}).  Together, these observations suggest that time-varying stochastic inputs---unless they are of exceptionally large magnitude---do little to combat the intrinsic behavior of the filter, contributing to the robustness of the developmental program.
	
	\section{APPLICATION: DIGIT FORMATION} \label{sec:raspopovic}
	Digits in developing vertebrate embryos originate from a flat paddle-shaped layer of cells that form the limb bud.  A crucial step in digit patterning involves specifying which cells in the paddle will become digits and which will die to create the space between digits \cite{tickle_2006}, \cite{zeller_et_al_2009}.  This periodic pattern of digit with interdigit has been proposed to originate with spatially periodic expression of the gene \textit{sox9}, which produces a protein that regulates transcription of the genes \textit{wnt} and \textit{bmp}.  In turn, these genes code transcription factors Wnt and Bmp that regulate Sox9 production \cite{raspopovic_et_al_2014}.
	
	Cell cultures from developing embryos grown on plates show Turing-like patterns where Sox9 is out of phase with Wnt and Bmp.  Turing patterns typically arise in chemical reaction systems with at least two types of diffusible molecules produced at every point in space, where the activation/inhibition relationship between the types is such that the homogeneous solution to the resulting dynamical system is unstable owing to the difference in diffusion rates between the two molecules.  Such a reaction-diffusion model has been proposed to generate the observed Sox9/Wnt/Bmp pattern from stochastic initial conditions within a particular parameter range \cite{raspopovic_et_al_2014}.  Our discretization of the model suggests that such a pattern might be observed even if the parameters do not satisfy the conditions for diffusion-driven instability.
	
	Consider the Sox9/Bmp/Wnt network with diffusion distance $l$ between cells.  Let $s$, $b$, and $z$ represent the concentrations of Sox9, Bmp, and Wnt respectively, such that $x_i = [s_i,~b_i,~z_i]$ and $v = [v_b,~v_z]$.  Let the input be random cell-to-cell variation in background protein production rate, i.e., the production rate of protein in the absence of promotion or inhibition, as from cell-to-cell variability in transcription or translation rates (see Supplementary Figure \ref{fig:supp_gene_exp_schematic}).  The dynamics within cell $i$ and the coupling are given by
	\begin{equation}
	\begin{cases}
	\dot{s}_i(t) = \alpha_s + u_i + k_2b_i(t) - k_3z_i(t) - (s_i(t) - s_0)^3 \\
	\dot{b}(t) = \alpha_{bmp} + u_i - k_4s_i(t) - k_5b_i(t) + \frac{d_b}{l^2}v_{b_i}(t) \\
	\dot{z}_i(t) = \alpha_{wnt} + u_i - k_7s_i(t) - k_9z_i(t) + \frac{d_z}{l^2}v_{z_i}(t) \\
	w_i(t) = \left[ \begin{matrix}
	b_i(t) \\ z_i(t) \end{matrix} \right] \\
	v(t) = (M \otimes I_2)w(t) \
	\end{cases} \label{eq:raspopovic_sys} \
	\end{equation}
	where $\alpha$ are background production rates, $k$ are interaction rates, and $d$ are diffusivities.
	
	Linearization about the homogeneous steady state yields
	\begin{align}
	A &= \left[ \begin{matrix}
	-3(\underline{\overline{s}}^* - s_0)^{2} & k_2 & -k_3 \\
	-k_4 & -k_5 & 0 \\
	-k_7 & 0 & -k_9 \
	\end{matrix} \right], \nonumber \\
	B_v &= \left[ \begin{matrix}
	0 & 0 \\
	\frac{d_b}{l^2} & 0 \\
	0 & \frac{d_z}{l^2}
	\end{matrix} \right], ~~ B_u = \left[ \begin{matrix}
	1 \\
	1 \\
	1
	\end{matrix} \right], ~~
	G = \left[ \begin{matrix} 0 & 1 & 0 \\ 0 & 0 & 1 \end{matrix} \right]. \nonumber \
	\end{align}
	
	\subsection{Spatial Modes in 2D} \label{sec:spatial_modes_2D}
	For this example we will consider a two-dimensional, rectangular $N_R \times N_C$ array of $N := N_CN_R$ cells indexed from $0$ to $N_CN_R - 1$ starting in the upper lefthand corner from top to bottom and then left to right, i.e.,
	\begin{equation}
	\begin{array}{c|c|c|c|c}
	0 & N_R & 2N_R & \dots & (N_C-1)N_R \\ \hline
	1 & N_R+1 & 2N_R+1 & \dots & (N_C-1)N_R+1 \\ \hline
	\vdots & \vdots & \vdots & \ddots & \vdots \\ \hline
	N_R-2 & 2N_R-2 & 3R-2 & \dots & N_CN_R-2 \\ \hline
	N_R-1 & 2N_R-1 & 3N_R-1 & \dots & N_CN_R-1 \
	\end{array}. \label{eq:array_indices} \
	\end{equation}
	
	It is known (e.g., \cite{othmer_scriven_1971}) that if any isolated row has interconnection matrix $M_R \in \mathbb{R}^{N_C \times N_C}$ and any isolated column has interconnection matrix $M_C \in \mathbb{R}^{N_R \times N_R}$, then the full matrix $M$ for the interconnectivity of the entire array is
	\begin{equation}
		M := \left(M_R \otimes I_{N_R}\right) + \left(I_{N_C} \otimes M_C\right). \nonumber \
	\end{equation}
	If $T_R$ and $T_C$ diagonalize $M_R$ and $M_C$ respectively then $M$ is diagonalized by
	\begin{equation}
		T := \left(T_R \otimes I_{N_R} \right) \left(I_{N_C} \otimes T_C \right) = T_R \otimes T_C, \nonumber \
	\end{equation}
		giving $N_CN_R$ eigenvalues
	\begin{equation}
		\lambda_{m + nN_R}\left( M \right) = \lambda_m\left(M_C\right) + \lambda_n\left(M_R\right) \nonumber \
	\end{equation}
	where $m = 0, 1, ... N_R-1$, $n = 0, 1, ..., N_C-1$.  The $(m,n)$th spatial mode is given by $T_C^mT_R^{nT}$.

	We can explicitly relate the spatial modes for a 2D array of cells to constituent modes in the horizontal and vertical directions by recasting the vector $\hat{y}$ in matrix form.  Let $U$ and $Y$ be matrices arranged as in \eqref{eq:array_indices} where $u_i$ is the input to compartment $i$.  Vector form is recovered through the vectorization operation ${vec(U) = u}$.  The readout matrix $Y$ is defined similarly.  If the matrices $\tilde{U}$ and $\tilde{Y}$ designate perturbations from steady state in the original basis and $\hat{Y}$, $\hat{U}$ designate perturbations in the basis for the spatial modes, then
	\begin{equation}
	\tilde{Y} = T_C \hat{Y} T_R^T = T_C \left( \Lambda_S \odot \hat{U} \right)T_R^T \nonumber \
	\end{equation}
	where $\odot$ is the Hadamard product (element-by-element multiplication) and $\Lambda_S$ has $m$th, $n$th entry
	\begin{equation}
	\left[\Lambda_S\right]_{mn} = -C\left[A+\left(\lambda_m(M_C) + \lambda_n(M_R)\right)B_vG\right]^{-1}B_u. \nonumber
	\end{equation}
	From this it can be seen that the full system alters the input along the $i$th vertical spatial mode and the $j$th horizontal spatial mode defined by the vertical and horizontal connectivities.  For the remainder of this example, we will assume Neumann boundary conditions such that the spatial modes for the rows and columns of $M$ correspond to the DCT-2 (Figure \ref{fig:rect_modes_dct2}).
	
	\begin{figure}[!hbpt]
		\centering
		\includegraphics[width=\columnwidth]{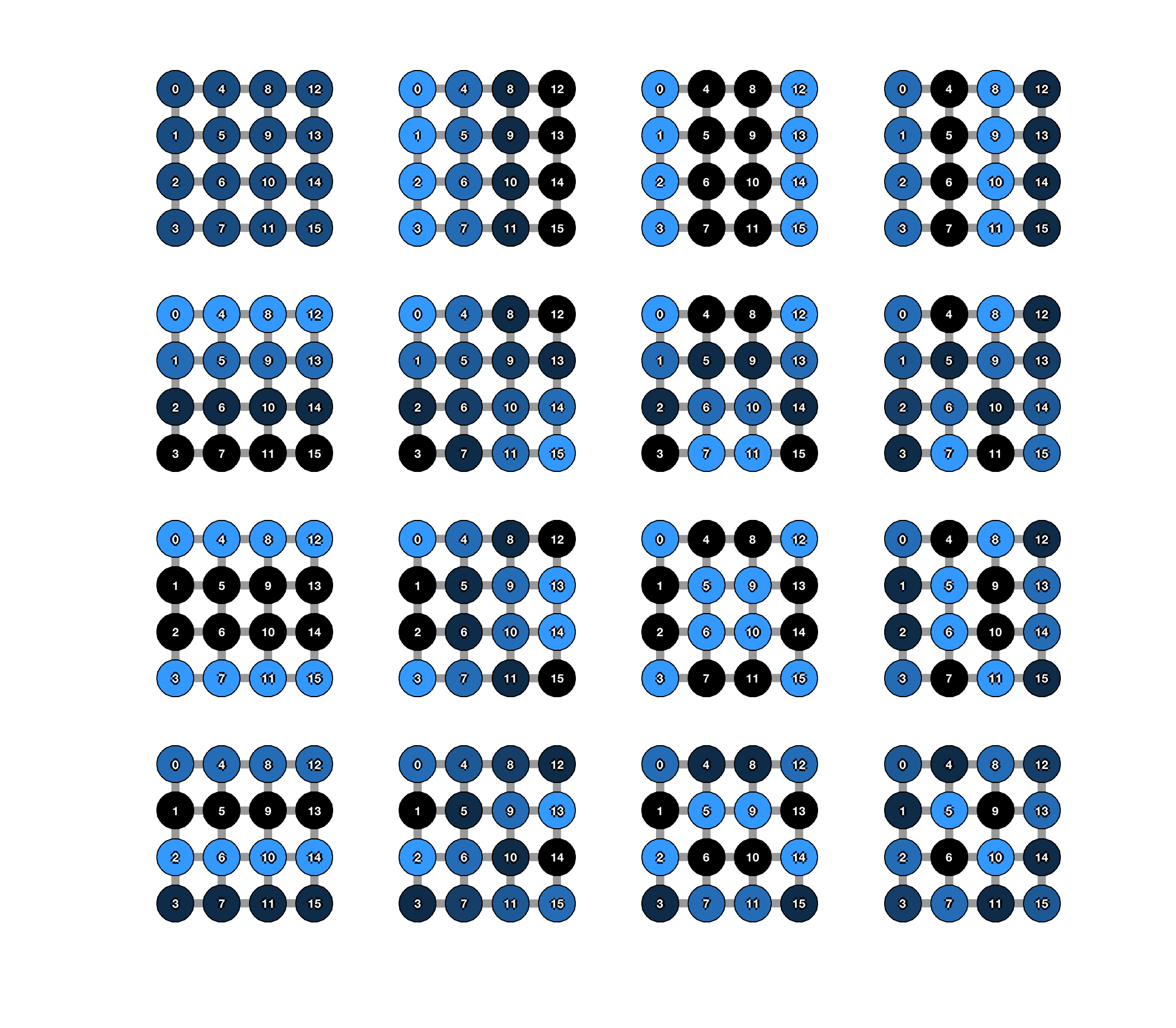}
		\caption{A complete set of spatial modes for a 2D interconnectivity with Neumann boundary conditions (DCT-2 basis) on a $4 \times 4$ rectangular array.  Modes are indexed such that the $(m,n)$th mode has frequency $\frac{\pi m}{N_R}$ in the vertical direction (down rows) and $\frac{\pi n}{N_C}$ in the horizontal direction (across columns).}
		\label{fig:rect_modes_dct2}
	\end{figure}

	\subsection{Analysis}
	
	\begin{figure}[!hbtp]
		\centering
		\includegraphics[width=\columnwidth]{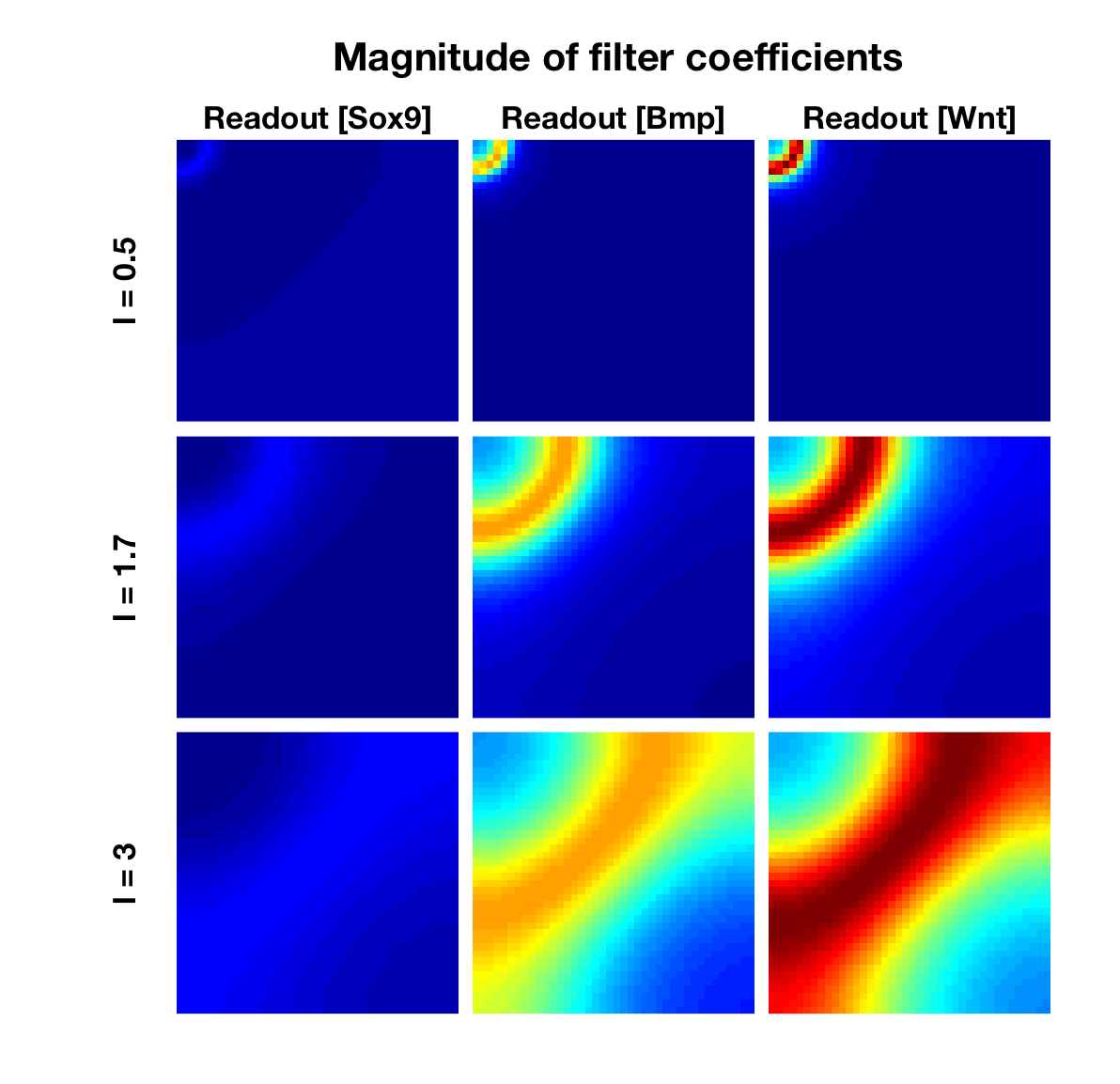}
		\caption{\textbf{The stable, linearized reaction-diffusion system behaves as a bandpass filter for Sox9 (left), Bmp (center), and Wnt (right), resulting in a spatially periodic output} (see Figure \ref{fig:raspopovic_color_plates}).  Pictured are heat maps of the magnitude of the filter coefficients for the three readouts assuming Neumann boundary conditions (DCT-2 basis) in both dimensions, such that vertical frequency increases down rows (higher $m$) and horizontal frequency increases across columns (higher $n$).  Increasing the distance between cells ($l$) increases the frequency of the passband but decreases the sharpness of the dropoff.  The readout concentration of Sox9 is out of phase from the Bmp and Wnt concentrations due to the fact that the coefficients of $S$ have an extra multiplicative factor of ${-1 = e^{i\pi}}$, or a phase shift of $\pi$, relative to the coefficients when the readout is [Bmp] or [Wnt].  Parameters are as given in Table ST4 of \cite{raspopovic_et_al_2014} with $s_0 = 11$ instead of $10$, i.e., $s_0 \neq \underline{\overline{s}}^*$ and therefore $\left[A\right]_{00} \neq 0$ (see also Supplementary Table \ref{tab:supp_raspopovic_params}).  This choice of $s_0$ stabilizes the dynamical system with diffusion, thereby violating Turing conditions.  Here, $N_R = N_C = 40$ for a total of $N = 1600$ cells.  Images are normalized to the same scale (min. 0, max. 26.6).}
		\label{fig:raspopovic_eigs}
	\end{figure}

	\begin{figure}[!hbtp]
		\centering
		\includegraphics[width=\columnwidth]{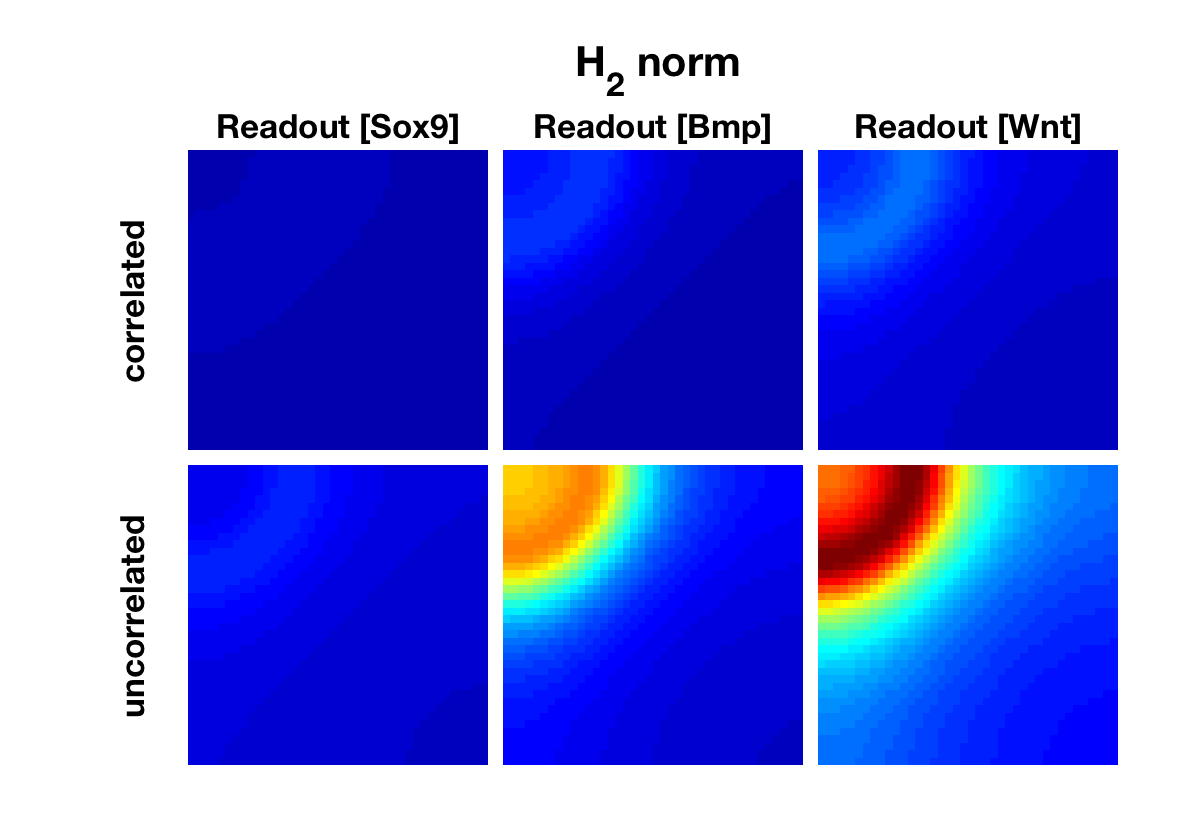}
		\caption{\textbf{The stabilized Sox9-Bmp-Wnt network emphasizes noise in the same frequency bands as those favored by the filter.}  The $\mathcal{H}_2$ norm for correlated noise when the readout is [Sox9] is less than 1 in magnitude, indicating noise rejection at all frequencies, while uncorrelated noise is amplified at all frequencies for readout [Wnt].  Uncorrelated noise, despite being highly amplified for readouts [Bmp] and [Wnt], is rejected at frequencies higher than the upper end of the filter passband for [Sox9] and only weakly amplified at lower frequencies, perhaps as a result of the opposing influences of Bmp and Wnt on \textit{sox9} expression.  Parameters are as in \ref{fig:raspopovic_eigs} with $l = 1.7$ (see also Supplementary Table \ref{tab:supp_raspopovic_params}).  Images are normalized to the same scale (min. 0, max. 111).}
		\label{fig:raspopovic_h2}
	\end{figure}
	
	\begin{figure}[!hbtp]
		\centering
		\includegraphics[width=\columnwidth]{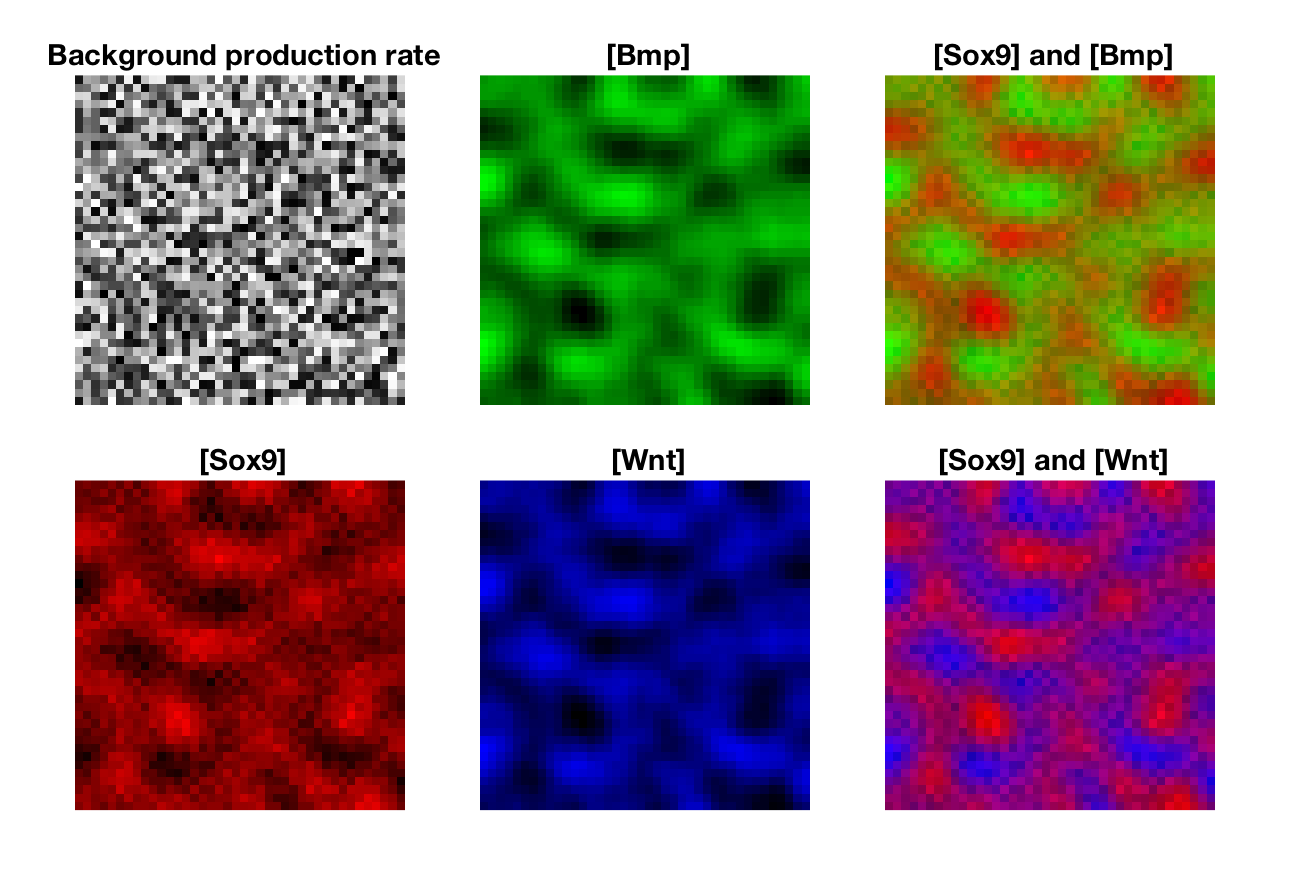}
		\caption{\textbf{A linearized, discrete 2D system with random but constant-in-time variation between points replicates predictions from a Turing reaction-diffusion model even when the parameters do not satisfy the conditions for Turing-driven instability.}  The color-coded visualization reproduces images of micromass cultures from Supplementary Figure S3 in \cite{raspopovic_et_al_2014}, showing similar periodic striped patterns.  Higher intensity corresponds to higher concentration level and intensity is normalized individually by protein species, even in the overlaid images.  The expression boundaries depicted here are not as sharp as the original Turing model due to the stability of the linearized system.  The concentration of Sox9 is out of phase with Bmp and Wnt concentrations, as seen from the overlaid images.  Parameters are as in \ref{fig:raspopovic_eigs} and \ref{fig:raspopovic_h2} with $l = 1.7$ (see also Supplementary Table \ref{tab:supp_raspopovic_params}), with a constant-in-time input background production rate input that is shared by all reactants (as in \eqref{eq:raspopovic_sys}).  For outputs [Sox9], the filter coefficient of greatest magnitude occurs at $(14,6)$ (and symmetrically also $(6,14)$; see Figure \ref{fig:raspopovic_eigs}), corresponding to a spatial mode comprising a sum of two cosines, the higher of which has period $25$\% the length of one side of the array.  As predicted therefrom, the output pattern has approximately four complete periods at an angle about 67$^{\circ}$ from horizontal.}
		\label{fig:raspopovic_color_plates}
	\end{figure}

	We pick $C$ to monitor Sox9, Bmp, or Wnt concentration and choose $s_0$ such that the Turing instability conditions are not satisfied, i.e., the eigenvalues of $\left( I_N \otimes A \right) + M \otimes \left( B_vG \right)$ are all negative.  Nevertheless, the readout still replicates the spatially periodic patterns predicted by \cite{raspopovic_et_al_2014} for a range of intercellular distances (Figure \ref{fig:raspopovic_color_plates}) owing to the bandpass behavior of the filter (Figure \ref{fig:raspopovic_eigs}).  [Sox9] is out of phase with both [Bmp] and [Wnt], as indicated by the opposing signs of the coefficients in the passband.

	The $\mathcal{H}_2$ norm measurements for the readouts qualitatively emphasize the same frequency bands as their respective filters $\left[S\right]_{kk}$ (Figure \ref{fig:raspopovic_h2}).  For correlated or uncorrelated noise sources, readouts [Bmp] and [Wnt] experience much greater magnification than does [Sox9], suggesting that the opposing effects of Bmp and Wnt on \textit{sox9} expression may mostly cancel each other out at the level of Sox9 concentration.  Relative noise amplification in the same modes favored by the $\left[S\right]_{kk}$ may ensure that stochastic influences do not counteract filter behavior, at the same time that attenuation and evenness in the response to other modes might reduce the relative influence of temporally varying inputs on the readout.  The latter especially may be useful to maintain consistent behavior in a process such as digit formation that takes place over a long timespan.
	
	While our simulations do not refute the hypothesis that a diffusion-driven instability constitutes the biological basis for digit formation, the fact that we can produce a similar pattern with an externally perturbed stable system suggests that not all apparent Turing patterns need arise from an instability.  This observation could significantly ease the search for molecules and proteins that contribute to ``spontaneous'' stripe and spot patterning, as the parameter restrictions required for true Turing instabilities may not be biologically plausible.
	
	\clearpage
	\section{APPLICATION: DIGIT FORMATION WITH A MORPHOGEN GRADIENT} \label{sec:onimaru}
	Expanding on the work of \cite{raspopovic_et_al_2014}, reference \cite{onimaru_et_al_2016} demonstrated that changes to the parameters in the proposed Turing network for digit formation in mice can produce \textit{sox9} expression patterns matching those found in embryonic catshark fins, suggesting that the mechanism has been evolutionarily conserved.  The authors augmented the model with an exponential gradient of fibroblast growth factor (Fgf), a morphogen originating at the fin edge that has been experimentally demonstrated to facilitate normal digit arrangement in mice.  In their model, Fgf represses Sox9 repression of $bmp$ expression ($k_4$) and promotes Sox9 repression of $wnt$ expression ($k_7$).  In simulation, the authors observed that increasing the ratio of Wnt production to Bmp production or decreasing Bmp promotion of $sox9$ expression caused the Turing pattern to transition from stripes to spots.
	
	We implemented the model from \cite{onimaru_et_al_2016} using the following evolution equations:
	\begin{equation}
	\begin{cases}
	\dot{F}_i(t) = \alpha_{F} + u_{1i} - \mu_{F}F_i(t) + \frac{d_f}{l^2}v_{F_i}(t) \\
	\dot{s}_i(t) = \alpha_s + u_{2i} + k_2b_i(t) - k_3z_i(t) - s_i(t)^3 \\
	\dot{b}(t) = \alpha_{bmp} + u_{2i} -k_4\left(1-k_fF_i(t)\right)s_i(t) - k_5 b_i(t) + \frac{d_b}{l^2} v_{b_i}(t) \\
	\dot{z}_i(t) = \alpha_{wnt} + u_{2i} - k_7k_fF_i(t)s_i(t) - k_9 z_i(t) + \frac{d_z}{l^2} v_{z_i}(t) \\
	\dot{w}_i(t) = \left[ \begin{matrix}
	F_i(t) \\
	b_i(t) \\
	z_i(t)
	\end{matrix}\right] \\
	v(t) = \left(M \otimes I_3\right)w(t) \
	\end{cases} \label{eq:onimaru_sys} \
	\end{equation}
	where $\alpha_F$ is the Fgf production rate, $u_1$ represents the source of Fgf, and $u_2$ is random constant-in-time spatial variation in background production rate.  Unlike \cite{onimaru_et_al_2016}, we did not normalize $F$ to $[0,1]$, but we chose $u_1, \underline{u}_1$ such that $0 \leq \tilde{F_i}^* + \underline{\bar{F}}^* \leq 1$ and $0 \leq u_{1_i} - \underline{\bar{u}}_1$.

	As compared to \eqref{eq:raspopovic_sys}, the equations in \eqref{eq:onimaru_sys} are rendered as perturbations to prior steady-state protein concentrations, therefore ``negative'' steady-state values should be interpreted as reductions in concentration relative to preexisting levels.
	
	To handle both background production rate and localized Fgf production we use the generalization to $L$ inputs
	\begin{equation}
	\hat{y}^* := -\left(I_N \otimes C\right)\left[ \left( I_N \otimes A \right) + \Lambda \otimes (B_vG) \right]^{-1}\left[\sum \limits_{k = 1}^{L}\left( I_N \otimes B_{u_k} \right)\hat{u}_k\right] \label{eq:ss_yhat_perturbed_multiple_external} \
	\end{equation}
	where $u_k \in \mathbb{R}^{N}$ is the $k$th input vector and $$B_{u_k} := \frac{\partial f}{\partial u_k} \big|_{(\underline{\bar{x}}^{*},\underline{\bar{v}}^*,\underline{\bar{u}}_1,\underline{\bar{u}}_2,...,\underline{\bar{u}}_L)}$$ is the linearization matrix for one subsystem with respect to the $k$th input when all inputs are held constant in time and space.  To avoid ambiguity, the ``filter'' interpretation is defined with respect to one input, i.e., as one term in the summation \eqref{eq:ss_yhat_perturbed_multiple_external}.
	
	The matrices for the linearization are
	\begin{align}
	A &= \left[ \begin{matrix}
	-\mu_F & 0 & 0 & 0 \\
	0 & -3\underline{\overline{s}}^{*2} & k_2 & -k_3 \\
	k_4k_f\underline{\overline{s}}^* & -k_4\left(1-k_f\underline{\bar{F}}^*\right) & -k_5 & 0 \\
	- k_7k_f\underline{\overline{s}}^* & -k_7k_f\underline{\bar{F}}^* & 0 & -k_9
	\end{matrix} \right], \nonumber \\
	B_v &= \left[ \begin{matrix}
	\frac{d_f}{l^2} & 0 & 0 \\
	0 & 0 & 0 \\
	0 & \frac{d_b}{l^2} & 0  \\
	0 & 0 & \frac{d_z}{l^2}
	\end{matrix} \right], ~~ B_{u_1} = \left[ \begin{matrix}
	1 \\
	0 \\
	0 \\
	0
	\end{matrix} \right], ~~ B_{u_2} = \left[ \begin{matrix}
	0 \\
	1 \\
	1 \\
	1
	\end{matrix} \right], ~~
	G = \left[ \begin{matrix} 1 & 0 & 0 & 0 \\ 0 & 0 & 1 & 0 \\ 0 & 0 & 0 & 1 \end{matrix} \right] \nonumber \
	\end{align}
	where the steady-state concentration of Fgf is $\underline{\bar{F}}^* = \frac{\alpha_F+ \underline{\bar{u}}_1}{\mu_F}$ independent of the other variables.  We stabilized the homogeneous steady-state solution by setting $\alpha_F = 0$ and using a small value of $\underline{\bar{u}}_1$ with the remaining parameters taken from Figures 4 and 5 in \cite{onimaru_et_al_2016}.  This choice of $\alpha_F$ completely localizes the source of Fgf to the input $u_2$.
	
	\subsection{Spatial Modes on a Hexagonal Lattice}
	The hexagonal lattice is the tightest 2D packing arrangement for cells of fixed area and is found in a number of natural systems such as the wing epithelial cells in \textit{Drosophila} \cite{classen_et_al_2005}.  For this example we will derive the lattice from a rectangular array where the columns are offset by 30$^\circ$ from vertical and assume periodic boundary conditions as well as identical spacing between all neighbors.  With the cells numbered as shown in Figure \ref{fig:hex_lattice}, the $(m,n)$th spatial mode corresponds to the $m$th mode horizontally and the $n$th mode on a line at a 60$^\circ$ angle from each row.  Cells in the hexagonal lattice interact with each of their six nearest neighbors such that there are ``diagonal interconnections'' between rows of cells.  We account for the diagonal connections as follows: Define  $C_D := min\left(N_R,N_C\right)$ and $R_D := max\left(N_R,N_C\right)$, and let $P \in \mathbb{R}^{C_D\times C_D}$ be the permutation matrix with lower diagonal ones and the last entry of the first column also one.  Define $P_F := diag\left(P^0,P^1,...,P^{R_D}\right)$.  Then the interconnection matrix for a hexagonal lattice with periodic boundary conditions is given by
	\begin{equation}
	M := \left( M_R \otimes I_R \right) + \left( I_C \otimes M_C \right) + P_F^T\left( M_F \otimes I_{R_D} \right)P_F. \nonumber \
	\end{equation}
	If we let $M_R = M_C = M_F = M_0 \in \mathbb{R}^{N_0 \times N_0}$ be the circulant diffusion matrix, then $M$ has eigenvalues
	\begin{equation}
	\left[\Lambda\right]_{mn} = -6 + 2\left(\cos\frac{2\pi m}{N_0} + \cos\frac{2\pi n}{N_0} + \cos\frac{2\pi (m-n)}{N_0}\right). \nonumber \
	\end{equation}
	From this we see that $\Lambda = \Lambda^T$ and $\left[\Lambda\right]_{mn} = \left[\Lambda\right]_{(N_0-m)(N_0-n)}$.
	
	Further discussion of diagonal interconnectivites and planar lattices more generally is available in Supplementary Sections \ref{sec:dft_diag_2d_eigs} and \ref{sec:supp_planar_lattices}.
	
	\begin{figure}
		\centering
		\includegraphics[width=0.5\columnwidth]{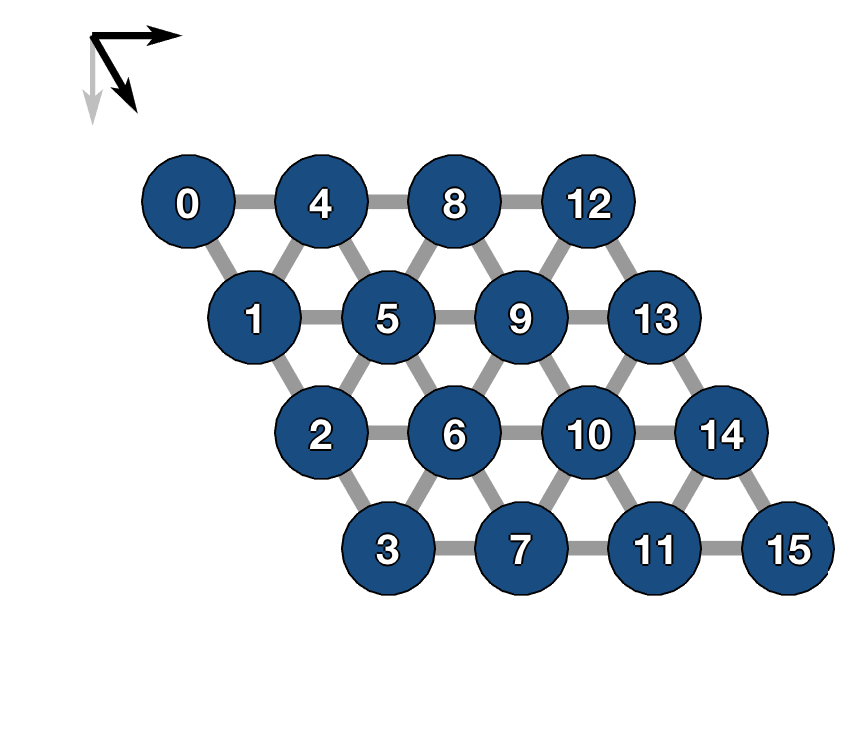}
		\caption{A $4 \times 4$ hexagonal lattice in which each cell is connected to its six nearest neighbors.  Cells are uniformly spaced along the directions indicated by the black arrows, which are separated by a 60$^\circ$ angle.  The gray arrow indicates a 90$^\circ$ angle from horizontal.} \label{fig:hex_lattice}
	\end{figure}

	\subsection{Analysis}
	
		\begin{figure}[!hbtp]
		\centering
		\includegraphics[width=0.75\columnwidth]{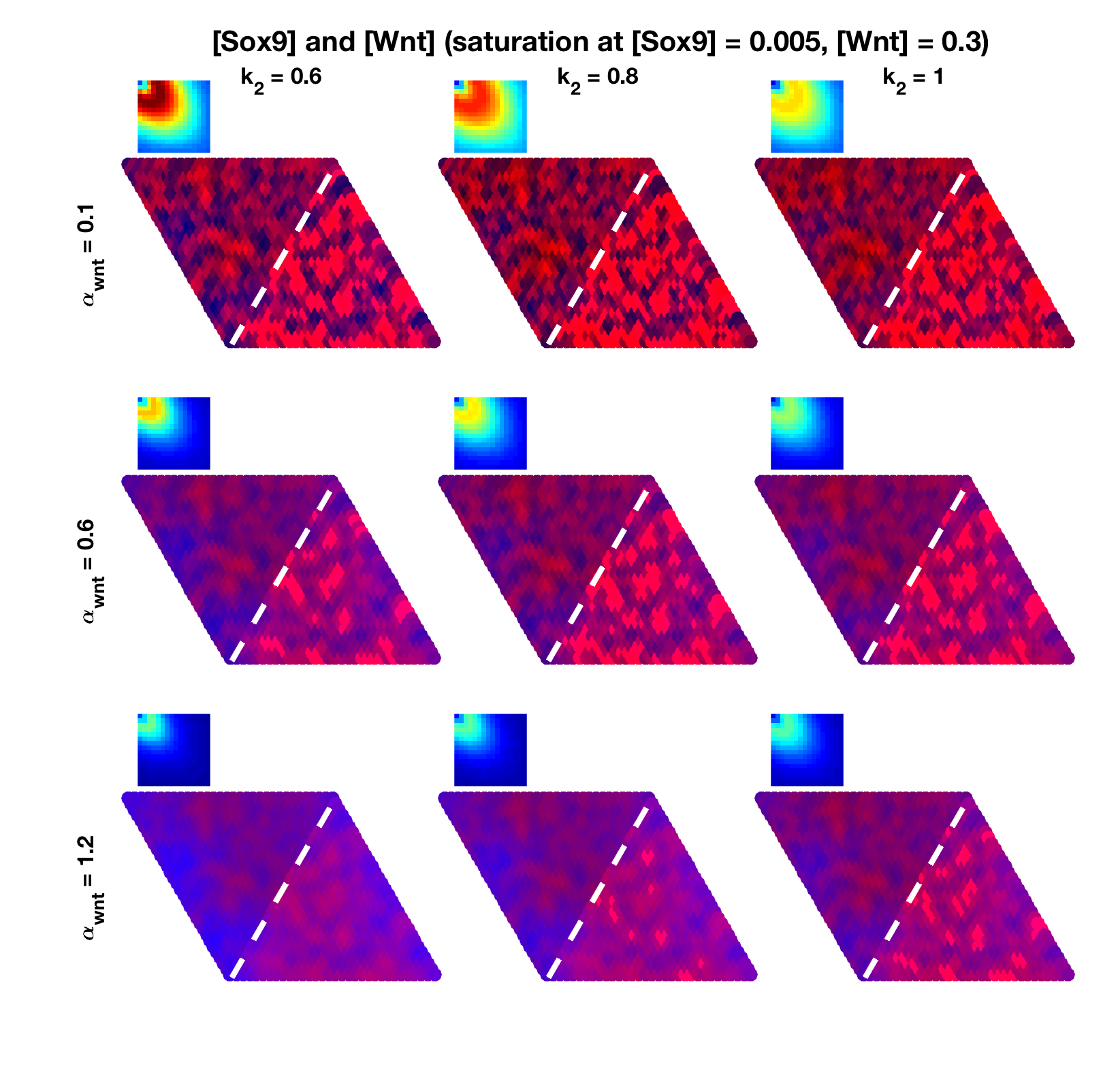}
		\caption{\textbf{Either increasing Wnt production or decreasing Bmp promotion of \textit{sox9} expression shrinks the size of contiguous high-[Sox9] regions, though filter analysis shows the two methods act through different mechanisms.}  Simulations are performed for nine $(\alpha_{wnt},k_2)$ pairs on a hexagonal lattice ($N_R = N_C = 32$) with two external inputs: a random background production rate for Sox9, Wnt, and Bmp; and an Fgf source localized to 5 columns of cells on the left.  The effect of the Fgf is visible as an increase in [Wnt] (blue) relative to [Sox9] (red) at both the left and right boundaries owing to the periodic boundary conditions.  The ``actual'' readouts, normalized across all images independently by channel, are pictured above the dotted white line; readout values below the dotted line have been post-processed to saturate at a threshold (0.005 for [Sox9], 0.3 for Wnt) and are normalized in the same fashion as (but separately from) the ``actual'' readouts.  For visual emphasis, saturated [Sox9] values are displayed at 10$\times$ the threshold intensity.  Inset heat maps display the magnitude of the filter coefficients around the bandpass (from $k_C = 0$ to $k_C = \frac{N_R}{2}$ and $k_R = 0$ to $k_R = \frac{N_C}{2}$) from input background production to readout [Sox9], each normalized to the same range (min. 0, max. 3.36).  Readouts that saturate above a certain threshold show more spotlike patterns for higher $\alpha_{wnt}$ or lower $k_2$, as observed in \cite{onimaru_et_al_2016}.  Increasing $\alpha_{wnt}$ decreases the overall amplitude of the filter and thereby shrinks the width of the passband, which suggests that spots rather than stripes may emerge when fewer cells express above a threshold.  In contrast, decreasing $k_2$ increases the maximum magnitude of the bandpass in a concentrated region, suggesting that spots may also be obtained by exaggerating differences in amplification between frequencies.  Parameters unless noted otherwise are as given in \cite{onimaru_et_al_2016}, Figures 4 and 5.}
		\label{fig:onimaru_fully_loaded}
	\end{figure}

	As in the original Sox9-Bmp-Wnt Turing model in \cite{raspopovic_et_al_2014}, the filter coefficients in the Fgf-augmented model form a bandpass at mid-range frequencies, resulting in the roughly periodic output patterning that alternates between Sox9 and Wnt (Figure \ref{fig:onimaru_fully_loaded}).  Increasing $\alpha_{wnt}$ decreases the magnitude of the bandpass, while decreasing $k_2$ concentrates amplification at a small range of frequencies inside the bandpass.  Either of these parameter changes tends to shrink contiguous regions of high [Sox9], consistent with the transition from stripes to spots observed in \cite{onimaru_et_al_2016}.  Parameters yielding more spotlike patterns also tend to suppress the influence of both correlated and uncorrelated noise for readout [Sox9], though the effect on readout [Wnt] is negligible (Supplementary Figures \ref{fig:supp_onimaru_h2_sox9_uncorr}, \ref{fig:supp_onimaru_h2_sox9_corr}, \ref{fig:supp_onimaru_h2_wnt_uncorr}, and \ref{fig:supp_onimaru_h2_wnt_corr}).  The distal edge where the source of Fgf is localized exhibits relatively higher Wnt than Sox9 expression, as observed \textit{in vivo} \cite{onimaru_et_al_2016}; in our normalized images, the effect is most visible at higher values of $\alpha_{wnt}$.
	
	 If we assume cells are approximately $12$ to $15$ $\mu$m in diameter \cite{milo_et_al_2010}, then for $\alpha_{wnt} = 1.2, k_2 = 1$ the filter and $\mathcal{H}_2$ norm analysis indicate that wavelengths of about 84 to 105 $\mu$m will be most strongly amplified in the result.  The prediction is in decent agreement with the experimental images in Figure 2 of \cite{onimaru_et_al_2016}, which exhibit periodicity on the order of 80 to 100 $\mu$m.  Some of the error may be accounted for by the difference in domain shape between filter simulations and actual limb paddles (rhomboidal vs. elliptical) as well as the presence of growth in the living animal.  Nevertheless, this observation suggests that the framework correctly identifies the range of spatial modes that will be most influential in forming the ``actual'' biological pattern.
	
	\section{CONCLUSIONS}
	\label{sec:conclusions}
	In this paper we have presented a framework to analyze how networks of interacting cells modify spatially varying inputs, either from environmental factors or intrinsic parameter variation, to produce patterned outputs.  Three biologically relevant examples indicate that qualitatively similar patterns may arise from different physical implementations (Section \ref{sec:notch-delta}), from both stable and unstable fixed points (Section \ref{sec:raspopovic}), as well as from variable filter behaviors when certain postprocessing steps are applied (Section \ref{sec:onimaru}).  Furthermore, these biological models appear robust to correlated and uncorrelated space-and-time-varying white noise inputs, a critical feature for maintaining consistency during embryonic development.
	
	We have demonstrated in a theoretical context how a filtering approach can offer insight into system behavior at an intermediate level between the exact physical implementation and the measured result.  In an experimental context, evaluating systems at the filter level may clarify when alterations to the input are capable of distinguishing between alternative explanations for an observed behavior.  For example, systems with near-identical filter coefficients are predicted to respond equivalently to inputs of all kinds (e.g., the three Notch-Delta models in Section \ref{sec:notch-delta}), suggesting that pure input-output probing is unlikely to illuminate the underlying mechanism.  Conversely, model systems with disparate filter coefficients may not vary much in their response to certain inputs but differ drastically in reponse to others, such that experiments in which inputs to the real system can be finely controlled may suffice to differentiate more accurate models from less accurate ones.  Of interest in both cases is the extent to which a particular system may impose structure upon an output pattern as compared to how much structure must be present in the prepattern.
	
	A critical assumption in our development of the filtering framework is that linearization about a homogeneous steady state is sufficient to capture relevant system behavior.  Future work should focus on incorporating nonlinear dynamics as well as investigating the influence of external inputs on spatially distributed, networked systems in the vicinity of unstable or nonhomogeneous steady states.  Additional areas for further research include patterning in time-varying or perturbed networks and system response to non-white noise inputs.  Lastly, although we have focused our applications on models in developmental biology, the generality of our framework suggests possible applications to synthetically engineered biological circuits as well.

	Overall, we believe a spatial frequency-based interpretation simplifies the process of predicting how intermolecular and intercellular interactions affect patterning mechanisms in living organisms.  It is our hope that the viewpoint developed here will help us to elucidate---and elaborate upon---nature's designs.

	\section*{ACKNOWLEDGMENTS}
	The authors would like to thank Andy Packard for providing incisive feedback on an earlier version of this work, and an anonymous reviewer for suggestions to strengthen the presentation of the manuscript.
	
	\clearpage
	\printbibliography

\begin{center}
\Huge{Supplementary Material}
\end{center}

\begin{figure}[!hbpt]
	\centering
	\includegraphics[width=0.75\columnwidth]{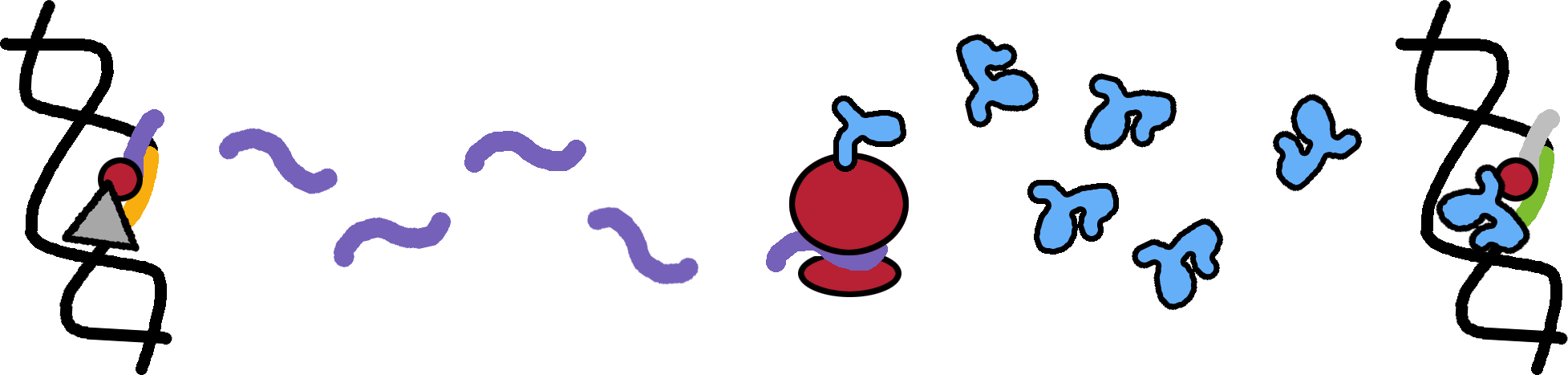}
	\caption{\textbf{A schematic of gene expression.}  An input (gray triangle) modifies the \textit{transcription rate} of mRNA (purple) from a gene (gold).  The mRNA is then \textit{translated} into protein (blue).  The proteins may in turn modify the transcription rate of some other target gene (green).  If the target gene is the same as the expressed gene, then the protein is said to be autoregulatory.  Chemical, mechanical, or electrical signals from neighboring cells may also influence transcription rates (not pictured).  Cell-to-cell variability in production rates that ``persists'' in time (i.e., is not due to the intrinsically stochastic nature of chemical interactions) may arise, for example, from variation in the concentrations of intercellular machinery (red circles) responsible for transcription and translation.} \label{fig:supp_gene_exp_schematic}
\end{figure}

\section{MORE ON FILTER COEFFICIENTS}

\subsection{Derivation of Filter Coefficients}
Here we provide a derivation of the filter coefficients from Proposition \ref{prop:filter_coeffs}.

Let $\underline{u} := \underline{\bar{u}}\mathds{1}_N$ be a spatially homogeneous input and assume $\exists~\underline{\bar{x}}^* \in \mathbb{R}^n$ such that $f\left(\underline{\bar{x}}^*,\mu g\left(\underline{\bar{x}}^*\right),\underline{\bar{u}}\right) = 0$.  Then $\underline{x}^* := \mathds{1}_N \otimes \underline{\bar{x}}^*$ is a homogeneous steady state.  The remaining steady-state quantities are similarly designated $\underline{y}^* = \underline{\bar{y}}^*\mathds{1}_N$, $\underline{\bar{w}}^* = g(\underline{\bar{x}}^*)$, and $\underline{v}^* = M_q\underline{w}^* = \mathds{1}_N \otimes \underline{\bar{v}}^*$ (where $M_q := M \otimes I_q$).  Let $\tilde{x}_i(t), \tilde{u}_i, \tilde{y}_i(t), \tilde{w}_i(t), \tilde{v}_i(t)$ denote perturbations about that steady state.  The full system linearized about $(\underline{x}^*,\underline{u})$ yields perturbed dynamics
\begin{equation}
\dot{\tilde{x}}(t) = \left[ \left( I_N \otimes A \right) + \left( I_N \otimes B_v \right)M_q\left( I_N \otimes G \right) \right]\tilde{x}(t) + \left( I_N \otimes B_u \right)\tilde{u} \label{eq:x_sys_perturbed} \
\end{equation}
where $A := \frac{\partial f}{\partial x_i} \big|_{(\underline{\bar{x}}^*,\underline{\bar{v}}^*,\underline{\bar{u}})}$, $B_v := \frac{\partial f}{\partial v_i} \big|_{(\underline{\bar{x}}^*,\underline{\bar{v}}^*,\underline{\bar{u}})}$,
$B_u := \frac{\partial f}{\partial u_i} \big|_{(\underline{\bar{x}}^*,\underline{\bar{v}}^*,\underline{\bar{u}})}$, $C := \frac{d h}{d y_i}\big|_{\underline{\bar{x}}^*}$, and $G := \frac{d g}{d x_i}\big|_{\underline{\bar{x}}^*}$ are the linearization matrices.

Assume that the interconnection matrix $M \in \mathbb{R}^{N \times N}$ is diagonalizable and let $M = T \Lambda T^{-1}$ be the diagonalization (so that $M_q$ is diagonalized by $T \otimes I_q$).  Define $\hat{x}(t) := \left(T^{-1} \otimes I_n \right)\tilde{x}(t)$, $\hat{u} := T^{-1}\tilde{u}$, and $\hat{y}(t) := T^{-1}\tilde{y}(t)$.  Recasting \eqref{eq:x_sys_perturbed} in the coordinate system $T$, we obtain the dynamical system
\begin{equation}
\dot{\hat{x}}(t) = \left[ \left( I_N \otimes A \right) + \Lambda \otimes (B_vG) \right]\hat{x}(t) + \left( I_N \otimes B_u \right)\hat{u}. \label{eq:xhat_sys_perturbed} \
\end{equation}
In contrast to the conditions for spontaneous pattern formation, we will not require this system to be unstable; large amplification of spatial modes is possible even when the system is stable.  The steady-state perturbed readout in basis $T$ is
\begin{align}
\hat{y}^* &:= -\left(I_N \otimes C\right)\left[ \left( I_N \otimes A \right) + \Lambda \otimes (B_vG) \right]^{-1}\left( I_N \otimes B_u \right)\hat{u} \nonumber \\
&=: S\hat{u}, \label{eq:ss_yhat_perturbed} \
\end{align}
where $S$ is a diagonal matrix with entries $${\left[S\right]_{kk} = -C(A+\lambda_k(M)B_vG)^{-1}B_u}$$ for $k = 0, 1, ..., N-1$ and $\left[S\right]_{kk}$ collectively form the ``filter coefficients" for the corresponding $N$ spatial modes.  The matrix $S$ is thus analogous to a digital filter that processes the input $\tilde{u}$ into readout $\tilde{y}^*$ with respect to the eigenvectors, or spatial modes, of $M$ as contained in $T$.

\begin{figure}[!hbpt]
	\centering
	\includegraphics[width=\columnwidth]{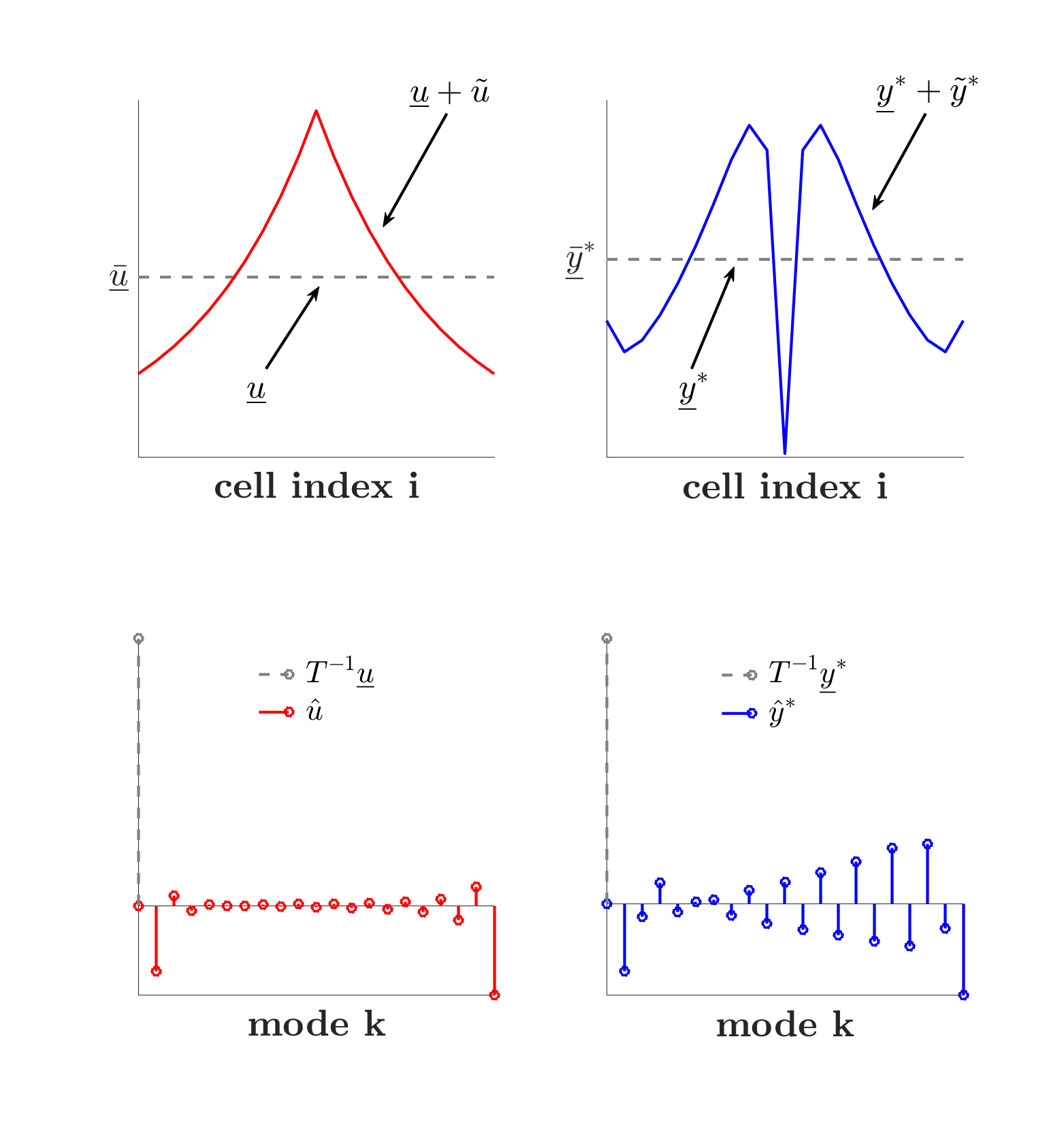}
	\caption{Schematic illustrating the variables utilized in the filter coefficient derivation.  Left, red corresponds to input-related variables; right, blue corresponds to readout-related variables.  Top, plots in the standard basis; the solid line for the input is the exact input to the full system, while the solid blue line is the approximated readout based on the filter coefficient analysis.  Bottom, plots in the basis of the spatial modes.  Dashed gray lines indicate values associated with the linearization; in this example, these constant-in-space inputs produce impulses at constant frequency (i.e., contribution to all other modes is zero---these values are not plotted).  This example uses periodic boundary conditions and sinusoidal modes given by \ref{obs:DFT_modes_real} with $k$ (from $0$ to $N-1$) indexing increasing frequency toward the middle of the x-axis, i.e., modes $k$ and $N-k$ have the same frequency.  The filter coefficients are symmetric about the midpoint but the mode representations of input and readout are not.  This asymmetry captures the ``location'' of the standard-basis input and readout relative to the cell indices $i$, since modes $k$ and $N-k$ have opposite phase (sign).} \label{fig:supp_sprinzak_labels}
\end{figure}

\subsection{Multiple Orthogonal Signals Sharing Same Spatial Modes}
Let $M_0, M_1,...,M_{q-1} \in \mathbb{R}^{N \times N}$ be the interconnection matrices for $q$ orthogonal signals and assume they all commute (share the same basis).  Let $\Delta_i \in \mathbb{R}^{q \times q}$ signify the matrix with $(i,i)$th entry one and all other entries zero, such that the full interconnectivity is
\begin{equation}
M = \sum_{i = 0}^{q-1} \left(M_i \otimes \Delta_i\right). \nonumber \
\end{equation}
If $\Lambda_i = T^{-1}M_iT$ for $i = 0, 1, ..., q-1$, then the basis $T \otimes I_q$ diagonalizes $M$ such that the filter coefficients are given by
\begin{equation}
\left[S\right]_{kk} = -C\left(A + \sum_{i = 0}^{q-1} \lambda_k\left(M_i\right)B_v\Delta_iG\right)B_u. \nonumber \
\end{equation}

\clearpage
\subsection{Derivation of Equivariance} \label{sec:symm_commute_proofs}
Here we derive the equivariance property of the input-output map comprising the filter coefficients.

Let $\tilde{y}^* = S\tilde{u}$ and $\tilde{y}_\Pi^* = S\Pi\tilde{u}$.  An equivalent statement to ``$S$ is equivariant'' is then $\tilde{y}_\Pi^* = \Pi S\tilde{u} \implies \tilde{y}_\Pi^* = \Pi\tilde{y}^*$.  To see this, take
\begin{align}
\tilde{y}_\Pi^* &= S\Pi\tilde{u} = -\left(I_N \otimes C\right)\left[ \left( I_N \otimes A \right) + M \otimes (B_vG) \right]^{-1}\left( I_N \otimes B_u \right)\Pi\tilde{u} \nonumber \\
&= -\left(I_N \otimes C\right)\left[ \left( I_N \otimes A \right) + M \otimes (B_vG) \right]^{-1}\left(\Pi \otimes I_{n}\right)\left( I_N \otimes B_u \right)\tilde{u} \nonumber \\
&= -\left(I_N \otimes C\right)\left[ \left(\Pi^{-1} \otimes I_{n}\right)\left(\left( I_N \otimes A \right) + M \otimes (B_vG)\right) \right]^{-1}\left( I_N \otimes B_u \right)\tilde{u} \nonumber \\
&= -\left(I_N \otimes C\right)\left[ \left(\Pi^{-1} \otimes I_{n}\right)\left( I_N \otimes A \right) + \Pi^{-1}M \otimes (B_vG) \right]^{-1}\left( I_N \otimes B_u \right)\tilde{u}. \label{eq:CPi-1APi-1M} \
\end{align}
Since
\begin{equation}
\left(\Pi^{-1} \otimes I_{n}\right)\left(I_N \otimes A\right) = \left(I_N \otimes A\right)\left(\Pi^{-1} \otimes I_{n}\right), \nonumber \
\end{equation}
then
\begin{equation}
M = \Pi M\Pi^{-1} \implies \Pi^{-1}M = M\Pi^{-1} \implies \Pi^{-1}M \otimes (B_vG) = M\Pi^{-1} \otimes (B_vG) \nonumber \
\end{equation}
such that \eqref{eq:CPi-1APi-1M} becomes
\begin{align}
\tilde{y}_\Pi^* &= -\left(I_N \otimes C\right)\left[\left(\left( I_N \otimes A \right) + M \otimes (B_vG) \right) \left(\Pi^{-1} \otimes I_{n}\right)\right]^{-1}\left( I_N \otimes B_u \right)\tilde{u} \nonumber \\
&= -\left(I_N \otimes C\right)\left(\Pi \otimes I_{n}\right)\left[\left( I_N \otimes A \right) + M \otimes (B_vG)\right]^{-1}\left( I_N \otimes B_u\right) \tilde{u} \nonumber \\
&= -\Pi\left(I_N \otimes C\right)\left[\left( I_N \otimes A \right) + M \otimes (B_vG)\right]^{-1}\left( I_N \otimes B_u\right) \tilde{u} \nonumber \\
&= \Pi\tilde{y}^*. \nonumber \
\end{align}

If $\Lambda = T^{-1}MT$, then since $M = \Pi M\Pi^{-1} \implies \Pi^{-1}M\Pi$, $\Lambda = T^{-1}\Pi^{-1}M\Pi T$ (i.e., $T$ diagonalizes the permuted version of $M$ with the same resultant eigenvalues).  Note that the immutability of $M$ under permutation $\Pi$ confers immutability of $\Lambda$ under permutation $T^{-1}\Pi T$, which is just the permutation in the basis of $M$; i.e.,
\begin{equation}
\Lambda = \left(T^{-1}\Pi T\right)\Lambda\left(T^{-1}\Pi T\right)^{-1}, \nonumber \
\end{equation}
or equivalently, $\Lambda$ and $\left(T^{-1}\Pi T\right)$ commute.

\clearpage
\section{TUTORIAL ON SPATIAL MODES} \label{sec:spatial_mode_tutorial}
Here we present a brief introduction to two sets of 1D spatial modes corresponding to common signal processing transforms.  These spatial modes have direct interpretations as spatial frequencies.  We then provide two useful observations for calculating the spatial modes and eigenvalues for 2D arrays with diagonal interconnections, as well as an interpretation of spatial frequencies for cells arranged in arbitrary planar lattices.

\subsection{Discrete Fourier Transform (DFT)}
If the $N$ cells form a ring indexed clockwise or counterclockwise, then $M$ is circulant.  The eigenvectors of a circulant matrix form the discrete Fourier basis such that the spatial modes of $T$ correspond exactly to the frequencies of sinusoids.

We can choose $T$ to be the discrete Fourier transform matrix (DFT) where the $j$th entry of the $k$th eigenvector, $j, k = 0, 1, ..., N-1$, is given by $$\left[T\right]_{jk} = \frac{1}{\sqrt{N}}e^{-\frac{2\pi ijk}{N}}$$ with $i := \sqrt{-1}$.  $T$ is conjugate symmetric.  If we let $m_0, ~ m_1, ~ ..., m_{N-1}$ denote the entries in the first row of $M$, then the eigenvalues of $M$ are given by $$\lambda_k\left(M\right) = \frac{1}{\sqrt{N}} \sum_{n=0}^{N-1} m_n e^{-\frac{2\pi ijk}{N}},$$
which corresponds to the coefficients of the discrete Fourier transform (DFT) of the first row of $M$.

If $M$ is symmetric in addition to circulant, then we can alternatively select the eigenvectors such that all entries are real.

\begin{observation} \label{obs:DFT_modes_real}
	Let $M \in \mathbb{R}^{N \times N}$ be a symmetric circulant matrix where $m_0 \in \mathbb{R}^N$ is the first row and define $m$ as the periodization of $m_0$.  Let the matrix $T$ have entries $$\left[T\right]_{jk} = \frac{1}{\sqrt{N}}\left(\cos\frac{2\pi jk}{N} + \sin\frac{2\pi jk}{N}\right).$$  Then $T$ is a basis for $M$ with eigenvalues $$\lambda_k\left(M\right) = \sum \limits_{n = 0}^{N-1} m_n \cos\frac{2\pi nk}{N},$$ $k = 0, 1, ..., N-1$.
\end{observation}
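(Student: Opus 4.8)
The plan is to start from the well-known complex diagonalization of a circulant matrix and then use the extra hypothesis $M = M^{\top}$ to rotate the conjugate pairs of Fourier eigenvectors into the real vectors appearing in the statement. First I would record the standard fact (e.g.\ \cite{strang_1999}) that a circulant $M$ with first row $m_0$ is diagonalized by the complex DFT matrix $F$ whose columns are $[f_k]_j = \tfrac{1}{\sqrt N} e^{-2\pi i jk/N}$, with $M f_k = \lambda_k f_k$ and $\lambda_k = \sum_{n=0}^{N-1} m_n e^{-2\pi i nk/N}$ (indices mod $N$, so the periodization of $m_0$ is exactly what enters here), and that $F$ is invertible.

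Next I would bring in the symmetry. A circulant matrix is symmetric precisely when $m_n = m_{N-n}$ for all $n$; pairing $n$ with $N-n$ in the expression for $\lambda_k$ then cancels the imaginary (sine) contribution, giving $\lambda_k = \sum_{n=0}^{N-1} m_n \cos\frac{2\pi nk}{N} \in \mathbb{R}$ — the claimed eigenvalue formula — and at the same time $\lambda_k = \lambda_{N-k}$. Since also $f_{N-k} = \overline{f_k}$, the columns of $T$ will be real linear combinations of eigenvectors sharing a common eigenvalue. Indeed, expanding $\cos\theta + \sin\theta$ in terms of $e^{\pm i\theta}$ with $\theta = 2\pi jk/N$ shows $t_k = \tfrac{1+i}{2} f_k + \tfrac{1-i}{2} f_{N-k}$ for every $k$; this also covers the self-conjugate indices $k = 0$ and, when $N$ is even, $k = N/2$, where the sine term vanishes and $t_k = f_k$. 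Hence $M t_k = \tfrac{1+i}{2}\lambda_k f_k + \tfrac{1-i}{2}\lambda_{N-k} f_{N-k} = \lambda_k t_k$, so each column of $T$ is an eigenvector of $M$ with the stated eigenvalue.

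It then remains to check that $T$ is genuinely an eigenbasis, i.e.\ invertible. I would do this by writing $T = FQ$, where $Q \in \mathbb{C}^{N\times N}$ is, up to a common permutation of rows and columns, block diagonal: a $2\times 2$ block $\tfrac12\begin{pmatrix} 1+i & 1-i \\ 1-i & 1+i \end{pmatrix}$ for each conjugate pair $\{k, N-k\}$ with $k \neq 0, N/2$, and a $1\times 1$ block $[1]$ for $k = 0$ and (when $N$ is even) $k = N/2$. Each $2\times 2$ block has determinant $i \neq 0$, so $Q$ is invertible, and therefore so is $T = FQ$. (An equally clean alternative is to verify directly that the columns of $T$ are orthonormal, using $\sum_j \cos\frac{2\pi j(k-l)}{N} = 0$ and $\sum_j \sin\frac{2\pi j(k+l)}{N} = 0$ for $k \neq l$.) The only place that needs care anywhere in the argument is the bookkeeping around the index involution $k \mapsto N-k$ — in particular, separating out the fixed indices $k = 0$ and $k = N/2$ so that the $2\times 2$ block structure is invoked only where it applies — but no genuine difficulty arises beyond that.
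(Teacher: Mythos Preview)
Your argument is correct and shares its starting point with the paper --- both express $T$ through the complex DFT matrix --- but the execution differs. The paper writes $T = \tfrac12\bigl[W + W^H + e^{-i\pi/2}W + (e^{-i\pi/2}W)^H\bigr]$ and then expands the full matrix product $T^{-1}MT = TMT$, simplifying via identities such as $WW = W^HW^H$ until the expression collapses to $W^HMW$; it implicitly relies on $T$ being orthogonal so that $T^{-1} = T$. You instead work column by column: you isolate the degeneracy $\lambda_k = \lambda_{N-k}$ from the symmetry $m_n = m_{N-n}$, write $t_k = \tfrac{1+i}{2}f_k + \tfrac{1-i}{2}f_{N-k}$ as a combination of eigenvectors sharing that eigenvalue, and conclude $Mt_k = \lambda_k t_k$ immediately. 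Your separate invertibility check via $T = FQ$ with block-diagonal $Q$ is cleaner than the paper's tacit use of orthogonality, and your approach makes the role of the conjugate pairing $k \leftrightarrow N-k$ (and the special indices $0$, $N/2$) more transparent. The paper's route, on the other hand, yields the matrix identity $TMT = W^HMW$ in one stroke, which simultaneously encodes the eigenvector and orthogonality statements.
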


\begin{proof}
	Let $W$ be the unitary DFT matrix, i.e., the $j$th entry of the $k$th column is $$W_k^j = \frac{1}{\sqrt{N}}e^{\frac{i2\pi jk}{N}}.$$
	
	We can express $T$ as $$T = \frac{1}{2}\left[ W + W^H + e^{\frac{-i\pi}{2}}W + \left(e^{\frac{-i\pi}{2}}W\right)^H\right].$$  Then
	\begin{align}
	T^{-1}MT ={}& T^HMT = TMT \nonumber \\
	={}& \frac{1}{4}\left[ W + W^H + e^{\frac{-i\pi}{2}}W + \left(e^{\frac{-i\pi}{2}}W\right)^H\right]M\left[ W + W^H + e^{\frac{-i\pi}{2}}W + \left(e^{\frac{-i\pi}{2}}W\right)^H\right] \nonumber \\
	\begin{split} \nonumber
	={}& \frac{1}{4} \left(W + W^H\right)M\left(W + W^H\right) + \frac{1}{4}\left(e^{\frac{-i\pi}{2}}W + e^{\frac{i\pi}{2}}W^H\right)M\left(e^{\frac{-i\pi}{2}}W + e^{\frac{i\pi}{2}}W^H\right) + \\
	&\frac{1}{4}\left(W + W^H\right)M\left(e^{\frac{-i\pi}{2}}W + e^{\frac{i\pi}{2}}W^H\right) + \frac{1}{4}\left(e^{\frac{-i\pi}{2}}W + e^{\frac{i\pi}{2}}W^H\right)M\left(W + W^H\right)
	\end{split} \\
	\begin{split} \nonumber
	={}& \frac{1}{2}W^HMW + \frac{1}{2}WMW^H + \frac{1}{2}\cos\frac{\pi}{2}\left[W^HMW + WMW^H\right] + \\ &\frac{1}{2}e^{\frac{i\pi}{2}}W^HMW^H+\frac{1}{2}e^{\frac{-i\pi}{2}}WMW
	\end{split} \\
	={}& \frac{1}{2}W^HMW + \frac{1}{2}WMW^H + \frac{1}{2}e^{\frac{i\pi}{2}}W^HMW^H+\frac{1}{2}e^{\frac{-i\pi}{2}}WMW. \label{eq:WMWx4} \
	\end{align}
	Since $M$ is real and even (symmetric), the DFT is also real.  The symmetry of $M$ together with the symmetry of $W$ and the fact that diagonal matrices are symmetric also imply that $W^HMW = (W^HMW)^T = W^TM^TW^{*T} = WMW^H$, so we can somewhat simplify \eqref{eq:WMWx4} to
	\begin{align}
	{}& W^HMW + \frac{1}{2}\left[ e^{\frac{i\pi}{2}}W^HMW^H + e^{\frac{-i\pi}{2}}WMW\right] \nonumber \\
	={}& W^HMW + \frac{1}{2}\left[ e^{\frac{i\pi}{2}}\left(W^HMW\right)W^HW^H + e^{\frac{-i\pi}{2}}\left(WMW^H\right)WW\right] \nonumber \\
	={}& W^HMW + W^HMW\frac{1}{2}\left[ e^{\frac{i\pi}{2}}W^HW^H + e^{\frac{-i\pi}{2}}WW\right]. \label{eq:WMW-1/2} \
	\end{align}
	For $k = 1, 2, ..., \frac{N}{2}$ (N odd) or $k = 1, 2, ..., \frac{N-1}{2}$ (N even), the $(N - k)$th row or column of $W$ is equal to the $k$th row or column of $W^H$.  Therefore $\left[WW\right]_{k,N-k} = 1$.  Because of the orthogonality of complex exponentials, the remaining entries are $0$.  By the same logic we deduce an identical structure for $W^HW^H$ such that $WW = W^HW^H$.  Now \eqref{eq:WMW-1/2} becomes
	\begin{align}
	{}& W^HMW + W^HMW\frac{1}{2}\left[ e^{\frac{i\pi}{2}}WW + e^{\frac{-i\pi}{2}}WW\right] \nonumber \\
	={}& W^HMW + W^HMW\left(WW\right)\cos\frac{\pi}{2} \nonumber \\
	={}& W^HMW, \nonumber \
	\end{align}
	which is just $M$ diagonalized by the complex exponential DFT matrices, as desired.  From this we derive that the eigenvalues are the same as the DFT coefficients of $h_0$, which owing to symmetry may be calculated as
	\begin{equation}
	\lambda_m\left(M\right) = \sum \limits_{n = 0}^{N-1} m_n \cos\frac{2\pi nm}{N}. \nonumber \
	\end{equation}
\end{proof}

Owing to the periodicity of cosine, the eigenvalues of symmetric circulant $M$ (and hence the corresponding $\left[S\right]_{kk}$) are symmetric about the highest-frequency eigenvector associated with $k = \frac{N}{2}$.

A situation of particular interest occurs when $v_i$ is a diffusible molecule and the connection strength is equal between cells.  Then $M$ is a scaled version of the circulant finite differences (Laplacian) matrix
\begin{equation}
M = \left[ \begin{matrix}
-2 & 1 & 0 & \hdotsfor{1} & 0 & 1 \\
1 & -2 & 1 & \hdotsfor{1} & 0 & 0 \\
\vdots & \vdots & \vdots & \ddots & \vdots & \vdots \\
1 & 0 & 0 & \hdotsfor{1} & 1 & -2
\end{matrix} \right] \nonumber \
\end{equation}
with eigenvalues $$\lambda_k(M) = -2 + 2\cos{\frac{2\pi k}{N}}.$$  This form of $M$ corresponds to the second differences matrix for a system with periodic boundary conditions \cite{strang_1999}.

 \begin{figure}[!hbtp]
	\centering
	\includegraphics[width=\columnwidth]{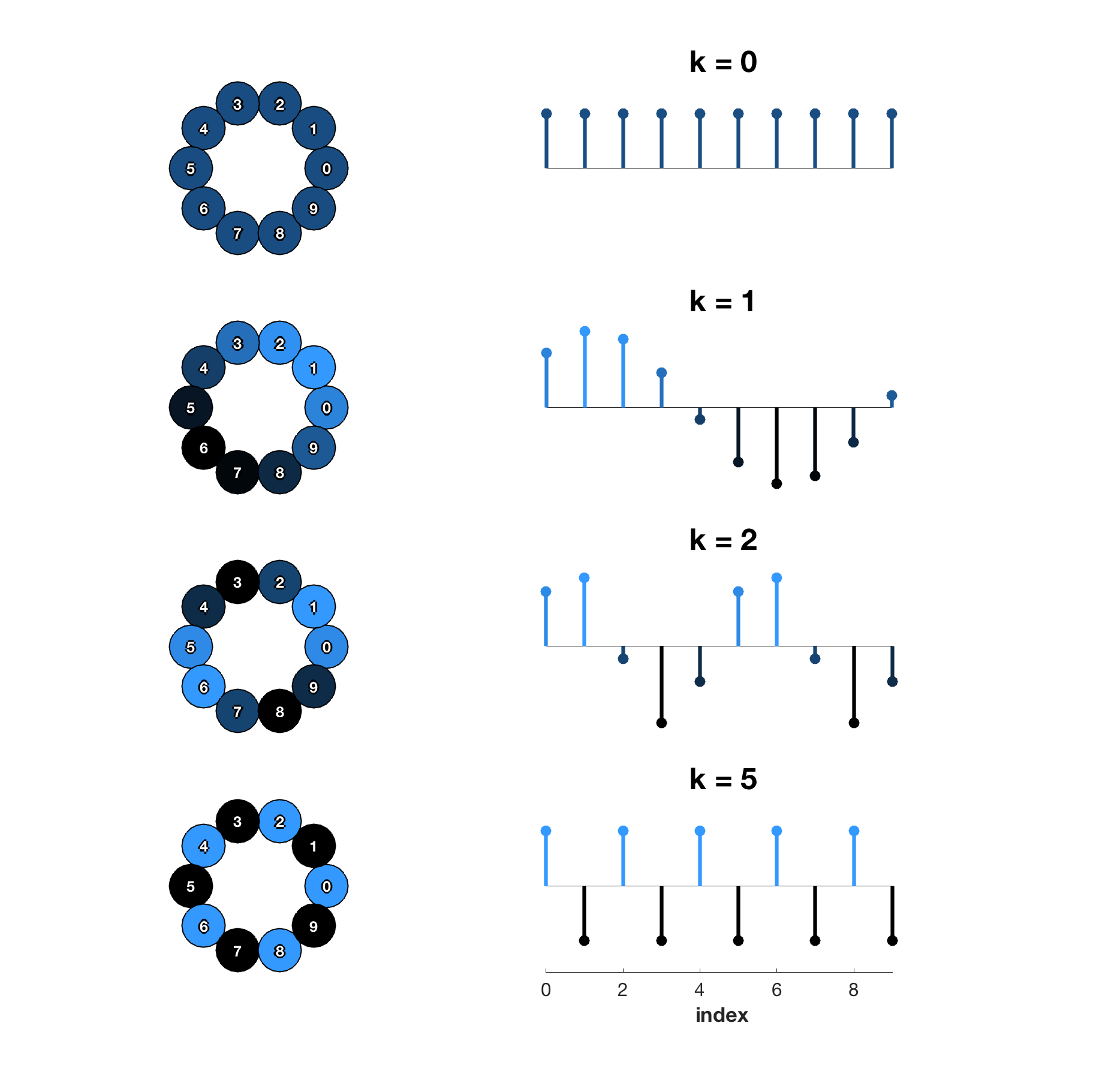}
	\caption{Sample spatial modes for a ring interconnectivity with $N = 10$ cells.  Each cell is connected to each of its two neighbors with equal connection strength.  $M$ is circulant, so the eigenvectors form the discrete Fourier basis such that the spatial modes of $T$ correspond exactly to the frequencies of sinusoids.  The value $k$ corresponds to the spatial frequency, or the number of complete periods present in a single cycle around the ring.  Because the basis is the discrete Fourier transform (DFT) and $N$ is even, the highest frequency is $\frac{N}{2} = 5$.  Neighboring cells in this mode alternate between two values.  Such a pattern is not possible in a ring configuration when $N$ is odd.}
	\label{fig:ring_dft_modes}
\end{figure}

\begin{figure}[!hbtp]
	\centering
	\includegraphics[width=\columnwidth]{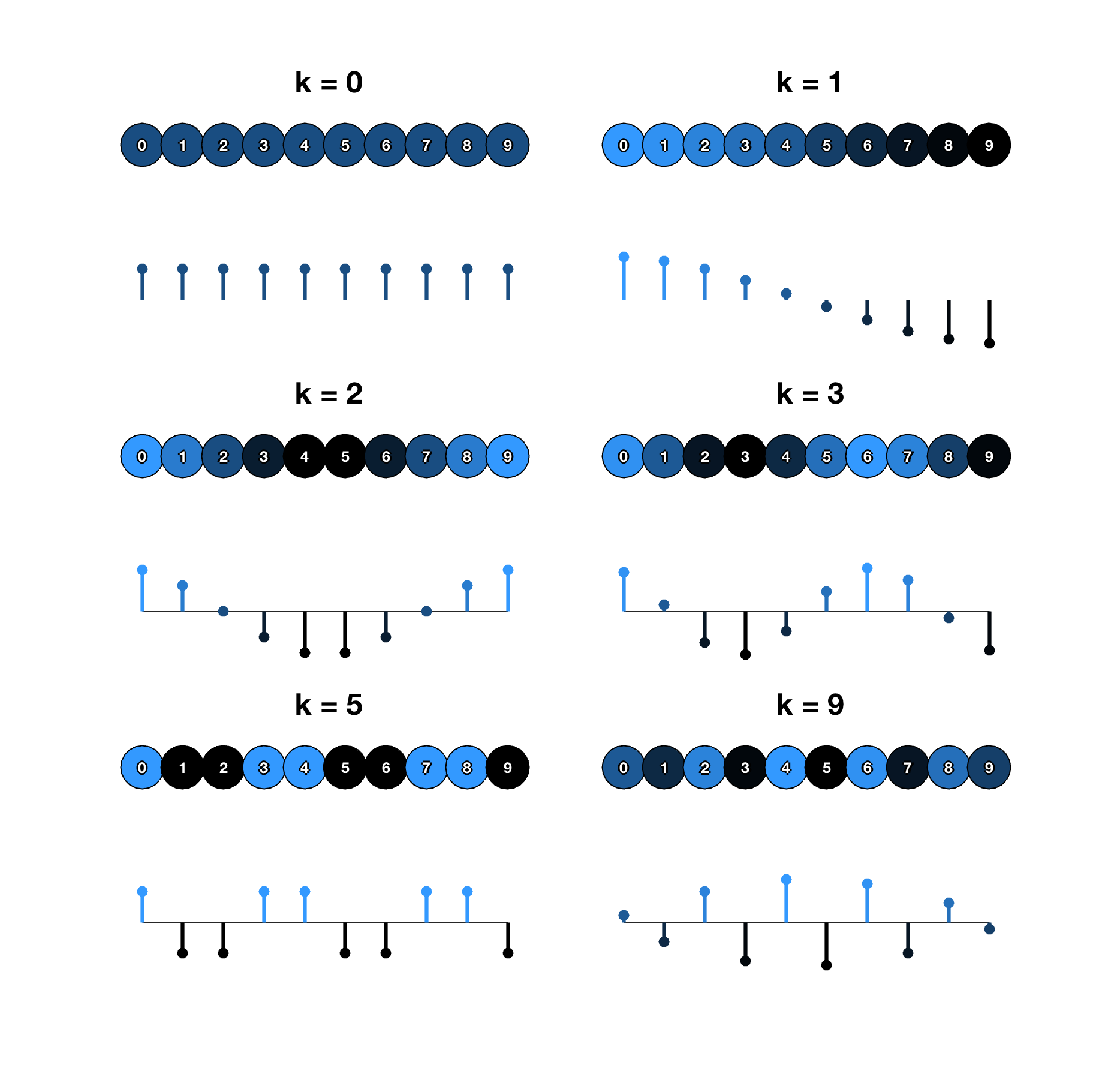}
	\caption{Sample spatial modes for a line interconnectivity with $N = 10$ cells.  The two cells on the end each interact with only one neighbor such that the basis vectors are the DCT-2 vectors.  The value $k$ corresponds to twice the frequency of its corresponding mode, i.e., $\frac{k}{2}$ periods are represented in mode $k$.  ``Even'' modes ($k$ even) have symmetry about the midpoint between cells $\frac{N}{2}-1$ and $\frac{N}{2}+1$, while ``odd'' modes ($k$ odd) are antisymmetric about this same point.  If $N$ were odd, the midpoint would instead be the $\left(\frac{N-1}{2}\right)$th cell.}
	\label{fig:line_dct2_modes}
\end{figure}

\begin{figure}[!hbtp]
	\centering
	\includegraphics[width=\columnwidth]{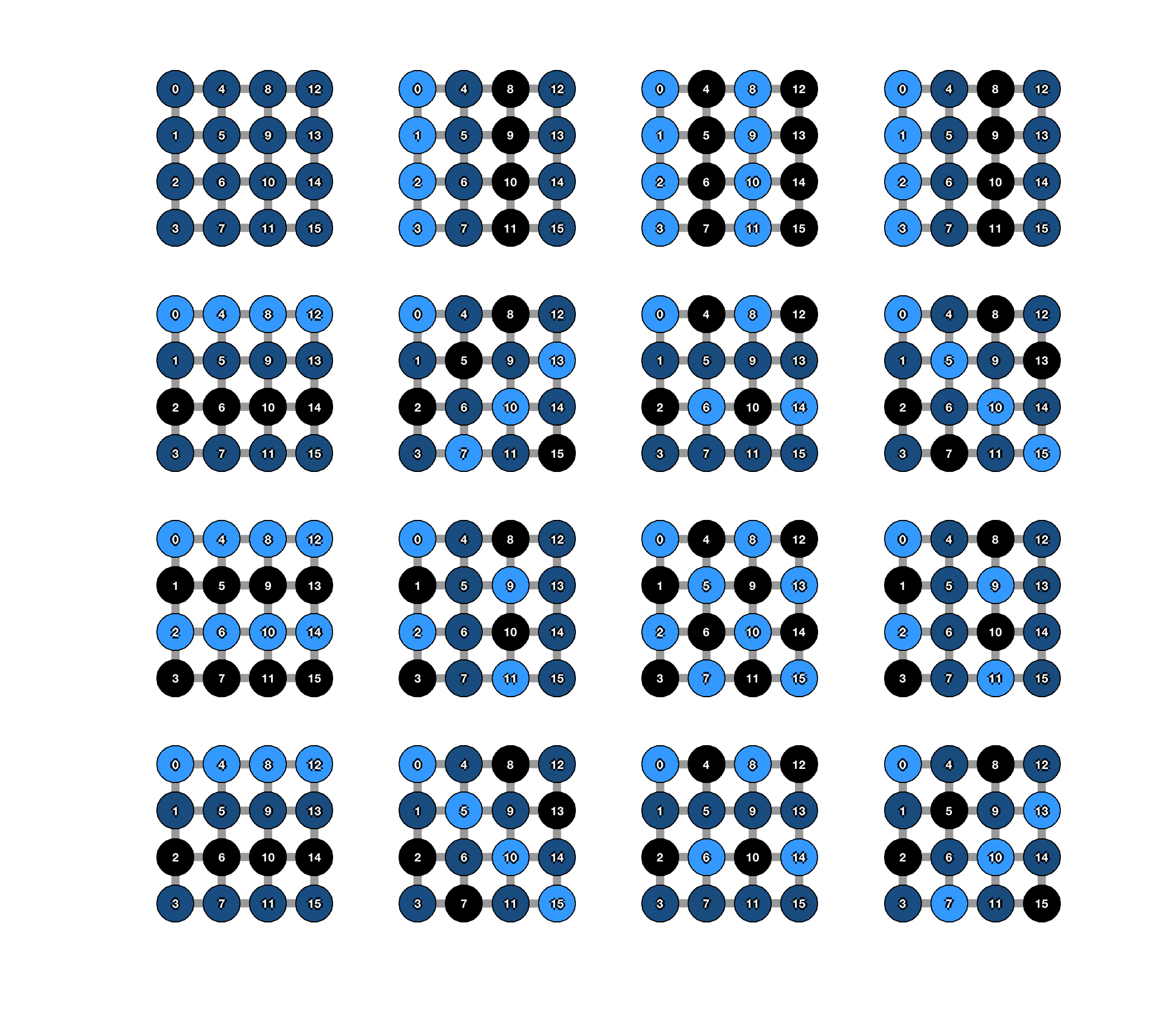}
	\caption{A complete set of spatial modes for a 2D periodic boundary interconnectivity (DFT basis) on a $4 \times 4$ rectangular array.  Due to the symmetry in the eigenvectors, modes $(m,n)$ and $(m,4-n)$ are identical, as are $(m,n)$ and $(4-m,n)$.}
	\label{fig:rect_modes_dft}
\end{figure}

\subsection{Second Discrete Cosine Transform (DCT-2)}
If the $N$ cells are organized in a line, then the two cells on the end each communicate with only one neighbor.  If the mode of communication is a diffusible molecule and the cells are indexed from one end of the line to the other, then the connectivity takes the form of a second differences matrix with Neumann boundary conditions centered at the midpoint:
\begin{equation}
M = \left[ \begin{matrix}
-1 & 1 & 0 & \hdotsfor{1} & 0 & 0 & 0 \\
1 & -2 & 1 & \hdotsfor{1} & 0 & 0 & 0 \\
\vdots & \vdots & \vdots & \ddots & \vdots & \vdots & \vdots \\
0 & 0 & 0 & \hdotsfor{1} & 1 & -2 & 1 \\
0 & 0 & 0 & \hdotsfor{1} & 0 & 1 & -1
\end{matrix} \right]. \nonumber \
\end{equation}
The spatial modes of $T$ form the basis for the second discrete cosine transform (DCT-2).  The $j$th entry of the $k$th eigenvector, $j, k = 0, 1, ..., N-1$, is given by $$\left[T\right]_{jk} = \sqrt{\frac{2}{N}}\cos\left[\left(j+\frac{1}{2}\right)\frac{k\pi}{N}\right]$$ (for $k = 0$, divide by additional factor of $\sqrt{2}$) with corresponding eigenvalue $$\lambda_k(M) = -2 + 2\cos\frac{k\pi}{N}.$$  The highest frequency is $k = N-1$ \cite{strang_1999}.  Unlike the case of circulant $M$, there are no guarantees of symmetry in the filter $S$ and $T$ itself is not conjugate symmetric.

\clearpage
\subsection{Eigenvectors and Eigenvalues for Arrays with Diagonal Interconnections} \label{sec:dft_diag_2d_eigs}
\begin{observation} \label{obs:dft_diag_2d_eigvecs}
	Consider an $N_R \times N_C$ array of cells.  Let $M_F$ describe the interconnectivity of $C_D := min(N_R,N_C)$ elements in the forward diagonal direction and $M_B$ the interconnectivity of $C_D$ elements in the backward diagonal direction.  Let $P \in \mathbb{R}^{C_D \times C_D}$ be the permutation matrix with lower diagonal ones and the last entry of the first column also one, such that $P_F := P_B^T := diag\left(P^0, P^1, P^2, ..., P^{R_D} \right)$ where $R_D := max(N_R,N_C)$.  In total, the interconnectivity of an array with horizontal, vertical, and diagonal components is described by
	\begin{equation}
	M := \left( M_R \otimes I_{N_R} \right) + \left( I_C \otimes M_C \right) + P_F^T\left( M_F \otimes I_{R_D} \right)P_F + P_B^T\left( M_B \otimes I_{R_D} \right)P_B \nonumber \
	\end{equation}
	where $M_R$, $M_C$, $M_F$, and $M_B$ are circulant.  Furthermore, if $N_R = N_C = N_0$ and the real or complex DFT basis $T_0$ diagonalizes each of $M_R$, $M_C$, $M_F$, and $M_B$ individually, then $\left(T_0 \otimes T_0\right)$ diagonalizes $M$.
\end{observation}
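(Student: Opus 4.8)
The plan is to prove the diagonalization claim by showing that, when $N_R=N_C=N_0$, \emph{each} of the four summands making up $M$ is diagonalized by $T_0\otimes T_0$; since matrices sharing an eigenbasis are simultaneously diagonalizable (their eigenvalues simply adding on matching eigenvectors), this gives the result, and the eigenvalue at index $(m,n)$ can then be read off as a sum of the form $\lambda_n(M_R)+\lambda_m(M_C)+\lambda_{m-n}(M_F)+\lambda_{m+n}(M_B)$ (indices mod $N_0$), recovering the type of formula already recorded for the hexagonal lattice. The horizontal and vertical summands are immediate: by hypothesis $M_R=T_0\Lambda_R T_0^{-1}$ and $M_C=T_0\Lambda_C T_0^{-1}$, so the mixed-product property of the Kronecker product gives $M_R\otimes I_{N_0}=(T_0\otimes T_0)(\Lambda_R\otimes I_{N_0})(T_0\otimes T_0)^{-1}$ and $I_{N_0}\otimes M_C=(T_0\otimes T_0)(I_{N_0}\otimes\Lambda_C)(T_0\otimes T_0)^{-1}$. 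All the content is therefore in the two ``diagonal'' summands; the displayed formula for $M$ itself merely records the modeling choice for the array's interconnections, with $P_F$ and $P_B$ the permutations that relabel cells along the two diagonal directions so that a diagonal coupling becomes expressible as a Kronecker product, and checking it is pure bookkeeping with these relabellings.

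The next step is to unwind the staircase permutation $P_F=\diag(P^0,P^1,\dots,P^{N_0-1})$. Writing a global index as a pair $(a,b)$ with $a$ the block index and $b$ the within-block index, $P_F$ acts by $(P_F x)_{(a,b)}=x_{(a,\,b-a)}$ with indices taken mod $N_0$ --- precisely the relabelling of cells along diagonals. A short index calculation then shows that the $(a,a')$ block of $P_F^{T}(M_F\otimes I_{N_0})P_F$ equals $[M_F]_{aa'}$ times a fixed power of $P$ depending only on $a-a'\bmod N_0$. Since $M_F$ is circulant, $[M_F]_{aa'}$ also depends only on $a'-a\bmod N_0$, so the whole $(a,a')$ block depends only on $a'-a$ and is itself a scalar multiple of a power of $P$, hence circulant. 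Thus $P_F^{T}(M_F\otimes I_{N_0})P_F$ is block-circulant with every block circulant; equivalently it is a linear combination $\sum_{j}c^{(F)}_{j}\,(P^{p_j}\otimes P^{q_j})$ of Kronecker products of powers of $P$, where $c^{(F)}$ is the first row of $M_F$ and each of $p_j,q_j$ equals $j$ or $-j$ according to the orientation conventions chosen for $P$ and for the forward diagonal. The same computation with $P_F^{T}$ in place of $P_F$ puts the backward-diagonal term $P_B^{T}(M_B\otimes I_{N_0})P_B=P_F(M_B\otimes I_{N_0})P_F^{T}$ into the same form.

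Finally, the complex $N_0$-point DFT diagonalizes the cyclic shift $P$, so $T_0\otimes T_0$ simultaneously diagonalizes $P\otimes I_{N_0}$ and $I_{N_0}\otimes P$, hence every product $P^{p}\otimes P^{q}$, hence every matrix that is block-circulant with circulant blocks --- in particular all four summands and therefore their sum $M$, with the eigenvalues obtained by substituting the DFT eigenvalues of $P$. I expect the middle step to be the main obstacle: one must keep careful track, modulo $N_0$, of exactly which power of $P$ lands in each $(a,a')$ block after conjugating by $P_F$, and verify that the entries of $M_F\otimes I_{N_0}$ away from its block diagonal do not disturb the block-circulant pattern. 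A secondary wrinkle is the ``real DFT'' reading of the hypothesis: the real basis of Observation~\ref{obs:DFT_modes_real} does not diagonalize the bare shift $P$, so there one either restricts to symmetric circulant $M_F,M_B$ --- for which $P_F^{T}(M_F\otimes I_{N_0})P_F$ is a symmetric block-circulant-with-circulant-blocks matrix, still diagonalized by the real two-dimensional DFT $T_0\otimes T_0$ --- or simply notes that, since $T_0$ is assumed to diagonalize each of $M_R,M_C,M_F,M_B$, the exact conjugation used in the complex case applies verbatim.
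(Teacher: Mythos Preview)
Your overall strategy---reduce to the block-circulant-with-circulant-blocks structure $\sum c_{pq}\,P^p\otimes P^q$ and then invoke that the complex DFT diagonalizes every power of $P$---is exactly the paper's approach for the complex case, and your block-index bookkeeping reproduces the paper's identity $P_F^{-1}(M_0\otimes I)P_F=\sum_q m_q\,(P^q\otimes P^{-q})$ (and $P^q\otimes P^q$ for the backward direction). So for the complex DFT the proposal is correct and matches the paper.

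The gap is in your handling of the real DFT basis $T_0'$. Neither of your two suggested fixes works. The second one---``the exact conjugation used in the complex case applies verbatim''---fails because the complex argument does not merely use that $T_0$ diagonalizes $M_F$; it uses that $T_0$ diagonalizes the \emph{shift} $P$ itself, which the real basis $T_0'$ does not. The first one---``a symmetric block-circulant-with-circulant-blocks matrix is still diagonalized by $T_0'\otimes T_0'$''---is false in general. Concretely, the forward-only term with symmetric $M_F$ contributes summands $(P^q\otimes P^{-q})+(P^{-q}\otimes P^q)$, and the identity
\[
(P^q\otimes P^{-q})+(P^{-q}\otimes P^q)=\tfrac12\bigl[(P^q+P^{-q})\otimes(P^q+P^{-q})-(P^q-P^{-q})\otimes(P^q-P^{-q})\bigr]
\]
shows an unavoidable tensor product of \emph{skew}-symmetric circulants, which $T_0'$ does not diagonalize. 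The paper's proof confronts this directly: it observes that the real basis fails on a single diagonal direction and then shows that when \emph{both} forward and backward diagonals are present with the same circulant $M_0$, the combined term is $\sum_q m_q\,P^q\otimes(P^q+P^{-q})$, which---after pairing $q$ with $N_0-q$ using $m_q=m_{-q}$---collapses to sums of $(P^q+P^{-q})\otimes(P^q+P^{-q})$, i.e., tensor products of symmetric circulants that $T_0'\otimes T_0'$ does diagonalize. You should replace your real-DFT paragraph with this argument (or restrict the real-basis claim to the case $M_F=M_B$ with both diagonals present).
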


\begin{proof}
	Let $T_0$ diagonalize $M_0$ such that $\left(T_0 \otimes T_0\right)$ diagonalizes $\left(M_R \otimes I_{N_R}\right) + \left(I_C \otimes M_C\right)$.  Since $M_0$ is circulant,
	\begin{equation}
	M_0 = \left[ \begin{matrix}
	m_0 & m_{N_R-1} & m_{N_R-2} & \dots & m_1 \\
	m_{1} & m_0 & m_{N_R-1} & \dots & m_2 \\
	\vdots & \vdots & \vdots & \ddots & \vdots \\
	m_{N_R-1} & m_{N_R-2} & m_{N_R-3} & \dots & m_0
	\end{matrix} \right] = m_0P^0 + m_1P + m_2P^2 + ... + m_{N_R-1}P^{N_R-1}\nonumber \
	\end{equation}
	and therefore
	\begin{align}
	P_F^{-1}\left( M_0 \otimes I_{N_R} \right)P_F &= P_F^{-1}\left( \sum \limits_{q = 0}^{N_R-1} m_q P^q \otimes I_{N_R} \right)P_F \\
	&= \sum \limits_{q = 0}^{N_R-1} m_q P_F^{-1}\left(P^q \otimes I_{N_R}\right)P_F.
	\end{align}
	If we expand $P_F$ and use the fact that $P^{-1} = P^{N_R-1} = P^T$, then the summation simplifies to
	\begin{align}
	\sum \limits_{q = 0}^{N_R-1} m_q \left(P^q \otimes P^{-q}\right) \label{eq:sum_mq_Pq_P-q} \
	\end{align}
	which is diagonalized by the complex exponential DFT vectors $T_0^{-1}$ and $T_0$ as follows:
	\begin{align}
	\left( T_0^{-1} \otimes T_0^{-1} \right) \left[ \sum \limits_{q = 0}^{N_R-1} m_q \left(P^q \otimes P^{-q}\right) \right] \left( T_0 \otimes T_0 \right) &= \sum \limits_{q = 0}^{N_R-1} m_q \left( T_0^{-1} \otimes T_0^{-1} \right) \left(P^q \otimes P^{-q}\right) \left( T_0 \otimes T_0 \right) \nonumber \\
	&= \sum \limits_{q = 0}^{N_R-1} m_q \left( T_0^{-1} P^q T_0 \right) \otimes \left( T_0^{-1} \otimes P^{-q} T_0 \right), \label{eq:sum_mq_TPqT_TP-qT} \
	\end{align}
	which is a sum of diagonal matrices because permutation matrices are circulant and therefore diagonalized by the DFT matrices $T_0$, and the Kronecker product of two diagonal matrices is diagonal.  Since $P_B = P_F^T = P_F^{-1}$, the same derivation for diagonal connectivity in the backward direction gives
	\begin{equation}
	P_B^{-1}\left(M_0 \otimes I_{N_R}\right)P_B = \sum \limits_{q = 0}^{N_R-1} m_q \left(P^q \otimes P^q\right), \label{eq:sum_mq_Pq_Pq} \
	\end{equation}
	which is also diagonalized by DFT matrices as
	\begin{equation}
	\sum \limits_{q = 0}^{N_R-1} m_q \left( T_0^{-1} P^q T_0 \right) \otimes \left( T_0^{-1} \otimes P^{q} T_0 \right) \nonumber \
	\end{equation}
	where each summand is diagonal.  This implies that the complex exponential form of the DFT is the basis for an array that is diagonally connected in either or both directions.
	
	If $M_0$ is symmetric, then in addition to the complex exponential basis $T_0$ we might also choose the basis $T'_0$ with the $j$th entry of $k$th eigenvector given by
	\begin{equation}
	T'_0(k) = \frac{1}{\sqrt{R}}\left(\cos\frac{2\pi jk}{R} + \sin\frac{2\pi jk}{R}\right), \label{eq:real_DFT_eigenvec} \
	\end{equation}
	in which case each term in the summation \eqref{eq:sum_mq_TPqT_TP-qT} is no longer diagonal because $T'_0$ does not diagonalize the nonsymmetric permutation matrices.  Hence for the real-valued basis $T'_0$ we require that the array is diagonally connected in both forward and backward directions such that the full connectivity matrix is given by the sum of \eqref{eq:sum_mq_Pq_Pq} and \eqref{eq:sum_mq_Pq_P-q}:
	\begin{equation}
	\sum \limits_{q = 0}^{N_R-1} m_q \left(P^q \otimes P^{-q}\right) + \sum \limits_{q = 0}^{N_R-1} m_q \left(P^q \otimes P^q\right) = \sum \limits_{q = 0}^{N_R-1} m_q P_q \otimes \left( P^{-q} + P^q \right). \label{eq:sum_mq_Pq_P-q+Pq} \
	\end{equation}
	The matrix $\left( P^{-q} + P^q \right)$ is circulant and symmetric and hence diagonalized by $T'_0$.  To complete the argument we appeal to the symmetry of $m_q$.  Specifically, if $N_R$ is odd,
	\begin{align}
	\sum \limits_{q = 0}^{N_R-1} m_q &P^q \otimes \left( P^{-q} + P^q \right) \nonumber \\
	&= 2m_0I_{N_R} + \sum \limits_{q = 1}^{\frac{N_R-1}{2}} m_q P_q \otimes \left( P^{-q} + P^q \right) + \sum \limits_{q = \frac{N_R-1}{2} + 1}^{N_R-1} m_q P^q \otimes \left( P^{-q} + P^q \right) \nonumber \\
	&= 2m_0I_{N_R} + \sum \limits_{q = 1}^{\frac{N_R-1}{2}} m_q P^q \otimes \left( P^{-q} + P^q \right) + m_{R-q} P^{R-q} \otimes \left( P^{q-R} + P^{R-q} \right) \nonumber \\
	&= 2m_0I_{N_R} + \sum \limits_{q = 1}^{\frac{N_R-1}{2}} m_q P^q \otimes \left( P^{-q} + P^q \right) + m_q P^{-q} \otimes \left( P^q + P^{-q} \right) \nonumber \\
	&= 2m_0I_{N_R} + \sum \limits_{q = 1}^{\frac{N_R-1}{2}} m_q \left(P^q + P^{-q}\right) \otimes \left( P^{-q} + P^q \right), \nonumber \
	\end{align}
	where each individual term is diagonalized by $T'_0$, and hence the whole summation is diagonal.  If $N_R$ is even, the summation \eqref{eq:sum_mq_Pq_P-q+Pq} breaks into
	\begin{equation}
	2m_0I_{N_R} + m_{\frac{N_R}{2}}P^{\frac{N_R}{2}} \otimes \left(P^{\frac{N_R}{2}} + P^{-\frac{N_R}{2}} \right) + \sum \limits_{q = 1}^{\frac{N_R-1}{2}} m_q \left(P^q + P^{-q}\right) \otimes \left( P^{-q} + P^q \right). \nonumber \
	\end{equation}
	When $N_R$ is even, $P^{R/2}$ alone is circulant symmetric, hence the additional term is also diagonalized by $T'_0$.  Therefore when $M_0$ is circulant symmetric, the matrix
	\begin{equation}
	M = P_F^T\left( M_0 \otimes I_{N_R} \right)P_F + P_B^T\left( M_0 \otimes I_{N_R} \right)P_B \nonumber \
	\end{equation}
	is diagonalized by $T':= T'_0 \otimes T'_0$.  Note that this implies
	\begin{equation}
	\left( M_0 \otimes I_{N_R} \right) + \left( I_{N_R} \otimes M_0 \right) + P_F^T\left( M_0 \otimes I_{N_R} \right)P_F + P_B^T\left( M_0 \otimes I_{N_R} \right)P_B \nonumber \
	\end{equation}
	is also diagonalized by $T'$.
\end{proof}

\begin{siamremark}
	For a system with diagonal connections only ($M_R = M_C = 0$), then if $N_R = N_C$ odd, the diagonal transformations are identical to a 2D DFT rotated 45$^{\circ}$.  For $N_R = N_C$ even, the array becomes divided into two separate classes that are transformed separately; i.e., the underlying network graph is no longer connected.  This is because for $N_R = N_C$ odd, the array has a compartment at the center, while for $N_R = N_C$ even, the center would (in physical space) represent a crossing of intersections.
\end{siamremark}

\begin{observation} \label{obs:dft_diag_2d_eigvals}
	Let $N_C = N_R = N_0$, $M_0$ circulant with complex exponential basis vectors $T_0$ and consider the full forward diagonal interconnection matrix $M = P_F^T\left(M_0 \otimes I_{N_0}\right)P_F$.  The $(m,n)$th eigenvalue of $M$ is
	\begin{equation}
	\lambda_{m + nN_0}\left(M\right) =
	\begin{cases}
	m_0 + 2 \sum\limits_{q = 0}^{\frac{N_0-1}{2}}m_q\cos\frac{2\pi q\left(n-m\right)}{N_0}, ~ N_0 ~ \text{odd}, \\
	m_0 + m_{\frac{N_0}{2}}(-1)^{n-1} + 2 \sum\limits_{q = 0}^{\frac{N_0}{2}-1}m_q\cos\frac{2\pi q\left(n-m\right)}{N_0}, ~ N_0 ~ \text{even},
	\end{cases} \nonumber \
	\end{equation}
	where $m_k$ is the $k$th entry of the first row or column of $M_0$.
\end{observation}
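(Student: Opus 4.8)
The plan is to read the eigenvalues straight off the diagonalization already obtained in the proof of Observation~\ref{obs:dft_diag_2d_eigvecs}, and then rewrite the resulting complex exponential sum in the stated real form using the symmetry of the circulant $M_0$.

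First I would recall from the proof of Observation~\ref{obs:dft_diag_2d_eigvecs} that $M = P_F^T\left(M_0 \otimes I_{N_0}\right)P_F = \sum_{q=0}^{N_0-1} m_q\left(P^q \otimes P^{-q}\right)$ and that $T_0 \otimes T_0$ diagonalizes it, where $T_0$ is the complex-exponential DFT matrix of size $N_0$. The cyclic-shift permutation $P$ acts on the $k$th DFT eigenvector by multiplication by $\omega^k$ for a fixed primitive $N_0$th root of unity $\omega$; the choice $\omega = e^{\pm 2\pi i/N_0}$ depends only on the ordering convention for the entries of $M_0$ and is immaterial once we symmetrize. Hence $P^q \otimes P^{-q}$ scales the eigenvector indexed by $(m,n)$ --- global index $m+nN_0$ --- by $\omega^{qm}\omega^{-qn} = \omega^{q(m-n)}$, and summing the $N_0$ summands gives
$$\lambda_{m+nN_0}(M) = \sum_{q=0}^{N_0-1} m_q\,\omega^{q(m-n)}.$$

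Next, I would use that the circulant $M_0$ of interest is symmetric, so its generating entries satisfy $m_q = m_{N_0-q}$. Setting $\ell := m-n$ and pairing the index $q$ with $N_0-q$, the identity $\omega^{(N_0-q)\ell} = \omega^{-q\ell}$ shows that each such pair contributes $m_q\bigl(\omega^{q\ell}+\omega^{-q\ell}\bigr) = 2m_q\cos\frac{2\pi q(m-n)}{N_0} = 2m_q\cos\frac{2\pi q(n-m)}{N_0}$, the last step by evenness of the cosine. Finally I would collect the terms that are not paired. When $N_0$ is odd, only $q=0$ is unpaired (contributing $m_0$) and the rest form $(N_0-1)/2$ pairs, yielding the odd-case formula. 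When $N_0$ is even, $q=0$ again contributes $m_0$, the index $q=N_0/2$ is self-paired because $P^{N_0/2}=(P^{N_0/2})^{-1}$ and contributes $m_{N_0/2}\,\omega^{(N_0/2)(m-n)} = m_{N_0/2}(-1)^{n-m}$, and the indices $1,\dots,N_0/2-1$ pair off as before, yielding the even-case formula.

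I do not anticipate any essential difficulty: all of the structural work is inherited from Observation~\ref{obs:dft_diag_2d_eigvecs}. The two points that require care are tracking the sign/ordering convention for the action of $P$ on the DFT basis --- which is cosmetic, since the exponential sum is symmetrized into cosines --- and checking, in the even case, that the middle index $q=N_0/2$ is genuinely self-paired (so that its term is counted exactly once, with the factor $(-1)^{n-m}$) rather than double-counted together with a distinct partner.
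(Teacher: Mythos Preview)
Your approach is essentially the paper's: both pass through the identity $M=\sum_{q} m_q\,(P^{q}\otimes P^{-q})$, diagonalize by the DFT basis, and then pair $q$ with $N_0-q$ via the symmetry $m_q=m_{N_0-q}$ to collapse the exponentials into cosines; the paper merely carries out the same computation with explicit block matrices $E_k^q$ rather than your eigenvalue-action shorthand. One small point worth flagging: your computation correctly produces $m_{N_0/2}(-1)^{n-m}$ for the self-paired middle term in the even case, whereas the stated formula has $(-1)^{n-1}$ --- this is a typo in the paper (its own block $E_n^{N_0/2}$ has $m$th diagonal entry $(-1)^{n-m}$), so your expression is the right one and you should not try to reconcile it with the printed exponent.
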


\begin{proof}
	For $N_C = N_R = N_0$ and $M_F$ as defined above, the $k$th entry of the $N_0$-point DFT of the first row of $P^q$ is $e^{\frac{2\pi ikq}{N_0}}$, $k = 0, 1, ..., N_0$ and the diagonalization $T_0^{-1}P^qT_0$ is the matrix with the DFT entries on the diagonal.  This implies that we can write \eqref{eq:sum_mq_TPqT_TP-qT} as
	\begin{align}
	&\left(T_0^{-1} \otimes T_0^{-1}\right)P_F^T\left(I_{N_0} \otimes M_0\right)P_F\left(T_0 \otimes T_0\right) = \sum \limits_{q = 0}^{N_0 - 1} m_q\left(T_0^{-1}P^qT_0\right) \otimes \left(T_0^{-1}P^{-q}T_0\right) \nonumber \\
	&= \sum \limits_{q = 0}^{N_0 - 1} m_q \left[\begin{matrix}
	1 & & & \\
	& e^{\frac{2\pi iq}{N_0}} & & & \\
	& & e^{\frac{4\pi iq}{N_0}} & & \\
	& & & \ddots & \\
	& & & & e^{\frac{2\pi iq(N_0-1)}{N_0}}
	\end{matrix}\right] \otimes \left[\begin{matrix}
	1 & & & \\
	& e^{\frac{-2\pi iq}{N_0}} & & & \\
	& & e^{\frac{-4\pi iq}{N_0}} & & \\
	& & & \ddots & \\
	& & & & e^{\frac{-2\pi iq(N_0-1)}{N_0}}
	\end{matrix}\right] \nonumber \\
	&=: \sum \limits_{q = 0}^{N_0 - 1} m_q \left[\begin{matrix}
	I_N & & & \\
	& E_1^q & & & \\
	& & E_2^q & & \\
	& & & \ddots & \\
	& & & & E_{N_0-1}^q
	\end{matrix}\right], \label{eq:mqEEE} \
	\end{align}
	where we have defined
	\begin{equation}
	E_k^q := \left[\begin{matrix}
	e^{\frac{2\pi iqk}{N_0}} & & & \\
	& e^{\frac{2\pi iq(k-1)}{N_0}} & & & \\
	& & e^{\frac{2\pi iq(k-2)}{N_0}} & & \\
	& & & \ddots & \\
	& & & & e^{\frac{2\pi iq(k-(N_0-1))}{N_0}}
	\end{matrix}\right]. \nonumber \
	\end{equation}
	For $N_0$ odd, we use the symmetry $m_q = m_{-q}$ to rewrite the summation \eqref{eq:mqEEE} as
	\begin{equation}
	m_0 I_{N_0^2} + \sum \limits_{q = 1}^{\frac{N_0 - 1}{2}} m_q \left[\begin{matrix}
	2I_N & & & \\
	& E_1^q + E_1^{-q} & & & \\
	& & E_2^q + E_2^{-q} & & \\
	& & & \ddots & \\
	& & & & E_{N_0-1}^q + E_{N_0-1}^{-q}
	\end{matrix}\right]. \nonumber \
	\end{equation}
	Conveniently,
	\begin{equation}
	E_k^q + E_k^{-q} = \left[\begin{matrix}
	2\cos{\frac{2\pi iqk}{N_0}} & & & \\
	& 2\cos{\frac{2\pi iq(k-1)}{N_0}} & & & \\
	& & 2\cos{\frac{2\pi iq(k-2)}{N_0}} & & \\
	& & & \ddots & \\
	& & & & 2\cos{\frac{2\pi iq(k-(N_0-1))}{N_0}}
	\end{matrix}\right], \nonumber \
	\end{equation}
	from which we infer
	\begin{equation}
	\lambda_{m + nN_0}\left(P_F^{-1}\left(M_0 \otimes I_{N_0}\right)P_F\right) = m_0 + 2 \sum\limits_{q = 0}^{\frac{N_0-1}{2}}m_q\cos\frac{2\pi q\left(n-m\right)}{N_0} \nonumber \
	\end{equation}
	is the eigenvalue for the $\left(m,n\right)$th spatial mode owing to diagonal connectivity, $N_0$ odd.  If $N_0$ is even, we write \eqref{eq:mqEEE} as
	\begin{align}
	m_0 I_{N_0^2} +
	&m_{\frac{N_0}{2}}\left[\begin{matrix}
	I_N & & & \\
	& E_1^{\frac{N_0}{2}} & & \\
	& & \ddots & \\
	& & & E_{N_0-1}^{\frac{N_0}{2}}
	\end{matrix}\right] + \nonumber \\
	&\sum \limits_{q = 1}^{\frac{N_0}{2}-1} m_q \left[\begin{matrix}
	2I_N & & & \\
	& E_1^q + E_1^{-q} & & \\
	& & \ddots & \\
	& & & E_{N_0-1}^q + E_{N_0-1}^{-q}
	\end{matrix}\right], \nonumber \\
	\end{align}
	and note that
	\begin{equation}
	E_k^{\frac{N_0}{2}} = \left[\begin{matrix}
	(-1)^k & & & \\
	& (-1)^{k-1} & & \\
	& & \ddots & \\
	& & & (-1)^{k-(N_0-1)}
	\end{matrix}\right] \nonumber \
	\end{equation}
	to write the $(m,n)$th eigenvalue as
	\begin{equation}
	\lambda_{m + nN_0}\left(M_F\right) = m_0 + m_{\frac{N_0}{2}}(-1)^{n-1} + 2 \sum\limits_{q = 0}^{\frac{N_0}{2}-1}m_q\cos\frac{2\pi q\left(n-m\right)}{N_0}. \nonumber \
	\end{equation}
\end{proof}

\subsection{Spatial Modes on Planar Lattices} \label{sec:supp_planar_lattices}
It is straightforward to generalize a frequency-based interpretation to cells arranged in periodic planar lattices, which are well described mathematically.  Throughout the following discussion we will refer to coordinates in physical space as $\hat{e}$, the unit vector pointing ``east,'' and $\hat{s}$, the unit vector pointing ``south''.  This choice of vector orientations mimics the numbering scheme in an array, whereby indices increase horizontally left to right (with $\hat{e}$) and vertically top to bottom (with $\hat{s}$).  We will assume a system of cells indexed in an $N_R \times N_C$ array with periodic boundary conditions such that $M$ is diagonalized by $T_R \otimes T_C$ where both $T_C$ and $T_R$ are real or complex DFT bases of appropriate dimension.

Let cells in physical space be arranged in a planar lattice described by vectors $a_R$ and $a_C$ corresponding respectively to the rows and columns of the indexed array.  Without loss of generality we orient $a_R$ along $\hat{e}$ (such that $a_R \cdot \hat{e} = |a_R|$).  We define the unit vectors $\hat{a}_R := \frac{a_R}{|a_R|} = \hat{e}$ and $\hat{a}_C := \frac{a_C}{|a_C|}$.  Letting $\theta$ be the angle between $a_R$ and $a_C$, we can write $a_R = |a_R|\hat{e}$ and $a_C = |a_C|\cos\theta \hat{e} + |a_C| \sin\theta \hat{s}$.  Note that the eigenfunctions are periodic in $n$ with period $N_C|a_R|$ along $a_R$ and periodic in $m$ with period $N_R|a_C|$ along $a_C$.

\begin{observation}
	For an $N_R \otimes N_C$ cellular lattice with lattice vectors $a_R$, $a_C$ and periodic boundary conditions, the $(m,n)$th spatial mode corresponds to a plane wave of frequency
	\begin{equation}
	\frac{n}{|a_R|N_C}\hat{a}_R + \frac{m}{|a_C|N_R}\hat{a}_C =: 
	f_{a_R}\hat{a}_R + f_{a_C}\hat{a}_C = \left(f_{a_R} + f_{a_C}\cos\theta\right)\hat{e} + f_{a_C}\sin\theta\hat{s} \nonumber \
	\end{equation}
	in physical space, with an ``absolute'' frequency of
	\begin{equation}
	f = \sqrt{\left|f_{a_R} + f_{a_C}\cos\theta\right|^2 + \left|f_{a_C}\sin\theta\right|^2} \nonumber \
	\end{equation}
	pointing at an angle
	\begin{equation}
	\phi = \tan^{-1}\left(\frac{f_{a_C}\sin\theta}{f_{a_R}+f_{a_C}\cos\theta}\right) \nonumber \
	\end{equation}
	from the $\hat{x}$ axis.
\end{observation}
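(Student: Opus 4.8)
The plan is to make the separable structure of the $(m,n)$th mode explicit and then transcribe it from array-index coordinates into physical space. First I would write the mode $T_C^m T_R^{n\,T}$ componentwise: the value at the cell in row $r$ and column $c$ is $[T_C]_{rm}[T_R]_{cn}$, which by the complex DFT formula (or by the real $\cos$--$\sin$ basis of Observation~\ref{obs:DFT_modes_real}) is, up to normalization, $e^{2\pi i(rm/N_R + cn/N_C)}$ --- a factor $e^{2\pi i rm/N_R}$ depending only on the row index $r$ times a factor $e^{2\pi i cn/N_C}$ depending only on the column index $c$. This separability is the crux of everything that follows.

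Next I would introduce the physical placement: the cell in row $r$, column $c$ sits at $\mathbf{r}_{r,c} = c\,a_R + r\,a_C$, so stepping along a row advances the position by $a_R$ and stepping down a column by $a_C$. Along a single row the mode completes $n$ oscillations over the $N_C|a_R|$ units spanned by the row, hence oscillates at spatial frequency $f_{a_R}=n/(|a_R|N_C)$ in the direction $\hat a_R$; symmetrically it oscillates at $f_{a_C}=m/(|a_C|N_R)$ along $\hat a_C$. Since in the oblique coordinates $(c,r)$ the mode is a product of two pure complex exponentials, and a product of exponentials is the exponential of the sum, the mode is a plane wave whose frequency vector is the superposition $f_{a_R}\hat a_R+f_{a_C}\hat a_C$; this agrees with the ordinary Cartesian wave vector exactly when $a_R\perp a_C$ (for instance the rectangular/Neumann array of Section~\ref{sec:spatial_modes_2D}), and otherwise is the frequency content resolved along the two lattice directions.

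The conversion to the $(\hat e,\hat s)$ frame is then bookkeeping: substituting $\hat a_R=\hat e$ and $\hat a_C=\cos\theta\,\hat e+\sin\theta\,\hat s$ gives the components $(f_{a_R}+f_{a_C}\cos\theta)\hat e+f_{a_C}\sin\theta\,\hat s$, whence the absolute frequency $f=\sqrt{|f_{a_R}+f_{a_C}\cos\theta|^2+|f_{a_C}\sin\theta|^2}$ is the Euclidean norm and $\phi=\tan^{-1}\!\big(f_{a_C}\sin\theta/(f_{a_R}+f_{a_C}\cos\theta)\big)$ the angle from $\hat e$.

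The main obstacle, and the step I would treat most carefully, is the passage from ``separable product of two one-dimensional sinusoids along $a_R$ and $a_C$'' to ``single plane wave with frequency vector $f_{a_R}\hat a_R+f_{a_C}\hat a_C$'': for the complex DFT basis this is a one-line identity once one adopts the convention that each one-dimensional frequency is carried along its own lattice vector, but for the real $\cos$--$\sin$ basis of Observation~\ref{obs:DFT_modes_real} one must additionally reduce the product of trigonometric polynomials to the same wave-vector form, and in either case state plainly that ``frequency vector'' here means this lattice-resolved object rather than an orthogonally decomposed one (the two coinciding only for $\theta=\pi/2$). Everything after that identification --- collecting $\hat e$ and $\hat s$ components, taking the norm, and taking the arctangent --- is routine.
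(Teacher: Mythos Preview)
The paper states this observation without proof; it is followed only by a remark on the sampling interpretation, not by any derivation. Your proposal therefore supplies what the paper omits, and the line of argument---write the $(m,n)$th mode as the separable product $[T_C]_{rm}[T_R]_{cn}\propto e^{2\pi i(rm/N_R+cn/N_C)}$, locate the cell at $\mathbf{r}_{r,c}=c\,a_R+r\,a_C$, read off the oscillation rates $n/(|a_R|N_C)$ and $m/(|a_C|N_R)$ along the two lattice directions, then convert $\hat a_R,\hat a_C$ into $\hat e,\hat s$ and take norm and angle---is correct and is the natural way to justify the statement.

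Your caveat about the oblique case is also well taken and goes beyond what the paper says: the vector $f_{a_R}\hat a_R+f_{a_C}\hat a_C$ is the frequency content \emph{resolved along the lattice directions}, and it coincides with the true Cartesian wave vector (the $\mathbf f$ satisfying $\mathbf f\cdot a_R=n/N_C$, $\mathbf f\cdot a_C=m/N_R$, which lives in the reciprocal lattice) only when $\theta=\pi/2$. The paper does not flag this distinction, so your explicit acknowledgment of the convention is a useful clarification rather than a defect in your argument.
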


\begin{siamremark}
	One may liken the translation from physical space into matrix space to ``sampling'' in space from an underlying pattern with ``spatial sampling frequency'' $\frac{1}{|a_R|}$ along $a_R$ and $\frac{1}{|a_C|}$ along $a_C$.  The translation into spatial modes, or the DFT, recovers normalized frequency components from the discrete samples.  Maintaining a constant surface area but increasing the number of cells occupying that surface area (i.e., $N_R' = c_RN_R$, $N_C' = c_CN_C$, $a_R' = \frac{1}{C_R}a_R$, $a_C' = \frac{1}{c_C}a_C$) does not change the physical range of space over which the modes are described but does increase the ``resolution'' or ``sampling rate'' of the system by a factor of $c_R$ along $a_R$ and $c_C$ along $a_C$, enabling the system to modify higher frequencies than before and therefore permitting finer filtering of a continuous-in-space input gradient.
\end{siamremark}

\begin{figure}[!hbpt]
	\centering
	\includegraphics[width=\columnwidth]{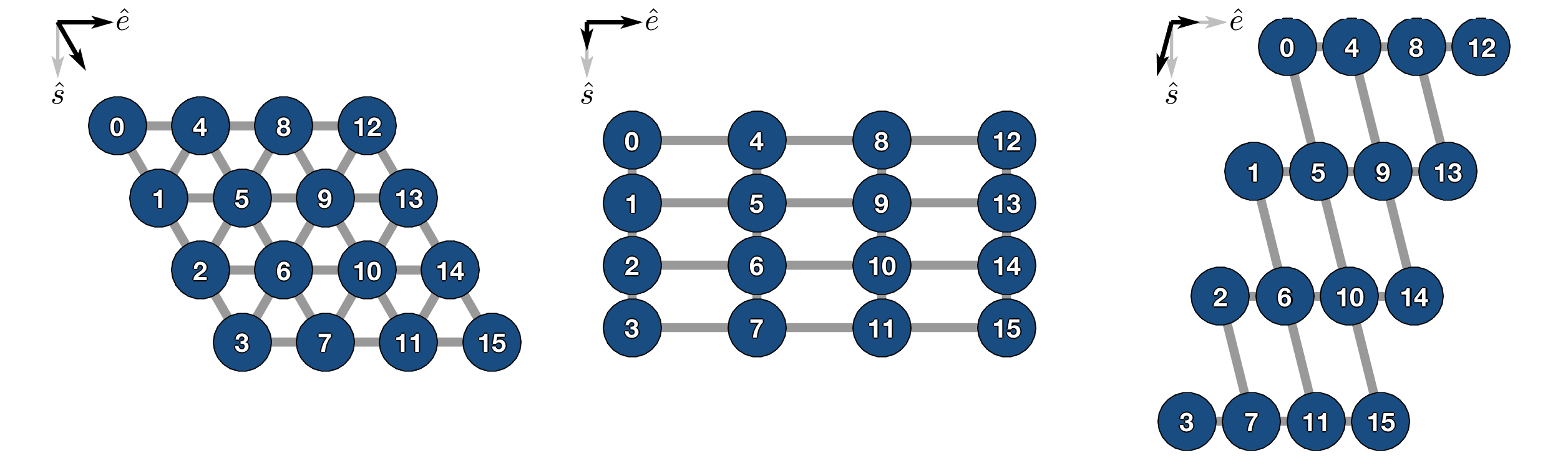}
	\caption{Examples of lattice configurations.  Left, hexagonal ($\theta = 60^{\circ}$) with nearest-neighbor (row/column/forward diagonal) interconnectivity; center, rectangular ($\theta = 90^{\circ}$) with $|a_1| = 2|a_2|$ and row/column interconnectivity; right, an arbitrary lattice with $\theta = 105^{\circ}$, $|a_2| = 2|a_1|$, and row/backward diagonal interconnectivity.}
\end{figure}

\clearpage
\section{NOTCH-DELTA MODELS} \label{sec:D-N_model_calcs}

	\begin{table}[!bp]
	\centering
	\scalebox{0.75}{
	\begin{tabular}{| c | c | c | c |} \hline
		\bf Parameter & \bf Value & \bf Description & Source \\ \hline
		$\alpha_N$ & 10 & ``leakiness'' of Notch expression (RFU/hr) & \cite{sprinzak_et_al_2011} Table S1 (Figure S4A) \\
		$\beta_{D0}$ & 17.5 & max. Delta production rate (RFU/hr) & \cite{sprinzak_et_al_2010} Table S3 (Figure 4C) \\
		$x_0$ & 7 & number of cell diameters & \cite{sprinzak_et_al_2010} Table S3 (Figure 4C) \\
		$\beta_{D_i}$ & $\beta_{D0}e^{-|i|/x_0}$ & Delta production rate (RFU/hr) for cell $i$ & \cite{sprinzak_et_al_2010} (Figure 4C) \\
		$\underline{\bar{\beta}}_D$ & 9.09 & Delta production rate (RFU/hr) for linearization &  $\frac{1}{N}\sum_{i = 0}^{N-1} \beta_{D_i}$ \\
		$\beta_N$ & 10 & Notch production rate (RFU/hr) & \cite{sprinzak_et_al_2010} Table S3 (Figure 4C) \\
		$\beta_R$ & 150 & reporter production rate (RFU/hr) & \cite{sprinzak_et_al_2010} Table S3 (Figure 4C) \\
		$\gamma$ & 0.1 & Notch, Delta decay rate (1/hr) & \cite{sprinzak_et_al_2010} Table S3 (Figure 4C) \\
		$\gamma_R$ & 0.05 & reporter decay rate (1/hr) & \cite{sprinzak_et_al_2010} Table S3 (Figure 4C) \\
		$k_c$ & 0.25 & inverse cis-interaction strength & \cite{sprinzak_et_al_2010} Table S3 (Figure 4C) \\
		$k_t$ & 5 & inverse trans-interaction strength & \cite{sprinzak_et_al_2010} Table S3 (Figure 4C) \\
		$n$ & 2 & Hill coefficient for Notch-Delta activation of reporter & - \\
		$m$ & 2 & Hill coefficient for reporter repression of Delta & - \\
		$k_{RS}$ & 300,000 & affinity of reporter induction & \cite{sprinzak_et_al_2011} Table S1 (Figure S4A) \\
		$k_{NS}$ & $5 \times 10^7$ & affinity of reporter induction & - \\ \hline
	\end{tabular}
	}
	\caption{Parameters used in the Notch-Delta model simulations, unless noted otherwise in the text (Figures \ref{fig:sprinzak_mi_kc} to \ref{fig:sprinzak_comps_sm}, \ref{fig:supp_sprinzak_comps}, and \ref{fig:supp_sprinzak_h2_uc_comps}).}
	\label{tab:supp_sprinzak_params}
\end{table}

\subsection{Mutual Inactivation (MI)} \label{sec:supp_mi_model}

For this system we can explicitly calculate the steady-state values $\underline{\bar{N}}^*, \underline{\bar{D}}^*, \underline{\bar{v}}_N^*, \underline{\bar{v}}_D^*$ for $u = 0$.  First we note that for our choice of $M$ the homogeneous solution satisfies $\underline{\bar{v}}_N^* = \underline{\bar{N}}^*$ and $\underline{\bar{v}}_D^* = \underline{\bar{D}}^*$.  After algebra, we find that $\underline{\bar{N}}^*$ is the positive root of a quadratic, $\underline{\bar{D}}^*$ is found in terms of $\underline{\bar{N}}^*$, and $\underline{\bar{R}}^*$ is expressed in terms of $\underline{\bar{N}}^*$ and $\underline{\bar{D}}^*$:
\begin{equation}
\begin{cases}
-\frac{\gamma}{K}\underline{\bar{N}}^{*2} + \left( \frac{\beta_N}{K} - \gamma^2 - \frac{\underline{\bar{\beta}}_D}{K} \right)\underline{\bar{N}}^* + \beta_N\gamma = 0 \\
\underline{\bar{D}}^* = \frac{\underline{\bar{\beta}}_D}{\gamma + K\underline{\bar{N}}^*} \\
\underline{\bar{R}}^* = \frac{\beta_R}{\gamma_R} \frac{\left( \underline{\bar{N}}^* \underline{\bar{D}}^* \right)^n}{k_{RS} + \left( \underline{\bar{N}}^* \underline{\bar{D}}^* \right)^n} \
\end{cases} \nonumber \
\end{equation}
where $K := \frac{k_ck_t}{k_c+k_t}$.

The filter coefficients $[S]_{kk}$ are given by $-C\left(A + \lambda_k(M)B_vG\right)^{-1}B_u$, $k = 0, 1, ..., N-1$.  To find them we can exploit the structure of $C$ and $B_u$.  For the sake of demonstration we will take the readout to be the reporter protein such that $C = [0~0~1]$, although the procedure applies equally well to arbitrary choices of $C$.

First we notate
\begin{equation}
A + \lambda_k(M)B_vG = \left[ \begin{matrix}
A_1 & 0 \\
[b_1,~b_2\lambda_k(M)] & -\gamma_R
\end{matrix} \right] \nonumber \
\end{equation}
and apply the matrix inversion lemma to obtain
\begin{equation}
\left(A + \lambda_k(M)B_vG\right)^{-1} = \left[ \begin{matrix}
A_1^{-1} & 0 \\
\frac{1}{\gamma_R}[b_1,~b_2\lambda_k(M)]A_1^{-1} & -\frac{1}{\gamma_R}
\end{matrix} \right]. \label{eq:inv_A_lambMBG} \
\end{equation}
Observe that
\begin{equation}
A_1^{-1} = \frac{1}{\det A_1}\left[ \begin{matrix}
-\gamma - \frac{\underline{\bar{N}}^*}{K} & \frac{\underline{\bar{N}}^*}{k_c} + \lambda_k(M_0)\frac{\underline{\bar{N}}^*}{k_t} \\
\frac{\underline{\bar{D}}^*}{k_c} + \lambda_k(M)\frac{\underline{\bar{D}}^*}{k_t} & -\gamma - \frac{\underline{\bar{D}}^*}{K}
\end{matrix} \right]. \nonumber \
\end{equation}
Premultiplying \eqref{eq:inv_A_lambMBG} by $C$ extracts the bottom row, while postmultiplying by $B_u$ extracts the middle entry of that row, which is given by
\begin{equation}
\frac{1}{\gamma_R \det A_1}\left[b_1\left( \frac{\underline{\bar{N}}^*}{k_c} + \lambda_k(M)\frac{\underline{\bar{N}}^*}{k_t} \right) + b_2\lambda_k(M)\left( -\gamma - \frac{\underline{\bar{D}}^*}{K} \right)\right]. \nonumber \
\end{equation}
Substituting $b_1$ and $b_2$, we simplify the expression to
\begin{equation}
[S]_{kk} = -\frac{\underline{\bar{N}}^*b_1}{\gamma_R k_c \det A_1}\left[ 1 - \lambda_k\left(M\right) \left( 1 + \frac{\gamma k_c}{\underline{\bar{D}}^*}\right) \right] \nonumber \
\end{equation}
where
\begin{subequations}\begin{align}
	\det A_1 &= \left( -\gamma - \frac{\underline{\bar{N}}^*}{K} \right) \left( -\gamma - \frac{\underline{\bar{D}}^*}{K} \right) - \left( \frac{\underline{\bar{N}}^*}{k_c} + \lambda_k\left(M\right)\frac{\underline{\bar{N}}^*}{k_t} \right) \left( \frac{\underline{\bar{D}}^*}{k_c} + \lambda_k\left(M\right)\frac{\underline{\bar{D}}^*}{k_t} \right) \nonumber \\
	&= -\left[ \frac{\underline{\bar{N}}^*\underline{\bar{D}}^*}{k_t^2}\lambda_k\left(M\right)^2 + 2\frac{\underline{\bar{N}}^*\underline{\bar{D}}^*}{k_tk_c}\lambda_k\left(M\right) + \left( \frac{\underline{\bar{N}}^*\underline{\bar{D}}^*}{k_c^2} - \gamma^2 - \frac{\gamma}{K} \left( \underline{\bar{N}}^* + \underline{\bar{D}}^* \right) - \frac{\underline{\bar{N}}^*\underline{\bar{D}}^*}{K^2}  \right) \right]. \nonumber \
\end{align}\end{subequations}
The dynamical system corresponding to these filter coefficients is analytically stable for our chosen $M$ with any biologically relevant parameter values (i.e., when the parameters in \ref{tab:supp_sprinzak_params} are positive, as they must be in a living system).  Since $A + \lambda_k\left(M_0\right)B_vG$ is a block triangular matrix, its eigenvalues are the eigenvalues of the diagonal blocks, i.e., $-\gamma_R$ along with the eigenvalues of $A_1$.  Since $-\gamma_R$ is always negative, checking for stability amounts to checking the sign of the eigenvalues of $A_1$.

Using the fact that $\underline{\bar{v}}_D^* = \underline{\bar{D}}^*$ and $\underline{\bar{v}}_N^* = \underline{\bar{N}}^*$ for our choice of $M$ yields
\begin{equation}
A_1 = \left[\begin{matrix}
-\gamma - \frac{\underline{\bar{D}}^*}{K} & -\frac{\underline{\bar{N}}^*}{k_c} - \lambda_k\left(M_0\right)\frac{\underline{\bar{N}}^*}{k_t} \\
-\frac{\underline{\bar{D}}^*}{k_c} - \lambda_k\left(M_0\right)\frac{\underline{\bar{D}}^*}{k_t} & -\gamma - \frac{\underline{\bar{N}}^*}{K}  \nonumber \
\end{matrix}\right]
\end{equation}
with $K$ as and $A_1$ as defined earlier.  The eigenvalues are given by the zeros of the characteristic polynomial, found by solving for $s$ in
\begin{equation}
\left(-\gamma - \frac{\underline{\bar{D}}^*}{K} - s\right)\left(-\gamma - \frac{\underline{\bar{N}}^*}{K} - s\right) - \left(-\frac{\underline{\bar{N}}^*}{k_c} - \lambda_k\left(M_0\right)\frac{\underline{\bar{N}}^*}{k_t}\right)\left(-\frac{\underline{\bar{D}}^*}{k_c} - \lambda_k\left(M_0\right)\frac{\underline{\bar{D}}^*}{k_t}\right) = 0. \ \nonumber
\end{equation}
The first term multiplies out to
\begin{equation}
s^2 + \left[2\gamma + \frac{\underline{\bar{N}}^* + \underline{\bar{D}}^*}{K}\right]s + \left[\gamma^2 + \frac{\underline{\bar{N}}^*\underline{\bar{D}}^*}{K^2} + \frac{\gamma}{K}\left(\underline{\bar{N}}^* + \underline{\bar{D}}^*\right)\right] \nonumber \
\end{equation}
and the second contributes the following terms, independent of $s$:
\begin{equation}
-\left(\frac{\underline{\bar{N}}^*}{k_c} + \lambda_k\left(M_0\right)\frac{\underline{\bar{N}}^*}{k_t}\right)\left(\frac{\underline{\bar{D}}^*}{k_c} + \lambda_k\left(M_0\right)\frac{\underline{\bar{D}}^*}{k_t}\right) = -\frac{\underline{\bar{N}}^*\underline{\bar{D}}^*}{k_c^2} - 2\lambda_k\left(M_0\right)\frac{\underline{\bar{N}}^*\underline{\bar{D}}^*}{k_ck_t} - \lambda_k\left(M_0\right)^2\frac{\underline{\bar{N}}^*\underline{\bar{D}}^*}{k_t^2}. \nonumber \
\end{equation}
To be biologically attainable the parameters and steady-state values must all be positive, such that the quadratic in $s$ has positive coefficients for the first- and second-order terms.  If the roots are complex then assuming nonzero decay and nontrivial solutions, the real part is given by
\begin{equation}
\frac{-\left[2\gamma + \frac{\underline{\bar{N}}^* + \underline{\bar{D}}^*}{K}\right]}{2} < 0, \nonumber \
\end{equation}
guaranteeing stability.

If the roots are real, then they will be negative if the zeroth-order term is positive.  However, if the zeroth-order term is negative, then one root will be positive and the system will not be stable.  Neglecting the expressions in $\gamma$, which by observation must be positive, the contributions to the zeroth-order term are
\begin{align}
& \frac{\underline{\bar{N}}^*\underline{\bar{D}}^*}{K^2} -\frac{\underline{\bar{N}}^*\underline{\bar{D}}^*}{k_c^2} - 2\lambda_k\left(M_0\right)\frac{\underline{\bar{N}}^*\underline{\bar{D}}^*}{k_ck_t} - \lambda_k\left(M_0\right)^2\frac{\underline{\bar{N}}^*\underline{\bar{D}}^*}{k_t^2} \nonumber \\
&= \frac{\underline{\bar{N}}^*\underline{\bar{D}}^*\left(k_c + k_t\right)^2 - k_t^2\underline{\bar{N}}^*\underline{\bar{D}}^* - 2k_ck_t\lambda_k\left(M_0\right)\underline{\bar{N}}^*\underline{\bar{D}}^* - \lambda_k\left(M_0\right)^2k_c^2\underline{\bar{N}}^*\underline{\bar{D}}^*}{k_c^2k_t^2} \nonumber \\
&= \frac{\underline{\bar{N}}^*\underline{\bar{D}}^*\left(k_c^2 + 2k_ck_t\right) - 2k_ck_t\lambda_k\left(M_0\right)\underline{\bar{N}}^*\underline{\bar{D}}^* - \lambda_k\left(M_0\right)^2k_c^2\underline{\bar{N}}^*\underline{\bar{D}}^*}{k_c^2k_t^2} \nonumber \\
&= \frac{\underline{\bar{N}}^*\underline{\bar{D}}^*}{k_c^2k_t^2}\left[k_c^2\left(1-\lambda_k\left(M_0\right)^2\right) + 2k_ck_t\left(1-\lambda_k\left(M_0\right)\right)\right]. \nonumber \
\end{align}
By our choice of $M_0$, $\lambda_k\left(M_0\right) \in \left[-1,1\right]$, therefore the bottom expression is minimized to $0$ by $\lambda_k\left(M_0\right) = 1$.  Since this expression contains all the possible negative contributions to the zeroth-order term, the overall zeroth-order term cannot be negative, and hence the roots of the characteristic polynomial must be negative, implying stability of the system with given $M_0$ for all biologically relevant parameter choices.

\subsection{Lateral Inhibition with Mutual Inactivation (LIMI)}
The system equations are the same as for the mutual inactivation model \eqref{eq:sprinzak_MI_1D}, except that now Delta production is repressed by reporter protein:
\begin{equation}
\begin{cases}
\dot{N_i}(t) = \beta_N - \gamma N_i(t) - \frac{N_i(t)v_{D_i}(t)}{k_t} - \frac{N_i(t)D_i(t)}{k_c} \\
\dot{D_i}(t) = \left(\underline{\bar{\beta}}_D + u_i\right)\frac{1}{1+R_i(t)^m} - \gamma D_i(t) - \frac{D_i(t)v_{N_i}(t)}{k_t} - \frac{N_i(t)D_i(t)}{k_c} \\
\dot{R_i}(t) = \beta_R\frac{\left(N_i(t)v_{D_i}(t)\right)^n}{k_{RS} + \left(N_i(t)v_{D_i}(t)\right)^n} - \gamma_R R_i(t) \\
y_i (t)= Cx_i(t) \\
w_i (t)= \left[ \begin{matrix}
N_i(t) \\ D_i(t)
\end{matrix} \right] \\
v(t) = \left( M \otimes I_2 \right) w(t) \
\end{cases}. \label{eq:sprinzak_LIMI_1D} \
\end{equation}

When linearized at steady state, the relevant matrices are
\begin{subequations}\begin{align}
	A &= \left[ \begin{matrix}
	-\gamma -\frac{\underline{\bar{v}}_D^*}{k_t} - \frac{\underline{\bar{D}}^*}{k_c} & -\frac{\underline{\bar{N}}^*}{k_c} & 0 \\
	-\frac{\underline{\bar{D}}^*}{k_c} & - \gamma - \frac{\underline{\bar{v}}_N^*}{k_t} - \frac{\underline{\bar{N}}^*}{k_c} & -a \\
	b_1 & 0 & -\gamma_R
	\end{matrix} \right], \nonumber \\
	B_v &= \left[ \begin{matrix}
	0 & -\frac{\underline{\bar{N}}^*}{k_t} \\
	-\frac{\underline{\bar{D}}^*}{k_t} & 0 \\
	0 & b_2
	\end{matrix} \right], ~~
	B_u = \left[ \begin{matrix}
	0 \\
	1 \\
	0
	\end{matrix} \right], \nonumber \\
	C &= \left[ \begin{matrix} 0 & 0 & 1 \end{matrix} \right], ~~ G = \left[ \begin{matrix} 1 & 0 & 0 \\ 0 & 1 & 0 \end{matrix} \right] \nonumber \
	\end{align}\end{subequations}
where $b_1, b_2$ are defined as before and
\begin{equation}
a := m\frac{\underline{\bar{R}}^{*m-1}}{\left(1+\underline{\bar{R}}^{*m}\right)^2}. \nonumber \
\end{equation}

\subsection{Simplest Lateral Inhibition by Mutual Inactivation (SLIMI)}
The system equations are
\begin{equation}
\begin{cases}
\dot{N_i}(t) = \alpha_N + \beta_N\frac{\left(N_i(t)v_{D_i}(t)\right)^n}{k_{NS}+\left(N_i(t)v_{D_i}(t)\right)^n} - \gamma N_i(t) - \frac{N_i(t)v_{D_i}(t)}{k_t} - \frac{N_i(t)D_i(t)}{k_c} \\
\dot{D_i}(t) = \underline{\bar{\beta}}_D + u_i - \gamma D_i - \frac{D_i(t)v_{N_i}(t)}{k_t} - \frac{N_i(t)D_i(t)}{k_c} \\
y_i(t) = Cx_i(t) \\
w_i(t) = \left[ \begin{matrix}
N_i(t) \\ D_i(t)
\end{matrix} \right] \
\end{cases}. \label{eq:sprinzak_SLIMI_1D} \
\end{equation}

When linearized at steady state, the relevant matrices are
\begin{subequations}\begin{align}
	A &= \left[ \begin{matrix}
	b_1 - \gamma -\frac{\underline{\bar{v}}_D^*}{k_t} - \frac{\underline{\bar{D}}^*}{k_c} & -\frac{\underline{\bar{N}}^*}{k_c} \\
	-\frac{\underline{\bar{D}}^*}{k_c} & - \gamma - \frac{\underline{\bar{v}}_N^*}{k_t} - \frac{\underline{\bar{N}}^*}{k_c}
	\end{matrix} \right], \nonumber \\
	B_v &= \left[ \begin{matrix}
	0 & b_2 - \frac{\underline{\bar{N}}^*}{k_t} \\
	-\frac{\underline{\bar{D}}^*}{k_t} & 0
	\end{matrix} \right], ~~
	B_u = \left[ \begin{matrix}
	0 \\
	1
	\end{matrix} \right], \nonumber \\
	G &= \left[ \begin{matrix} 1 & 0 \\ 0 & 1 \end{matrix} \right] \nonumber \
	\end{align}\end{subequations}
where now
\begin{equation}
b_1 := \beta_Nnk_{NS}\frac{\underline{\bar{v}}_D^{*n}\underline{\bar{N}}^{*n-1}}{\left(k_{NS}+\left(\underline{\bar{N}}^*\underline{\bar{v}}_D^*\right)^n\right)^2}, ~ b_2 := \frac{\underline{\bar{N}}^*}{\underline{\bar{v}}_D^*}b_1. \nonumber \
\end{equation}

\begin{figure}[!hbpt]
	\centering
	\includegraphics[width=\columnwidth]{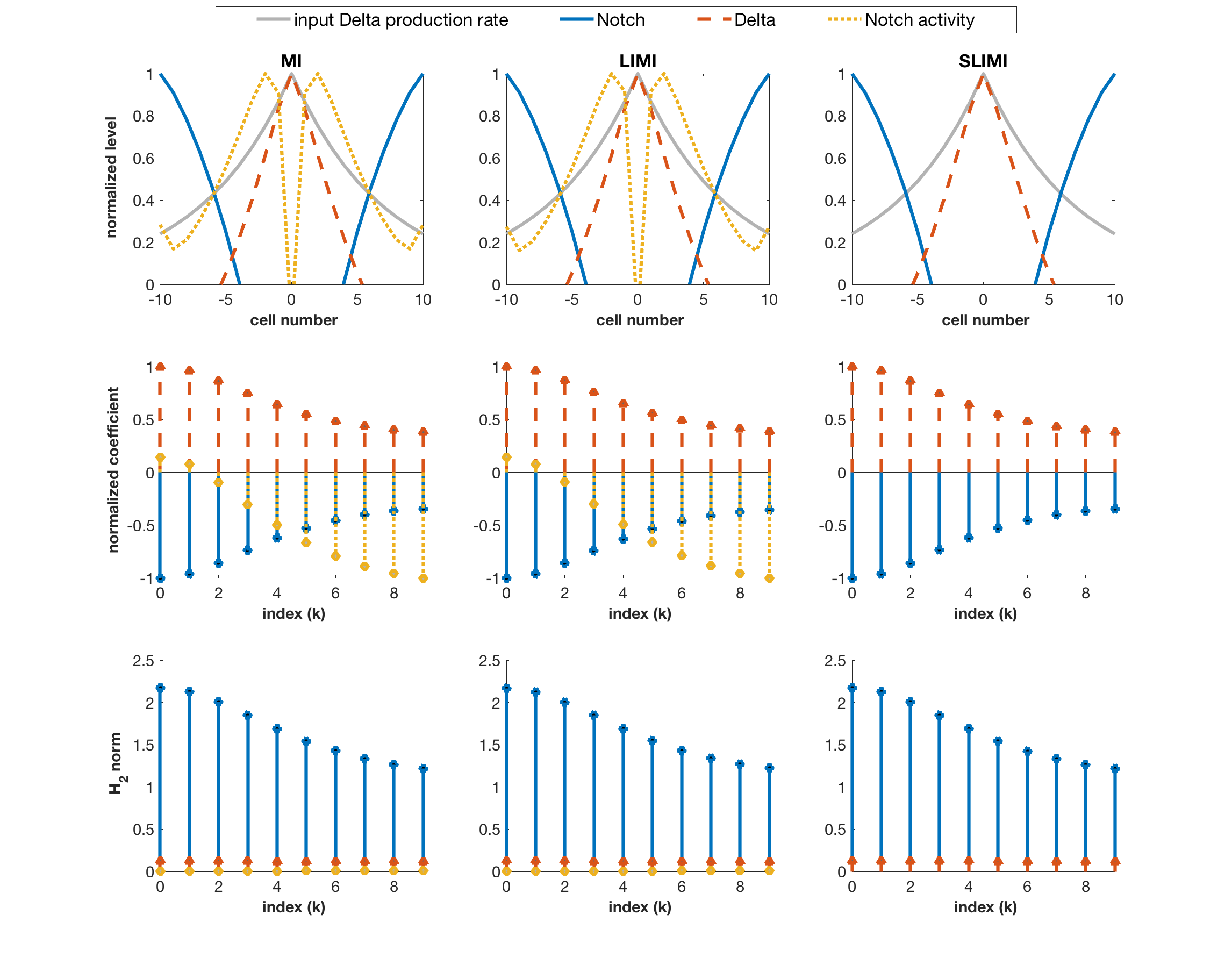}
	\caption{\textbf{Three different models for the Notch-Delta interaction produce qualitatively similar filter characteristics.}  Top row, a two-sided exponential input gradient of Delta production rate (solid light gray) results in two sharp bands of Notch activity (dotted yellow) that spatially segregates steady-state levels of Notch (solid blue) and Delta (dashed  orange).  Curves are normalized to their respective maxima.  Note that the SLIMI model lacks a reporter protein and so does not have an output measure for Notch activity.  Middle row, the magnitude of the filter coefficients for each possible output.  Because the spatial modes correspond to the DFT basis, the coefficients exhibit mirror-image symmetry about $k = \frac{N}{2}$; we plot only the first half of the coefficients to better visualize the filter's characteristic highpass shape for output Notch activity, and lowpass shape for Delta and Notch (with a $\frac{pi}{2}$ phase shift in Notch expression).  Each set of coefficients has been individually normalized to the maximum in each set.  Bottom row, the $\mathcal{H}_2$ norm is qualitatively similar to the filter characteristic for the corresponding output.  The coefficients here are not normalized in order to better visualize the large gain in Notch expression relative to Delta or activity levels.  Parameters for all models are given in Table \ref{tab:supp_sprinzak_params}.}
	\label{fig:supp_sprinzak_comps}
\end{figure}

\begin{figure}[!hbpt]
	\centering
	\includegraphics[width=\columnwidth]{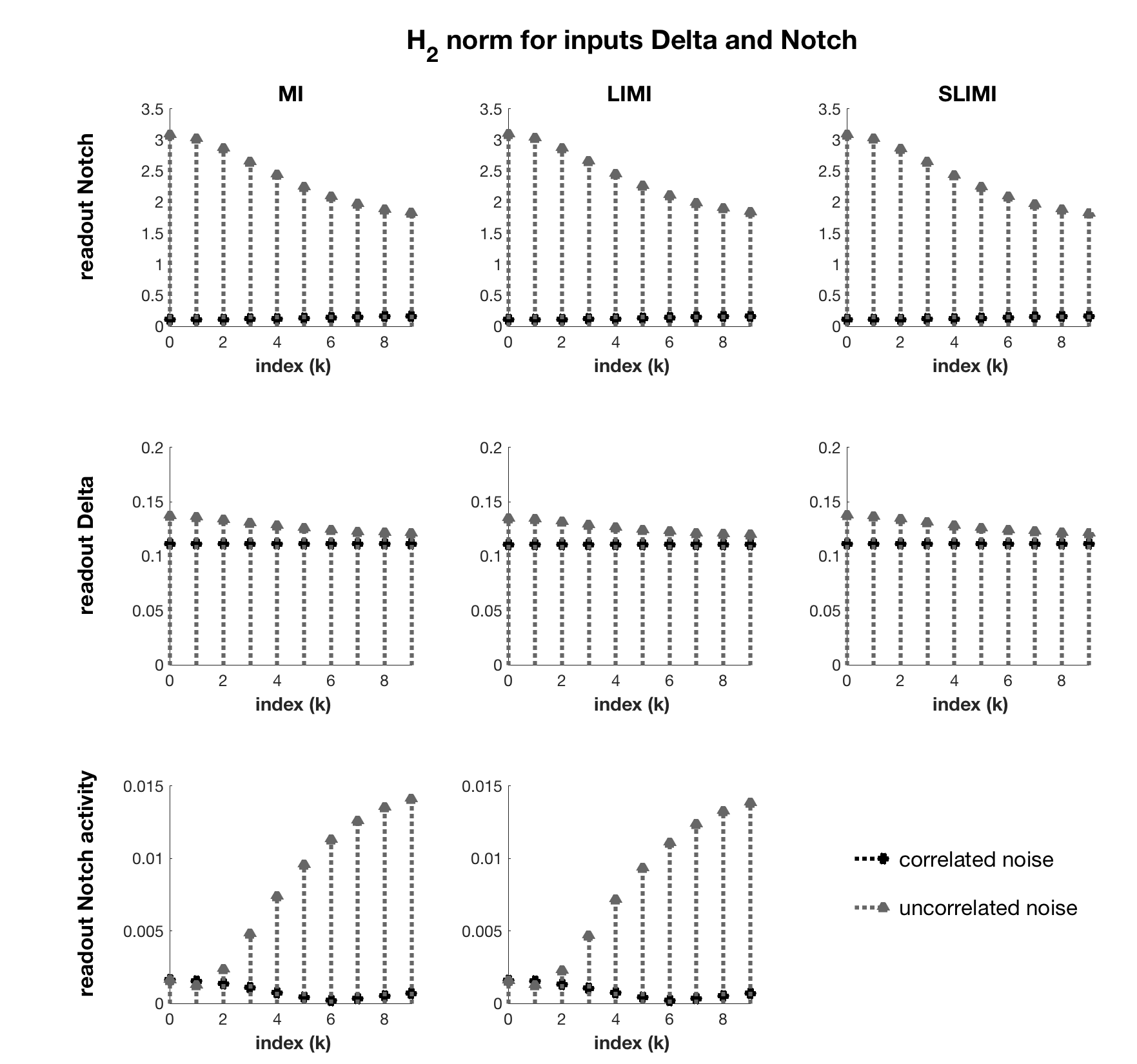}
	\caption{\textbf{System response to correlated vs. uncorrelated noise is similar across models.}  The most notable behavioral difference between models is that the LIMI model rejects uncorrelated noise slightly more strongly for readout Delta.  Parameters for all models are given in Table \ref{tab:supp_sprinzak_params}.}
	\label{fig:supp_sprinzak_h2_uc_comps}
\end{figure}

\clearpage
\section{DIGIT FORMATION}

\begin{table}[!bp]
	\centering
	\begin{tabular}{| c | c | c |} \hline
		\bf Parameter & \bf Value & \bf Description \\ \hline
		$\alpha_{sox9}$ & 0 & constitutive Sox9 production rate \\
		$\alpha_{bmp}$ & 16.9 & constitutive Bmp production rate \\
		$\alpha_{wnt}$ & 13.7 & constitutive Wnt production rate \\
		$k_2$ & 1 & Bmp promotion of \textit{sox9} expression \\
		$k_3$ & 1 & Wnt repression of \textit{sox9} expression \\
		$k_4$ & 1.59 & Sox9 repression of \textit{bmp} expression \\
		$k_5$ & 0.1 & Bmp decay rate \\
		$k_7$ & 1.27 & Sox9 repression of \textit{wnt} expression \\
		$k_9$ & 0.1 & Wnt decay rate \\
		$d_b$ & 2.5 & diffusion coefficient for Bmp \\
		$d_w$ & 1 & diffusion coefficient for Wnt \\
		$l$ & 1.7 & distance between cells \\
		\hline
	\end{tabular}
	\caption{Parameters used in the simulations of digit formation, unless noted otherwise in the text (Figures \ref{fig:raspopovic_color_plates}, \ref{fig:raspopovic_eigs}, \ref{fig:raspopovic_h2}).  Values are from Table ST4 and text of \cite{raspopovic_et_al_2014}.}
	\label{tab:supp_raspopovic_params}
\end{table}

\begin{table}[!hbpt]
	\centering
	\begin{tabular}{| c | c | c |} \hline
		\bf Parameter & \bf Value & \bf Description \\ \hline
		$\alpha_{sox9}$ & 0 & constitutive Sox9 production rate \\
		$\alpha_{bmp}$ & 0.1 & constitutive Bmp production rate \\
		$\alpha_{wnt}$ & 1.2 & constitutive Wnt production rate \\
		$\mu_F$ & 0.1 & Fgf decay rate \\
		$k_2$ & 1 & Bmp promotion of \textit{sox9} expression \\
		$k_3$ & 3 & Wnt repression of \textit{sox9} expression \\
		$k_4$ & 6 & Sox9 repression of \textit{bmp} expression \\
		$k_5$ & 0.1 & Bmp decay rate \\
		$k_7$ & 2.4 & Sox9 repression of \textit{wnt} expression \\
		$k_9$ & 0.1 & Wnt decay rate \\
		$k_f$ & $\frac{2}{3}$ & strength of Fgf influence on $k_4$, $k_7$ \\
		$d_b$ & 160 & diffusion coefficient for Bmp \\
		$d_w$ & 25 & diffusion coefficient for Wnt \\
		$d_F$ & 600 & diffusion coefficient for Fgf \\
		$l$ & 4 & distance between cells \\
		\hline
	\end{tabular}
	\caption{Parameters used in the simulations of digit formation with a morphogen gradient, unless noted otherwise in the text (Figures \ref{fig:onimaru_fully_loaded}, \ref{fig:supp_onimaru_eigs_wnt_bmp} to \ref{fig:supp_onimaru_h2_wnt_corr}).  Values are from Methods in \cite{onimaru_et_al_2016} (Figures 4 and 5).}
	\label{tab:supp_onimaru_params}
\end{table}
	
\begin{figure}[!bp]
	\centering
	\includegraphics[width=\columnwidth]{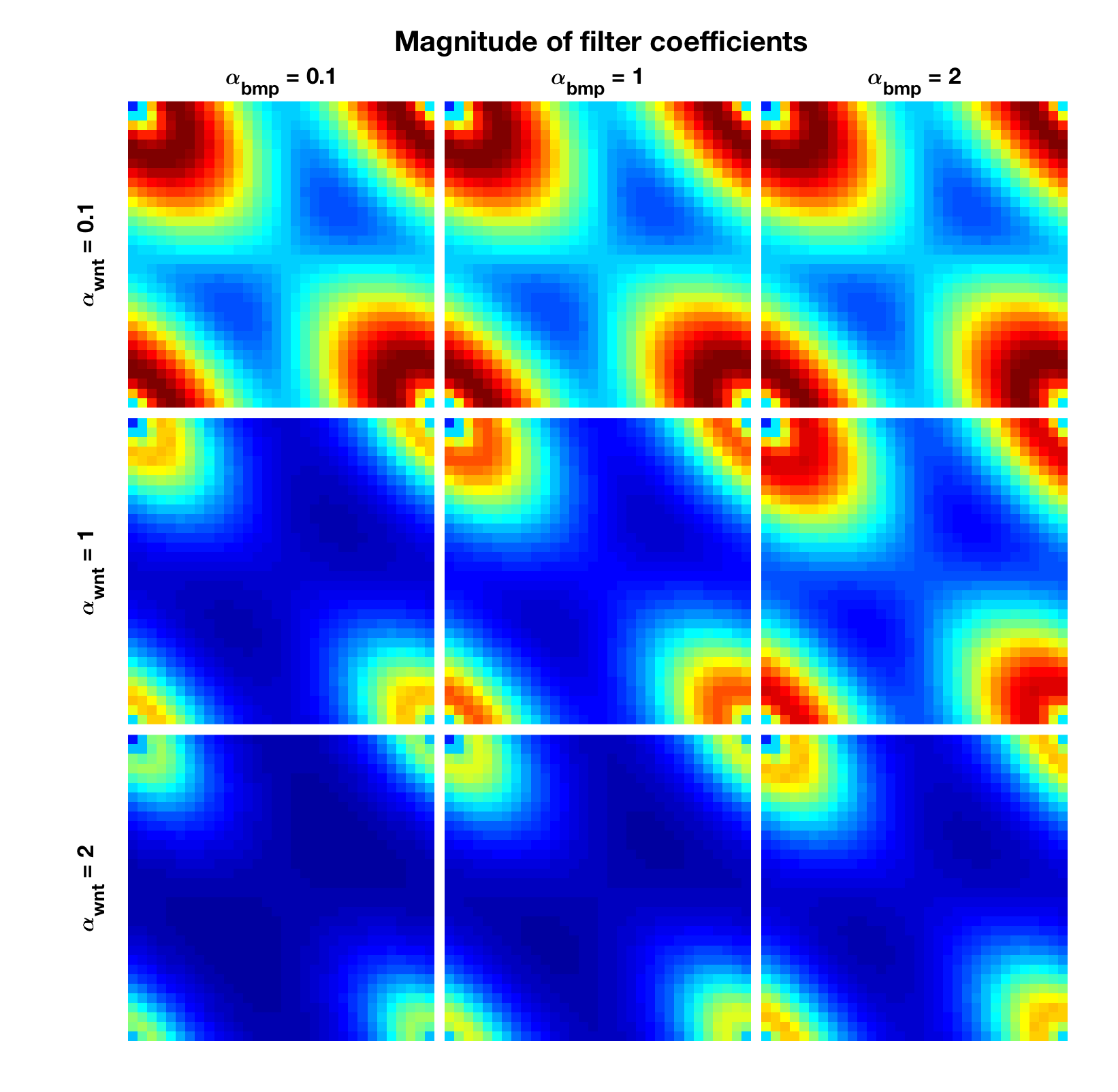}
	\caption{Magnitude of filter coefficients for input background production rate and readout [Sox9] as production rates for Wnt and Bmp are varied.  Increasing the ratio of $\alpha_{wnt}$ to $\alpha_{bmp}$ shrinks the size and magnitude of the passband.  Images are normalized to the same scale (min. 0, max. 2.19).  Other parameters are from Table \ref{tab:supp_onimaru_params}.}
	\label{fig:supp_onimaru_eigs_wnt_bmp}
\end{figure}

\begin{figure}[!bp]
	\centering
	\includegraphics[width=\columnwidth]{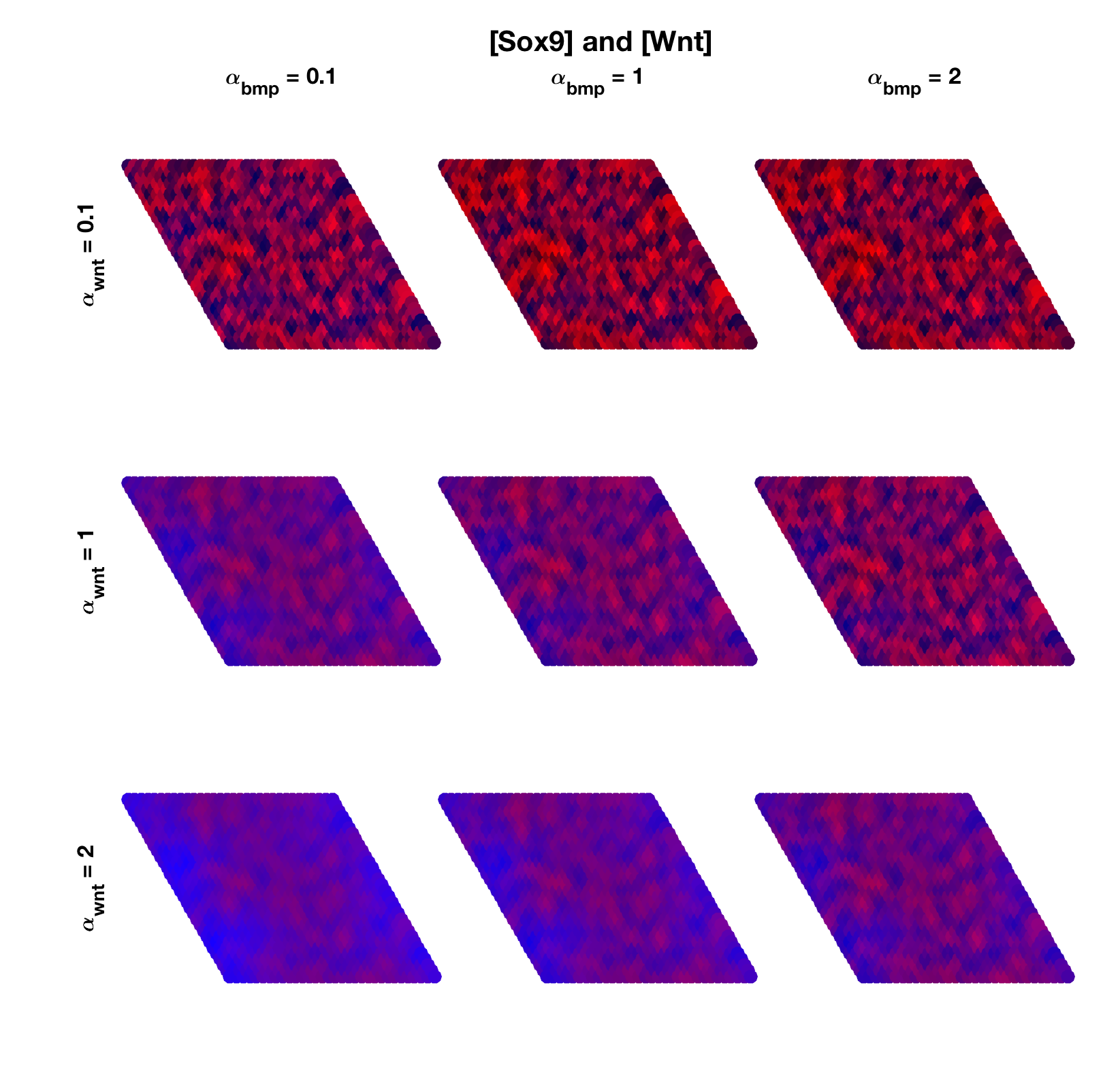}
	\caption{Simulated outputs Sox9 (red) and Wnt (blue) for changing production rates of Wnt and Bmp.  Smaller $\alpha_{bmp}$ exaggerates the effect of the Fgf gradient on [Wnt].  Readouts are normalized independently to the same scale across all images.  Other parameters are from Table \ref{tab:supp_onimaru_params}.}
	\label{fig:supp_onimaru_color_plates_wnt_bmp}
\end{figure}

\begin{figure}[!bp]
	\centering
	\includegraphics[width=\columnwidth]{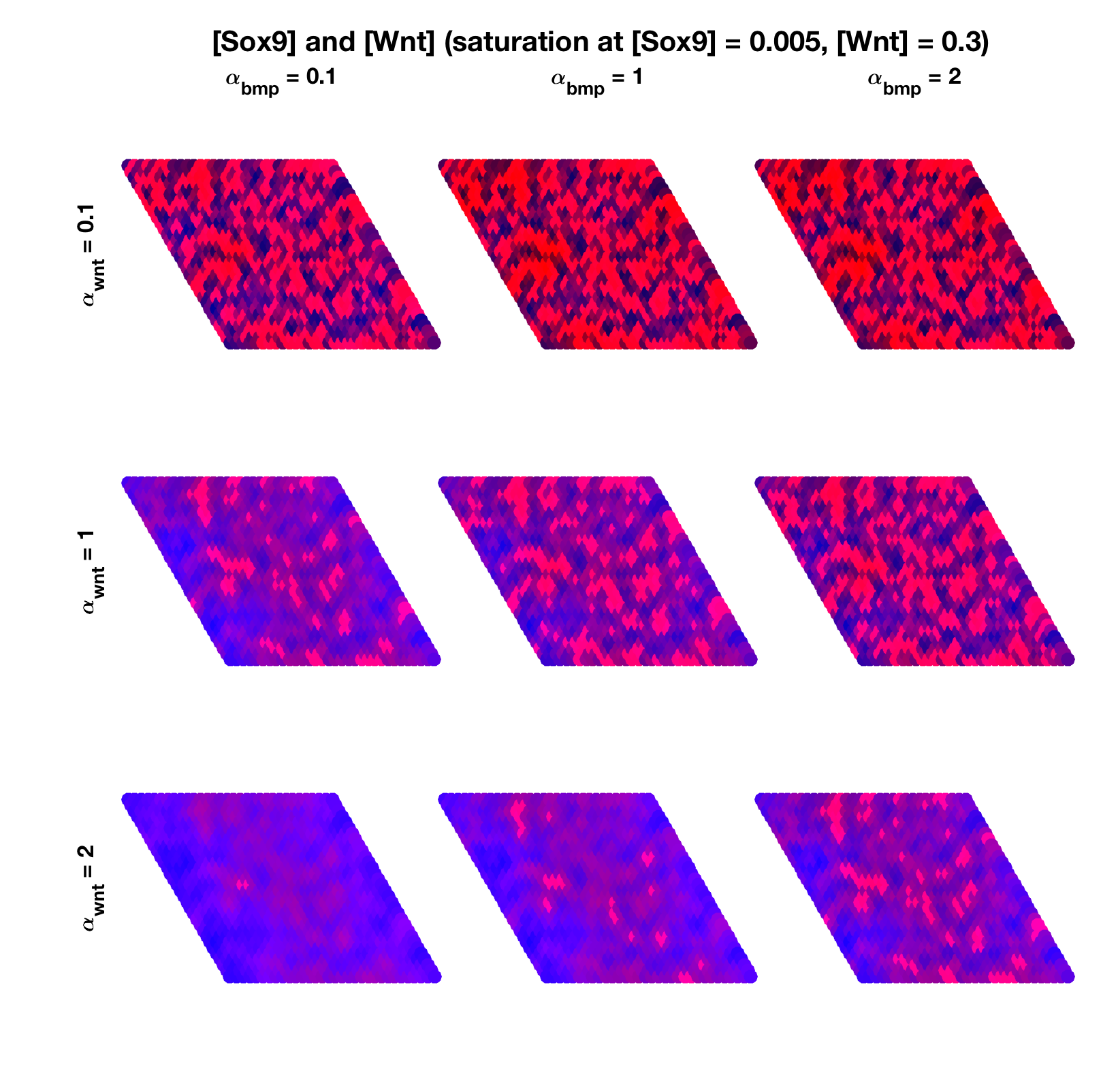}
	\caption{Simulated readouts with saturation for Sox9 (red) and Wnt (blue) for changing production rates of Wnt and Bmp.  As observed in \cite{onimaru_et_al_2016}, higher ratios of $\alpha_{wnt}$ to $\alpha_{bmp}$ produce more spotlike patterns.  Readouts are normalized independently to the same scale across all images.  Other parameters are from Table \ref{tab:supp_onimaru_params}.}
	\label{fig:supp_onimaru_sat_color_plates_wnt_bmp}
\end{figure}

\begin{figure}[!bp]
	\centering
	\includegraphics[width=\columnwidth]{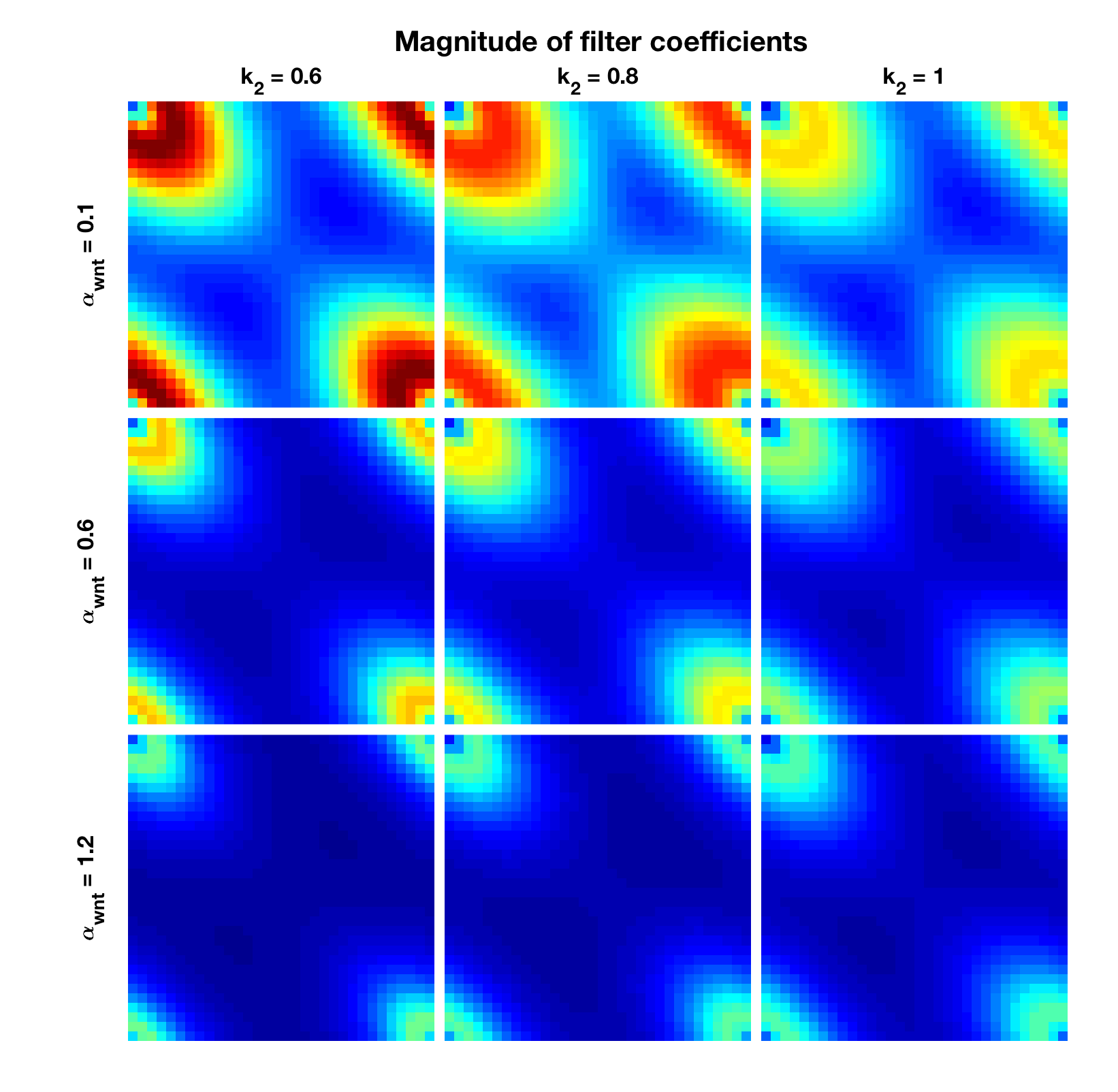}
	\caption{Full plots for the magnitude of filter coefficients for varying Wnt production and Bmp promotion of \textit{sox9} expression (see insets in Figure \ref{fig:onimaru_fully_loaded}).  The coefficients exhibit hexagonal symmetry when tiled on a hexagonal lattice.  Readouts are normalized independently to the same scale across all images.  Other parameters are from Table \ref{tab:supp_onimaru_params}.}
	\label{fig:supp_onimaru_eigs_wnt_k2}
\end{figure}

\begin{figure}[!bp]
	\centering
	\includegraphics[width=\columnwidth]{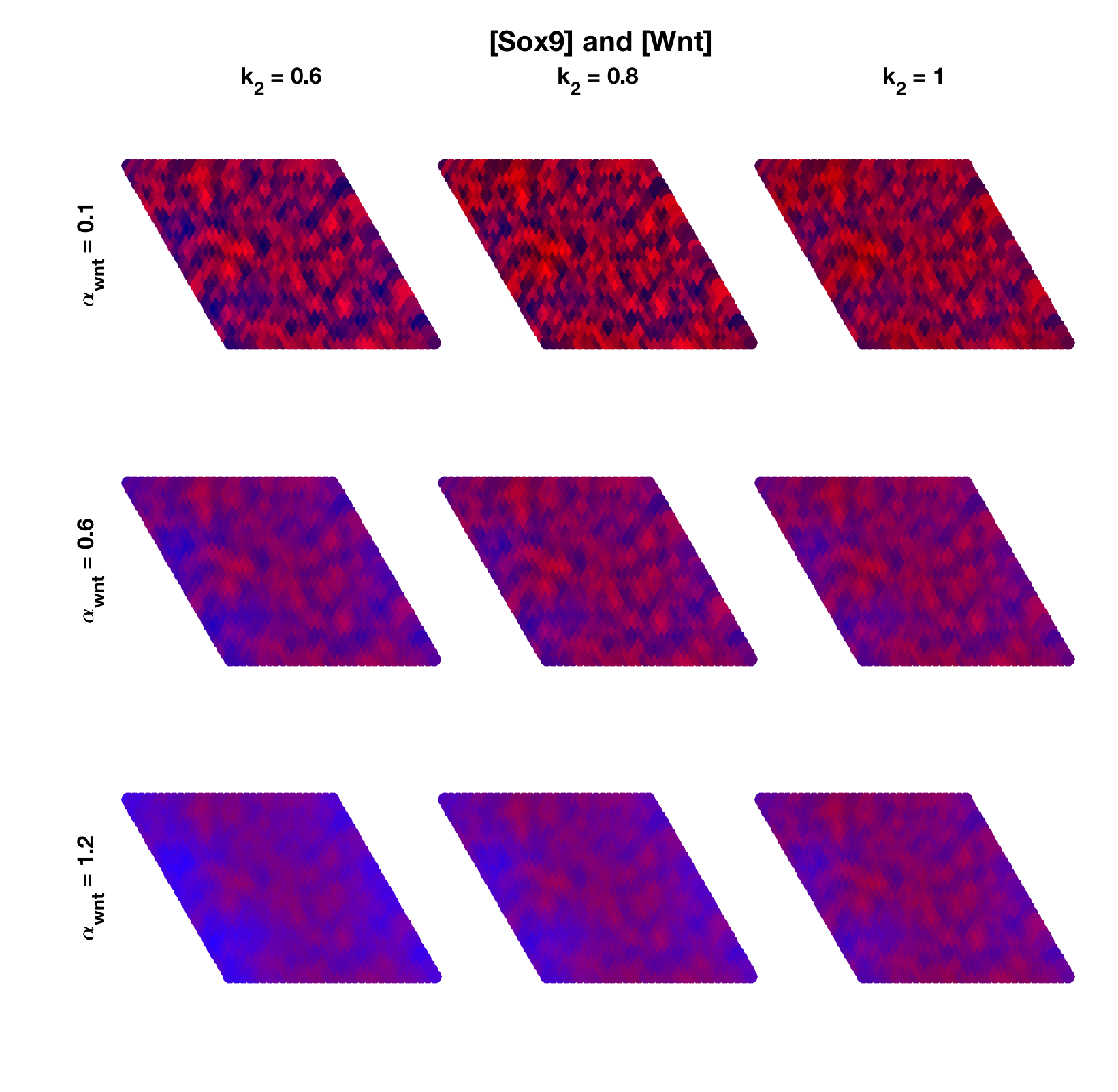}
	\caption{Full simulated readouts for Sox9 (red) and Wnt (blue) for changing Wnt production rate and Bmp promotion (see Figure \ref{fig:onimaru_fully_loaded}).  Readouts are normalized independently to the same scale across all images.  Other parameters are from Table \ref{tab:supp_onimaru_params}.}
	\label{fig:supp_onimaru_color_plates_wnt_k2}
\end{figure}

\begin{figure}[!bp]
	\centering
	\includegraphics[width=\columnwidth]{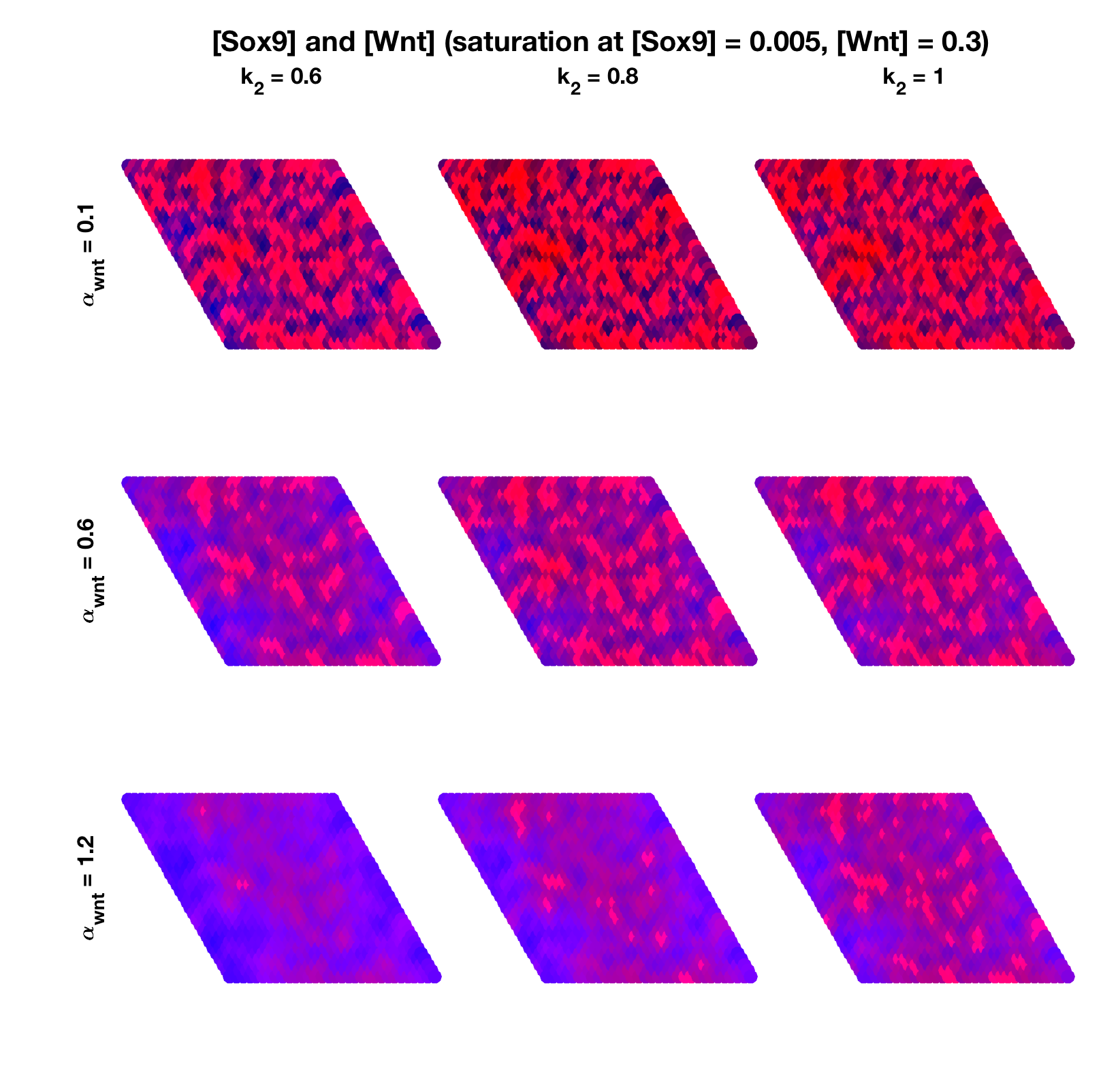}
	\caption{Full simulated readouts with saturation for Sox9 (red) and Wnt (blue) for changing Wnt production rate and Bmp promotion (see Figure \ref{fig:onimaru_fully_loaded}).  Readouts are normalized independently to the same scale across all images.  Other parameters are from Table \ref{tab:supp_onimaru_params}.}
	\label{fig:supp_onimaru_sat_color_plates_wnt_k2}
\end{figure}

\begin{figure}[!bp]
	\centering
	\includegraphics[width=\columnwidth]{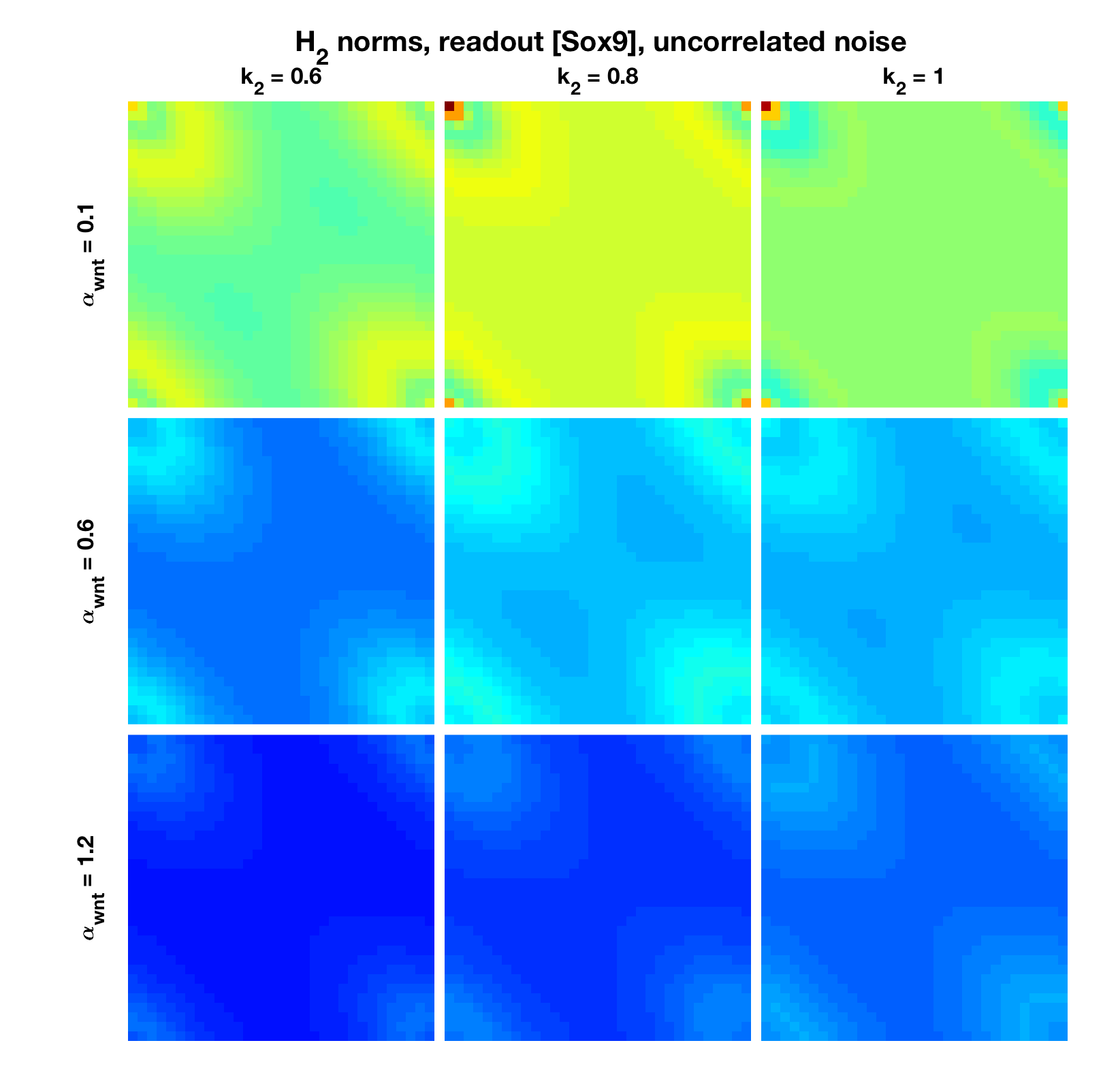}
	\caption{$\mathcal{H}_2$ norms with readout [Sox9] for 100\% uncorrelated white noise inputs to \textit{sox9}, \textit{bmp}, and \textit{wnt}.  Images are normalized to the same scale as Figure \ref{fig:supp_onimaru_h2_sox9_corr} (min. 0, max. 3.43).  Other parameters are from Table \ref{tab:supp_onimaru_params}.}
	\label{fig:supp_onimaru_h2_sox9_uncorr}
\end{figure}

\begin{figure}[!bp]
	\centering
	\includegraphics[width=\columnwidth]{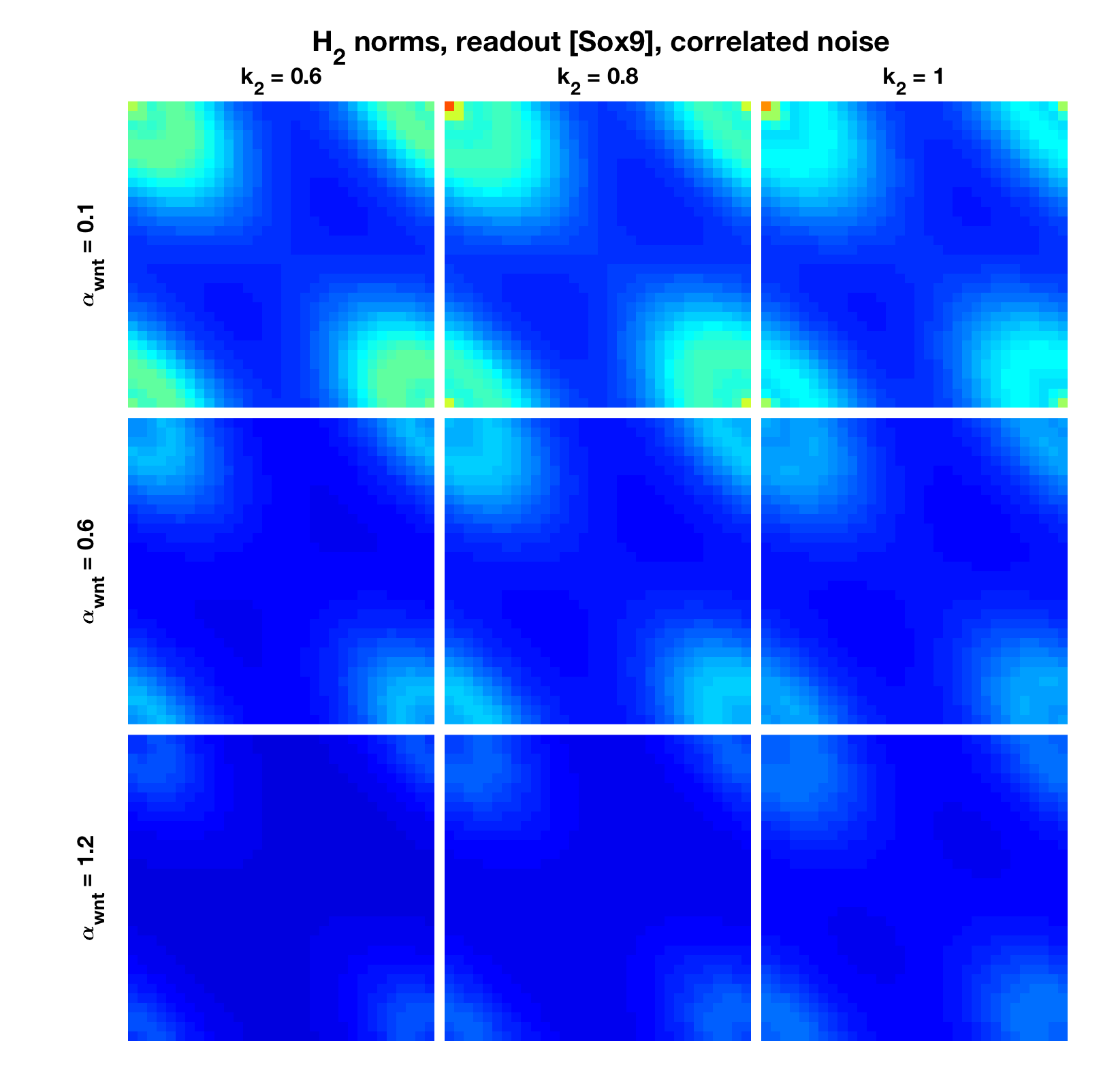}
	\caption{$\mathcal{H}_2$ norms with readout [Sox9] for 100\% correlated white noise inputs to \textit{sox9}, \textit{bmp}, and \textit{wnt}.  Images are normalized to the same scale as Figure \ref{fig:supp_onimaru_h2_sox9_uncorr} (min. 0, max. 3.43).  Other parameters are from Table \ref{tab:supp_onimaru_params}.}
	\label{fig:supp_onimaru_h2_sox9_corr}
\end{figure}

\begin{figure}[!bp]
	\centering
	\includegraphics[width=\columnwidth]{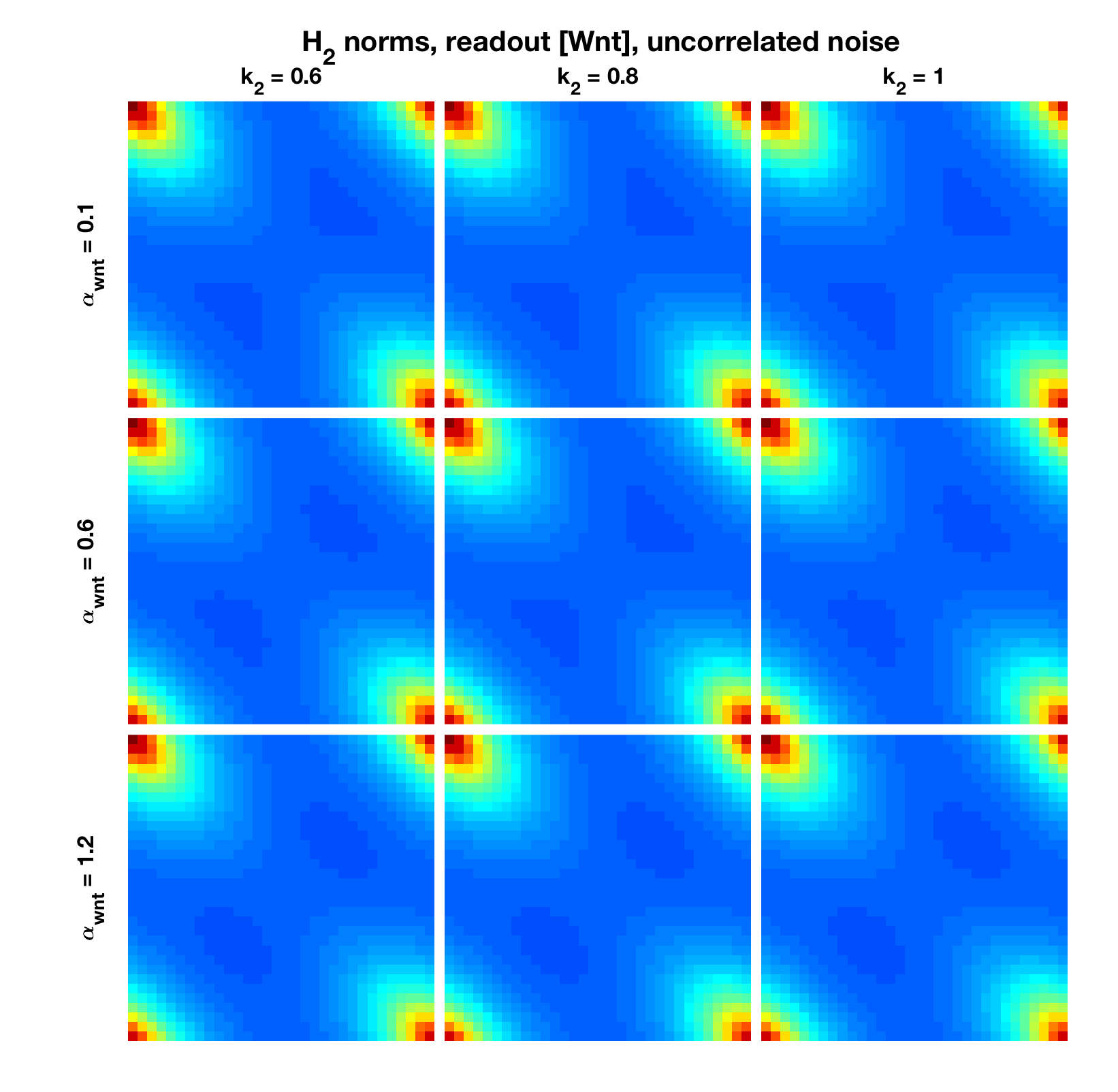}
	\caption{$\mathcal{H}_2$ norms with readout [Wnt] for 100\% uncorrelated white noise inputs to \textit{sox9}, \textit{bmp}, and \textit{wnt}.  Images are normalized to the same scale as Figure \ref{fig:supp_onimaru_h2_wnt_corr} (min. 0, max. 2.28).  Other parameters are from Table \ref{tab:supp_onimaru_params}.}
	\label{fig:supp_onimaru_h2_wnt_uncorr}
\end{figure}

\begin{figure}[!bp]
	\centering
	\includegraphics[width=\columnwidth]{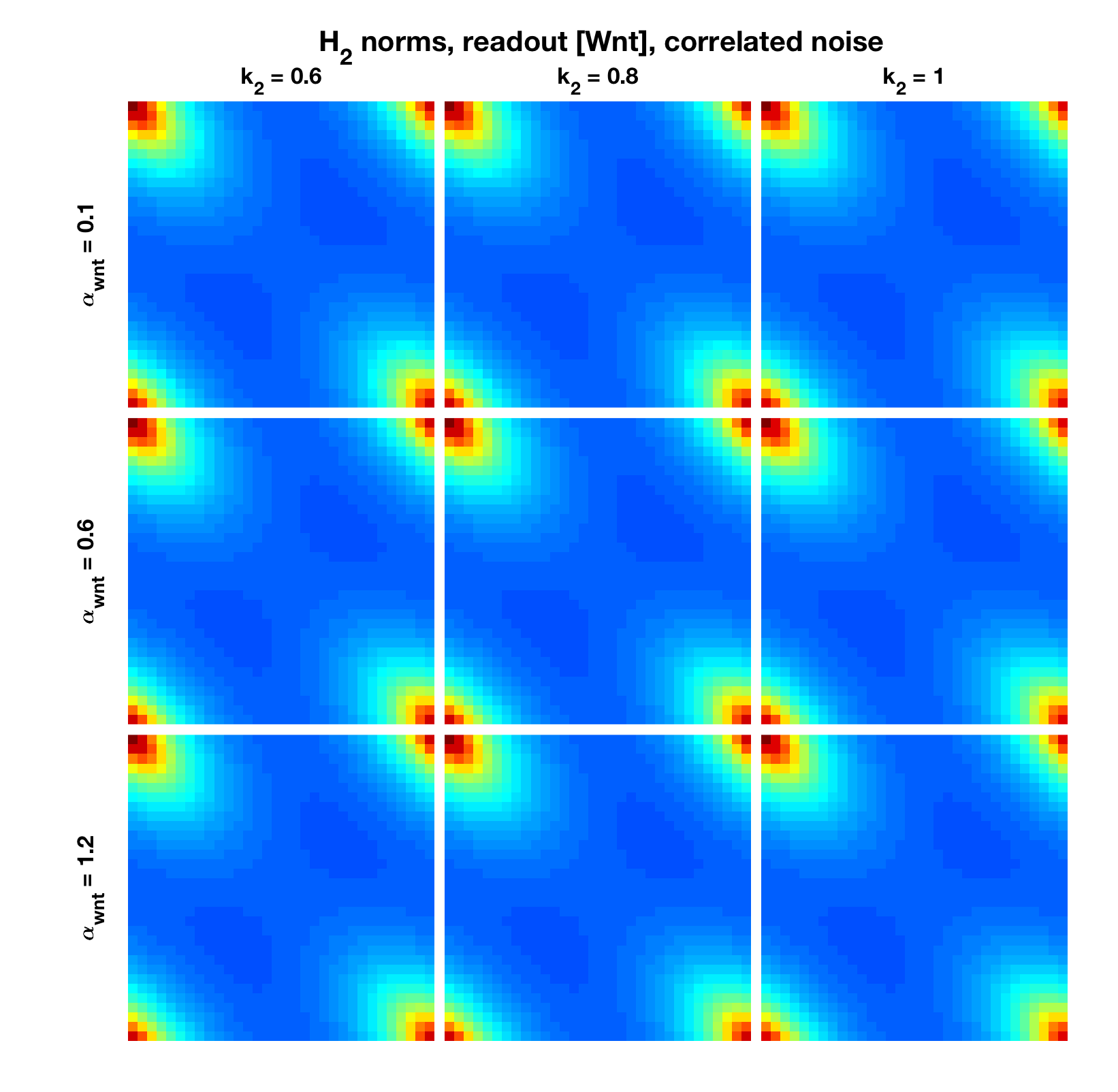}
	\caption{$\mathcal{H}_2$ norms with readout [Wnt] for 100\% correlated white noise inputs to \textit{sox9}, \textit{bmp}, and \textit{wnt}.  Images are normalized to the same scale as Figure \ref{fig:supp_onimaru_h2_wnt_uncorr} (min. 0, max. 2.28).  Other parameters are from Table \ref{tab:supp_onimaru_params}.}
	\label{fig:supp_onimaru_h2_wnt_corr}
\end{figure}

\clearpage
\printbibliography

\end{document}